\newtheorem{theorem}{Theorem}[]
\newtheorem{lemma}[theorem]{Lemma}
\newtheorem{proposition}[theorem]{Proposition}
\newtheorem{conjecture}[theorem]{Conjecture}
\newtheorem{definition}[theorem]{Definition}
\newtheorem{remark}[theorem]{Remark}
\newtheorem{claim}[theorem]{Claim}
\newtheorem{question}[theorem]{Question}
\begin{document}


\title{When can an expander code correct $\Omega(n)$ errors in $O(n)$ time?\footnote{An extended abstract of this paper has been accepted by Random 2024.}}

\author{Kuan Cheng \footnote{Center on Frontiers of Computing Studies, Peking University, Beijing 100871, China. Email: ckkcdh@pku.edu.cn}
\and Minghui Ouyang\footnote{School of Mathematical Sciences, Peking University, Beijing 100871, China. Email: ouyangminghui1998@gmail.com}
\and Chong Shangguan\footnote{Research Center for Mathematics and Interdisciplinary Sciences, Shandong University, Qingdao 266237, China, and Frontiers Science Center for Nonlinear Expectations, Ministry of Education, Qingdao 266237, China. Email: theoreming@163.com}
\and Yuanting Shen\footnote{Research Center for Mathematics and Interdisciplinary Sciences, Shandong University, Qingdao 266237, China. Email: shenyting121@163.com}
}

\date{}
\maketitle

\begin{abstract}
Tanner codes are graph-based linear codes whose parity-check matrices can be characterized by a bipartite graph $G$ together with a linear inner code $C_0$. Expander codes are Tanner codes whose defining bipartite graph $G$ has good expansion property. This paper is motivated by the following natural and fundamental problem in decoding expander codes:

What are the sufficient and necessary conditions that $\delta$ and $d_0$ must satisfy, so that \textit{every} bipartite expander $G$ with vertex expansion ratio $\delta$ and \textit{every} linear inner code $C_0$ with minimum distance $d_0$ together define an expander code that corrects $\Omega(n)$ errors in $O(n)$ time?

For $C_0$ being the parity-check code, the landmark work of Sipser and Spielman (IEEE-TIT'96) showed that $\delta>3/4$ is sufficient; later Viderman (ACM-TOCT'13) improved this to $\delta>2/3-\Omega(1)$ and he also showed that $\delta>1/2$ is necessary. For general linear code $C_0$, the previously best-known result of Dowling and Gao (IEEE-TIT'18) showed that $d_0=\Omega(c\delta^{-2})$ is sufficient, where $c$ is the left-degree of $G$.

In this paper, we give a near-optimal solution to the above question for general $C_0$ by showing that $\delta d_0>3$ is sufficient and $\delta d_0>1$ is necessary, thereby also significantly improving Dowling-Gao's result. We present two novel algorithms for decoding expander codes, where the first algorithm is deterministic, and the second one is randomized and has a larger decoding radius.
\end{abstract}

\noindent\textbf{Keywords.} Tanner codes; expander codes; expander graph; linear-time decoding; asymptotically good codes

\section{Introduction}

Graph-based codes are an important class of error-correcting codes that have received significant attention from both academia and industry. They have a long history in coding theory, dating back to Gallager's \cite{gallager1962low} celebrated \textit{low-density parity-check codes} (LDPC codes for short). LDPC codes are a class of linear codes whose parity-check matrices can be characterized by low-degree (sparse) bipartite graphs, called \textit{factor graphs}. Gallager analyzed the rate and distance of LDPC codes, showing that with high probability, randomly chosen factor graphs give rise to error-correcting codes attaining the Gilbert-Varshamov bound. He also presented an iterative algorithm to decode these codes from errors caused by a binary symmetric channel. Since the 1990s, LDPC codes have received increased attention due to their practical and theoretical performance (see \cite{chung2001design,dimakis2012ldpc,luby2001efficient,mosheiff2020listdecoding,richardson2001design,richardson2001capacity} for a few examples and \cite{guruswami2006iterative} for a survey).

As a generalization of the LDPC codes, Tanner \cite{tanner1981recursive} introduced the so-called \textit{Tanner codes}, as formally defined below. Let $c,d,n$ be positive integers and $L:=[n]$, where $[n]=\{1,\ldots,n\}$. Given a \textit{$(c,d)$-regular} bipartite graph $G$ with bipartition $V(G)=L\cup R$ and a $[d,k_0,d_0]$-linear code $C_0$\footnote{The reader is referred to \cref{sec:notation} for basic definitions on graphs and codes.}, the \textit{Tanner code} $T(G,C_0)\subseteq\mathbb{F}_2^n$ is the collection of all binary vectors $x\in\mathbb{F}_2^n$ with the following property: for every vertex $u\in R$, $x_{N(u)}$ is a codeword of the inner code $C_0$, where $N(u)\subseteq L$ is the set of neighbors of $u$ and $x_{N(u)}=(x_v:v\in N(u))\in\mathbb{F}_2^d$ denotes the length-$d$ subvector of $x$ with coordinates restricted to $N(u)$; in other words,
\begin{align*}
    T(G,C_0):=\{x\in\mathbb{F}_2^n:x_{N(u)}\in C_0 \text{ for every }u\in R\}.
\end{align*}

\textit{Expander codes} are Tanner codes whose defining bipartite graphs have good expansion properties, namely, they are \textit{bipartite expanders}. To be precise, for real numbers $\alpha,\delta\in(0,1]$, a $(c,d)$-regular bipartite graph $G$ with bipartition $V(G)=L\cup R$ with $L=[n]$ is called a \textit{$(c,d,\alpha,\delta)$-bipartite expander} if for each subset $S\subseteq L$ with $|S|\le\alpha n$, $S$ has at least $\delta c|S|$ neighbors in $R$, i.e.,
\begin{align*}
    |N(S)|:=|\cup_{v\in S}N(v)|\ge \delta c|S|.
\end{align*}
As each $S\subseteq L$ can have at most $c|S|$ neighbors in $R$, being a $(c,d,\alpha,\delta)$-bipartite expander means that every bounded size subset in $L$ has as many neighbors in $R$ as possible, up to a constant factor.

Sipser and Spielman \cite{sipser1996expander} studied the Tanner code $T(G,C_0)$ with $G$ being a bipartite expander and $C_0$ being a parity-check code. For simplicity, let $Par=\{(x_1,\ldots,x_d):\sum_{i=1}^d x_i=0\}$ denote the parity-check code in $\mathbb{F}_2^d$. They remarkably showed that the expansion property of $G$ can be used to analyze the minimum distance and the decoding complexity of $T(G,Par)$. Roughly speaking, they showed that for every bipartite expander $G$ with sufficiently large \textit{expansion ratio} $\delta>1/2$, $T(G,Par)$ has minimum distance at least $\alpha n$, which further implies that $T(G,Par)$ defines a class of \textit{asymptotically good} codes. More surprisingly, they showed that if the expansion ratio is even larger, say $\delta>3/4$, then for every such $G$, $T(G,Par)$ admits a linear-time decoding algorithm that corrects a linear number of errors in the adversarial noise model. Soon after, Spielman \cite{spielman1996linear} showed that expander codes can be used to construct asymptotically good codes that can be encoded and decoded both in linear time.

Besides the construction based on vertex expansion, \cite{spielman1996linear} also provides a construction based on spectral expansion.
This construction again inherits the general structure of Tanner code, i.e., it combines of an underlying bipartite graph and an inner code. The main difference is that the underlying graph is an edge-vertex incidence graph of a (non-bipartite) spectral expander.
Spectral expander codes also have linear time encoding and decoding, and their (rate \& distance) parameters are different from those of vertex expander codes.
In this paper, we mainly focus on vertex expander codes. The reader is referred to \cite{ben2008combinatorial,kaufman2018construction,lubotzky1988ramanujan,margulis1973explicit,morgenstern1994existence,reingold2000entropy,skachek2008minimum,DBLP:journals/tit/Zemor01,zemor2001expander} for more details on spectral expanders.

Given the strong performance of expander codes, they have been of particular interest in both coding theory and theoretical computer science, and have been studied extensively throughout the years. For example, \cite{DBLP:conf/stoc/GuruswamiI02,DBLP:journals/tit/RothS06} utilized expander codes to obtain near MDS codes with linear-time decoding. A line of research \cite{arora2012message, chen2023improved, Dowling2018fast,feldman2007LP, DBLP:journals/tit/Ron-ZewiWZ21, viderman2013linear,DBLP:journals/ipl/Viderman13,DBLP:journals/tit/Zemor01} improved the distance analysis and decoding algorithm for expander codes in various settings. Very recently, a sequence of works applied expander codes on quantum LDPC and quantum Tanner code construction, finally achieving asymptotically good constructions and linear-time decoding \cite{DBLP:journals/tit/BreuckmannE21,breuckmann2021quantum, DBLP:conf/stoc/DinurHLV23,DBLP:journals/corr/abs-2004-07935,DBLP:conf/stoc/GuPT23,DBLP:conf/stoc/HastingsHO21,DBLP:conf/stoc/KaufmanT21,DBLP:conf/focs/LeverrierTZ15,DBLP:conf/focs/LeverrierZ22, DBLP:journals/tit/LeverrierZ23, DBLP:conf/soda/LeverrierZ23,  lin2022good,DBLP:conf/stoc/PanteleevK22,  DBLP:journals/tit/PanteleevK22}.

Given the discussion above, it is natural to suspect that the expansion ratio $\delta$ plays a prominent role in analyzing the properties of $T(G,Par)$. More precisely, one can formalize the following question. Note that throughout we always assume that $c,d,\alpha,\delta$ are constants while $n$ tends to infinity.

\begin{question}\label{que-1}
    What is the minimum $\delta>0$ such that every $(c,d,\alpha,\delta)$-bipartite expander $G$ with $V(G)=L\cup R$ and $|L|=n$ defines an expander code $T(G,Par)\subseteq\mathbb{F}_2^n$ that corrects $\Omega_{c,d,\alpha,\delta}(n)$ errors in $O_{c,d,\alpha,\delta}(n)$ time?
\end{question}

This question has already attracted considerable attention. Sipser and Spielman \cite{sipser1996expander} used the bit-flipping algorithm (developed on the original algorithm of Gallager \cite{gallager1962low}) to show that $\delta>3/4$ is sufficient to correct $(2\delta-1)\alpha n$ errors in $O(n)$ time. Using linear programming decoding, Feldman, Malkin, Servedio, Stein and Wainwright \cite{feldman2007LP} showed that $\delta>\frac{2}{3}+\frac{1}{3c}$ sufficient to correct $\frac{3\delta-2}{2\delta-1}\alpha\cdot n$ errors, while at the cost of a ${\rm poly}(n)$ decoding time. Viderman \cite{viderman2013linear} introduced the ``Find Erasures and Decode'' algorithm to show that $\delta>\frac{2}{3}-\frac{1}{6c}$ is sufficient to correct $\Omega(n)$ errors in $O(n)$ time.
Moreover, he also shows that there exists a $(c,d,\alpha,1/2)$-bipartite expander $G$ such that $T(G,Par)$ only has minimum distance two, and therefore cannot correct even one error. Viderman's impossibility result implies that $\delta>1/2$ is necessary for the assertion of \cref{que-1} holding for \textit{every} $(c,d,\alpha,\delta)$-bipartite expander.

The above results only consider the case where the inner code $C_0$ is a parity-check code. Therefore, it is tempting to think about whether one can benefit from a stronger inner code $C_0$. Let us call a code \textit{good} if it can correct $\Omega(n)$ errors in $O(n)$ time. Chilappagari, Nguyen, Vasic and Marcellin \cite{Chilappagari2010trapping} showed that if $G$ has expansion radio $\delta>1/2$ and $C_0$ has minimum distance $d(C_0)\ge\max\{\frac{2}{2\delta-1}-3,2\}$, then every such Tanner code $T(G,C_0)$ is good. The above result implies that for $\epsilon\rightarrow 0$ and $\delta=1/2+\epsilon$, $d(C_0)=\Omega(\epsilon^{-1})$ is sufficient to make every Tanner code $T(G,C_0)$ good. Very recently, Dowling and Gao \cite{Dowling2018fast} significantly relaxed the requirement on $\delta$ by showing that for every $\delta>0$,
\begin{align}\label{eq:Dowling-Gao}
   d(C_0)\ge\Omega(c\delta^{-2})
\end{align}
is sufficient\footnote{More precisely, $d(C_0)\ge2t+c(t-1)^2-1$ with $t>\frac{1}{\delta}$.} to make every Tanner code $T(G,C_0)$ good, and be able to correct $\alpha n$ errors. In particular, their result implies that, as long as the minimum distance of $C_0$ is large enough, any tiny positive expansion ratio is sufficient to construct a good Tanner code.

Putting everything together, it is interesting to understand how the expansion ratio $\delta$ of $G$ and the minimum distance $d_0$ of $C_0$ affect the goodness of the Tanner code. We have the following generalized version of \cref{que-1}.

\begin{question}\label{que-2}
    What are the sufficient and necessary conditions that $\delta$ and $d_0$ must satisfy, so that every $(c,d,\alpha,\delta)$-bipartite expander $G$ with $V(G)=L\cup R, |L|=n$, and every inner linear code $C_0\subseteq\mathbb{F}_2^d$ with $d(C_0)\ge d_0$, together define an expander code $T(G,C_0)\subseteq\mathbb{F}_2^n$ that corrects $\Omega_{c,d,\alpha,\delta}(n)$ errors in $O_{c,d,\alpha,\delta}(n)$ time?
\end{question}

The main purpose of this paper is to provide a near-optimal solution to the above question, as presented in the next subsection. On the negative side, we show that when $d_0 \delta \le 1$, there exists an extension of Viderman's construction, yielding expander codes of constant distance (see \cref{prop:lowerbd} below). Therefore, $d_0 \delta > 1$ is a necessary condition for a general expander code $T(G,C_0)$ to be considered good (compared to the $\delta > \frac{1}{2}$ condition in the case of $T(G,Par)$). On the positive side, we show that $d_0 \delta > 3$ is sufficient to make \textit{every} expander code good (see \cref{thm:main} below for details). Our result only loses a multiplicity by three compared to the above necessary result.

\subsection{Main results}

\paragraph{Deterministic decoding of expander codes.} Our main result, which significantly improves on \eqref{eq:Dowling-Gao}, is presented as follows.

\begin{theorem}\label{thm:main}
    Let $G$ be a $(c,d,\alpha,\delta)$-bipartite expander and $C_0$ be a $[d,k_0,d_0]$-linear code, where $c,d,\alpha,\delta,d_0,k_0$ are positive constants. If $\delta d_0>3$, then there exists a linear-time decoding algorithm for the Tanner code $T(G,C_0)$ that can correct $\gamma n$ errors, where $\gamma=\frac{2\alpha}{d_0(1+0.5c\delta)}$.
\end{theorem}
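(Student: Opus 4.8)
The plan is to adapt Viderman's ``Find Erasures and Decode'' strategy to a general inner code $C_0$. Write the received word as $y=x+e$ with $x\in T(G,C_0)$ and error set $E:=\mathrm{supp}(e)\subseteq L$, $|E|\le\gamma n$; the decoder sees $y$ but not $E$, and we call a check $u\in R$ \emph{unsatisfied} if $y_{N(u)}\notin C_0$. The decoder runs in two stages: it first produces a set $F\subseteq L$ with $E\subseteq F$ and $|F|\le\alpha n$ (``find erasures''), and then recovers $x$ from $y$ by treating the coordinates in $F$ as erasures (``decode'').

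Both stages rest on one counting inequality: for a nonempty $S\subseteq L$ with $|S|\le\alpha n$, comparing the $c|S|$ edges leaving $S$ with $|N(S)|\ge\delta c|S|$ gives that the number of checks $u$ with $1\le|N(u)\cap S|\le d_0-1$ is at least $c|S|\cdot\frac{\delta d_0-1}{d_0-1}$, which is positive exactly because $\delta d_0>1$. This settles the second stage. Given $F$ with $E\subseteq F$ and $|F|\le\alpha n$, repeatedly pick a check $u$ with $1\le|N(u)\cap F|\le d_0-1$ (one exists while $F\neq\emptyset$, by the inequality with $S=F$); the $\ge d-d_0+1$ coordinates of $x_{N(u)}$ lying outside $F$ already determine the codeword $x_{N(u)}\in C_0$ uniquely, since two $C_0$-codewords agreeing on more than $d-d_0$ coordinates coincide; this recovers $x$ on $N(u)\cap F$, after which we delete $N(u)\cap F$ from $F$ and repeat until $F=\emptyset$. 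As $F$ only shrinks, $|F|\le\alpha n$ is preserved throughout, so the process always finds a usable check and terminates; with bucket queues over the constant-size neighbourhoods each step is $O(1)$ and the stage costs $O(n)$. (The same inequality applied to the support of a nonzero codeword yields $d(T(G,C_0))>\alpha n$, so the output is the unique codeword consistent with $y$ off $F$.)

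For the first stage, initialize $F=\emptyset$ and maintain the set $U(F)$ of unsatisfied checks having no neighbour in $F$; as long as some $v\notin F$ has $|N(v)\cap U(F)|>\theta$ for a threshold $\theta=\theta(c,\delta,d_0)$ to be chosen, add such a $v$ to $F$; halt when no such $v$ exists. The size bound is a potential argument: adding $v$ permanently removes its $>\theta$ witnessing checks from $U(F)$, and $|U(\emptyset)|\le|N(E)|\le c|E|$ because every unsatisfied check meets $E$; hence at most $c|E|/\theta$ vertices are ever added, so $|F\cup E|\le\alpha n$ at all times once $\gamma$ is small enough, which is where the precise constant in $\gamma=\frac{2\alpha}{d_0(1+0.5c\delta)}$ is calibrated against the erasure-decoding room and the threshold. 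For correctness, argue by contradiction: if the stage halts with $B:=E\setminus F\neq\emptyset$, apply the counting inequality to $B$ to get many checks with between $1$ and $d_0-1$ neighbours in $B$, discard the comparatively few that also meet $F$ (bounding them via the size bound on $F$) so that the remainder lie in $U(F)$ and still see $B$, and average over $B$ to exhibit a vertex $v\in B$ with $|N(v)\cap U(F)|>\theta$ --- contradicting termination. All bookkeeping is again over constant-degree neighbourhoods with bucket queues, so this stage runs in $O(n)$ as well, and composing the two stages proves the theorem.

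The crux is the choice of $\theta$. It must be small enough that the averaging step in the correctness argument survives the loss from checks shared between $F$ and the missed error set $B$ --- that shared-checks loss is precisely what degrades the clean threshold $\delta d_0>1$ (which alone suffices for erasure decoding and the distance bound) to the hypothesis $\delta d_0>3$ --- yet large enough that $|F|\le c|E|/\theta$ leaves room inside $\alpha n$ to keep $\gamma$ a genuine positive constant. I expect this two-sided calibration, and in particular the accounting of the checks in $N(B)\cap N(F)$, to be the main technical obstacle; everything else is routine double counting and linear-time data structures.
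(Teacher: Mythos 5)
Your proposal takes a genuinely different route from the paper's: you adapt Viderman's find-erasures-and-decode strategy, whereas the paper uses an iterated-flip algorithm (EasyFlip/DeepFlip/HardSearch) descending from Sipser--Spielman and Dowling--Gao, in which each unsatisfied check sends exactly one flip message and a brute-force search over flip thresholds $(m_1,\dots,m_s)$ is guided by the observable count of unsatisfied constraints. Notably, the paper explicitly lists ``utilize the new idea of [Viderman] to essentially improve our bound $\delta d_0>3$'' as an \emph{open} research direction, which is already a signal that this route was not found to go through easily.

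Your second stage (erasure decoding once $E\subseteq F$ and $|F|\le\alpha n$) is correct and standard. The genuine gap is in the first stage, and it stems from tracking the wrong set. You maintain $U(F)=\{\text{unsatisfied checks with no neighbour in }F\}$, which \emph{shrinks} as $F$ grows, and in the correctness argument you apply the counting inequality to $B=E\setminus F$ to get at least $\frac{\delta d_0-1}{d_0-1}c|B|$ checks that see $B$ lightly, discard the at most $c|F|$ of them that also meet $F$, and average over $B$. But this gives an average of $\frac{\delta d_0-1}{d_0-1}c-\frac{c|F|}{|B|}$, and there is no lower bound on $|B|$: at termination one can have $|B|=1$ while $|F|=\Theta(|E|)$, making the bound vacuous. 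No calibration of $\theta$ or $\gamma$ repairs this, because the loss $c|F|$ scales with $|E|$ while the gain scales with $|B|$. Viderman's actual argument maintains a \emph{growing} set $R_i=\{\text{unsatisfied checks}\}\cup N(L_i)$, so that a light $B$-check touching $F$ is not ``lost''---it is in $R_i$ automatically---and then \emph{every} check $u$ with $1\le|N(u)\cap B|\le d_0-1$ lies in $R_\infty$ (either it avoids $L_\infty$ and is therefore unsatisfied, or it meets $L_\infty$ and is in $N(L_\infty)$), so the averaging closes with no subtraction term. Even after switching to the growing formulation you would still need a separate expansion argument for the size bound $|L_\infty|\le\alpha n$, which you do not supply. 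You correctly flag ``the accounting of the checks in $N(B)\cap N(F)$'' as the main obstacle, but the proposal as written does not surmount it, so the theorem is not proven.
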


The above theorem shows that $\delta d_0>3$ is sufficient to make every Tanner code $T(G,C_0)$ good. On the other hand, the next proposition shows that for every $d_0\ge 2$, $\delta d_0>1$ is necessary.

\begin{proposition}\label{prop:lowerbd}
    For every $d,d_0\ge2$ and $n\ge 10d_0$, there exist constants $0<\alpha<1,c\ge3$ and a $(c,d,0.9\alpha,\frac{1}{d_0})$-bipartite expander $G$ with $V(G)=L\cup R$ and $|L|=n$ such that for every $[d,k_0,d_0]$-linear code $C_0$, the Tanner code $T(G,C_0)$ has minimum Hamming distance at most $d_0$.
\end{proposition}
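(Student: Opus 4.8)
We may assume $d_0\le d$, since otherwise no $[d,k_0,d_0]$-linear code exists and the statement is vacuous. Everything rests on the following observation. Suppose $x\ne 0$ lies in some $T(G,C_0)$ with $\mathrm{wt}(x)\le d_0$, and write $S=\mathrm{supp}(x)$. For any right vertex $u$ with $N(u)\cap S\ne\emptyset$, the word $x_{N(u)}$ is nonzero of weight $|N(u)\cap S|\le|S|\le d_0=d(C_0)$, so in fact $|N(u)\cap S|=d_0=|S|$; hence $|S|=d_0$, $x=\mathbf 1_S$, $S\subseteq N(u)$, and $x_{N(u)}=\mathbf 1_{\pi_u(S)}$, where $\pi_u(S)\subseteq[d]$ records the positions occupied by $S$ inside $N(u)$ (in the inherited order of $[n]$). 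Thus it suffices to build $G$ so that for every $[d,k_0,d_0]$-code $C_0$ there is a $d_0$-set $S\subseteq L$ with the ``all or nothing'' property (every right vertex sees all of $S$ or none of it) whose common position pattern is the support of a minimum-weight codeword of $C_0$; as this support can be any $P\in\binom{[d]}{d_0}$ (take $C_0=\{0,\mathbf 1_P\}$), we will install one such set for each $P$.

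\emph{Construction.} Choose $c$ to be a multiple of $d$ with $c\ge3$ (this makes $|R|=cn/d$ an integer and makes $\tfrac{c}{(c-1)d_0}<1$, both used below), and let $\alpha>0$ be a small constant. For each $P\in\binom{[d]}{d_0}$ form a \emph{$P$-gadget}: $d_0$ left vertices $S_P$ and $c$ right vertices $W_P$ with all $cd_0$ edges between them present, so $N(v)=W_P$ for $v\in S_P$ and $|N(S_P)|=c=\tfrac1{d_0}c|S_P|$; in addition give each $w\in W_P$ a \emph{private} set of $d-d_0$ further left-neighbours, and choose all the vertex labels so that inside every $N(w)$ $(w\in W_P)$ the set $S_P$ occupies exactly the positions $P$. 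Place the $\binom{d}{d_0}$ gadgets on pairwise disjoint vertex sets, and put all the remaining left vertices — including the $O_{c,d}(1)$ private ``auxiliary'' vertices, each of which keeps $c-1$ of its $c$ edges here — together with all remaining right vertices into a background $(c,d)$-biregular bipartite graph $H$ whose vertex-expansion ratio is close to $1$ for subsets of size $\le\alpha n$. A uniformly random such graph has this property with high probability, and the $O_{c,d}(1)$ local degree corrections do not affect it; this works provided $n$ exceeds a constant depending only on $c,d$.

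\emph{Every $T(G,C_0)$ has a short codeword.} Fix $C_0$ and pick a minimum-weight codeword; over $\mathbb F_2$ it equals $\mathbf 1_{P^*}$ for some $P^*\in\binom{[d]}{d_0}$. Let $x=\mathbf 1_{S_{P^*}}$, a weight-$d_0$ vector. For $w\in W_{P^*}$ we have $x_{N(w)}=\mathbf 1_{P^*}\in C_0$; and every other right vertex $u$ of $G$ satisfies $N(u)\cap S_{P^*}=\emptyset$ — the $S_P$ are pairwise disjoint, and no vertex outside $W_{P^*}$ is adjacent to $S_{P^*}$ because each vertex of $S_{P^*}$ uses all $c$ of its edges inside $W_{P^*}$ — so $x_{N(u)}=0\in C_0$. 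Hence $x\in T(G,C_0)\setminus\{0\}$, and $d(T(G,C_0))\le d_0$.

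\emph{Expansion, and the main obstacle.} Let $S\subseteq L$ with $|S|\le0.9\alpha n$, and split $S=\big(\bigsqcup_P(S\cap S_P)\big)\sqcup S'$, with $S'=S\setminus\bigcup_PS_P\subseteq V(H)$. The $W_P$ are pairwise disjoint and disjoint from the right side of $H$; each nonempty $S\cap S_P$ has $N(S\cap S_P)=W_P$, giving $c$ neighbours against $\le d_0$ vertices (ratio exactly $\tfrac1{d_0}$, with no slack), while $|N_H(S')|\ge\tfrac{c}{d_0}|S'|$ because $H$ is a near-perfect expander and $c\ge3,\ d_0\ge2$. Since these neighbour sets lie in disjoint parts of $R$,
\[
|N(S)|\ \ge\ c\cdot\#\{P:S\cap S_P\ne\emptyset\}\ +\ \tfrac{c}{d_0}|S'|\ \ge\ \tfrac{c}{d_0}\Big(\sum_P|S\cap S_P|+|S'|\Big)\ =\ \tfrac{c}{d_0}|S|,
\]
using $|S\cap S_P|\le d_0$; so $G$ is a $(c,d,0.9\alpha,\tfrac1{d_0})$-bipartite expander. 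The delicate point — and the reason the gadgets are made vertex-disjoint with their auxiliary edges scattered thinly through a near-perfect expander — is precisely that each $S_P$ already meets the expansion bound with equality: the construction must guarantee that no set straddling a gadget and the bulk, or several gadgets at once, dips below the threshold $\tfrac1{d_0}$, and the disjointness of the $W_P$'s together with the near-$1$ expansion of $H$ is exactly what provides the needed room away from the gadgets.
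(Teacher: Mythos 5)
Your approach is correct in outline, and it is genuinely different from — and in one respect more careful than — the paper's construction. The paper builds a \emph{single} gadget: it plants the $d_0$ special left vertices $L_2=\{n-d_0+1,\ldots,n\}$ in the last $d_0$ positions of each of $c$ special constraints $R_2$, combines this with a $(c-1,d,\alpha,3/4)$-expander on the remaining left vertices, and exhibits the codeword $y=0^{n-d_0}1^{d_0}$, concluding after ``assuming WLOG that $0^{d-d_0}1^{d_0}\in C_0$''. Since the quantifier order in the proposition is $\exists G\ \forall C_0$, and one fixed $G$ with a fixed neighbourhood ordering forces the $L_2$-block to sit at positions $\{d-d_0+1,\ldots,d\}$ in every $u_i\in R_2$, that WLOG silently restricts $C_0$ to codes containing that particular minimum-weight support pattern (and indeed, in the paper's $G$ one can check that $0^{n-d_0}1^{d_0}$ is the \emph{only} weight-$d_0$ candidate). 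Your $\binom{d}{d_0}$ gadgets — one $P$-gadget per possible support pattern $P$, with $S_P$ placed at positions $P$ inside every $N(w),\ w\in W_P$ — dispose of this issue cleanly: given any $C_0$, you just select the gadget matching the support of a minimum-weight codeword of $C_0$. Both proofs share the same skeleton (a planted all-or-nothing gadget attached to the corners of a background graph with much stronger-than-$1/d_0$ expansion), and your tightness discussion of $S_P$ meeting the bound with equality is exactly the right intuition.

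That said, your write-up leaves more to the reader than the paper does on the routine side. In particular: (1) the existence of the background graph $H$ and its expansion property is asserted rather than cited — the paper quotes an explicit random-graph expansion lemma (its Proposition 8, derived from Sipser--Spielman) for this — and your $H$ is not exactly biregular because the $O_{c,d}(1)$ auxiliary vertices have left-degree $c-1$ in $H$, so you really need a version of the expansion lemma for slightly irregular graphs, or an argument that deleting $O(1)$ edges from a near-lossless expander costs nothing for the bound $|N_H(S')|\ge \tfrac{c}{d_0}|S'|$; (2) the degree bookkeeping needed to make $H$'s left and right degree sequences compatible with a $(c,d)$ target (you fix $c\equiv 0\pmod d$ for $|R|$, but the auxiliary-vertex corrections still have to balance) is waved away; and (3) your one-sentence justification of the expansion across gadget boundaries is correct but compressed — it would be clearer to write the displayed estimate as you did and then, separately, verify $(1-\epsilon)(c-1)\ge\frac{c}{d_0}$ for the background part under $c\ge3,\ d_0\ge2$, and small enough $\epsilon$. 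None of this is a conceptual obstruction, and your argument that $\mathbf 1_{S_{P^*}}\in T(G,C_0)$ is airtight, so the proof is sound once those constructions are filled in.
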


\cref{thm:main} and \cref{prop:lowerbd} together show that our requirement $\delta d_0=\Omega(1)$ is in fact almost optimal for \cref{que-2}. Moreover, we have the following conjecture on the fundamental trade-off between $\delta$ and $d_0$.

\begin{conjecture}\label{con:main}
If $\delta d_0>1$, then for every $(c,d,\alpha,\delta)$-bipartite expander $G$ and every inner code $C_0\subseteq\mathbb{F}_2^d$ with $d(C_0)\ge d_0$, the expander code $T(G,C_0)\subseteq\mathbb{F}_2^n$ can correct $\Omega_{c,d,\alpha,\delta}(n)$ errors in $O_{c,d,\alpha,\delta}(n)$ time.
\end{conjecture}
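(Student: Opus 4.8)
The plan is to follow the ``Find Erasures and Decode'' (FED) paradigm that underlies \cref{thm:main} and Viderman's work, and to push its analysis all the way down to the conjectured threshold. Recall that FED decoding of $T(G,C_0)$ on a received word $y=x+e$ proceeds in two stages: (i) from the pattern of \emph{unsatisfied} checks (those $u\in R$ with $y_{N(u)}\notin C_0$), grow a set $L'\subseteq L$ that is guaranteed to contain $\mathrm{supp}(e)$ while satisfying $|L'|\le\alpha n$; (ii) treat the coordinates in $L'$ as erasures and recover them by iterative local decoding, using that the inner code $C_0$ corrects up to $d_0-1$ erasures at any single check. The whole scheme runs in linear time as long as both stages do, so the entire burden is combinatorial: under the hypothesis $\delta d_0>1$ one must (a) show that any bounded set of erasures can be peeled via the inner code, and (b) design the growth rule in stage (i) so that $L'$ provably never exceeds $\alpha n$ yet always captures every corrupted coordinate.

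For (a), the natural statement to establish is a \emph{peeling lemma}: if $S\subseteq L$ with $|S|\le\alpha n$, then $S$ can be ordered $e_1,e_2,\dots$ so that for each $i$ there is a check $u_i\in R$ with $e_i\in N(u_i)$ and $|N(u_i)\cap\{e_i,e_{i+1},\dots\}|\le d_0-1$; equivalently, every nonempty $S'\subseteq S$ admits a check seeing between $1$ and $d_0-1$ of its vertices. The heuristic that this holds exactly when $\delta d_0>1$ is a one-line averaging argument: the bipartite graph induced on $S'\cup N(S')$ has $c|S'|$ edges and, by expansion, at least $\delta c|S'|$ vertices in $N(S')$, so the average degree of a check into $S'$ is at most $1/\delta<d_0$, whence some check has degree at most $d_0-1$ into $S'$; one then wants to iterate. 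Turning that average-degree bound into a genuine elimination ordering without sacrificing a constant factor in the threshold is the first technical hurdle, and is presumably one place where the current argument loses slack in going from $\delta d_0>1$ to $\delta d_0>3$.

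The dominant obstacle, however, is stage (b): controlling the size of the flagged set $L'$. Every known FED-style analysis starts from the coordinates incident to many unsatisfied checks and argues that, as $L'$ is enlarged, expansion forces the number of unsatisfied checks to drop, so the process halts early; quantifying this needs a margin in $\delta$, and that margin is precisely what separates $\delta d_0>3$ from $\delta d_0>1$. Closing it appears to require a genuinely sharper argument, such as an amortized potential-function analysis in which each coordinate added to $L'$ is charged against the decrease of a carefully weighted count of unsatisfied checks, or a two-sided ``mixing'' estimate in place of the one-sided vertex-expansion bound, or a tightened version of the randomized flagging strategy behind this paper's second (randomized) algorithm. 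One should also keep in mind that the conjecture is at least as hard as the well-known open question of whether $\delta>1/2$ suffices for $T(G,\mathrm{Par})$: taking $C_0=\mathrm{Par}$ gives $d_0=2$, so $\delta d_0>1$ is exactly $\delta>1/2$, a regime no efficient decoding algorithm is currently known to reach. I therefore would not expect a short proof. A realistic intermediate target is to prove the conjecture for $d_0$ bounded away from $2$ — say $\delta d_0>1+\epsilon$ with $\epsilon$ depending only mildly on $c$ — by combining the peeling lemma above with a refined, amortized growth analysis of $L'$, and to isolate the $d_0=2$ endpoint as a separate, harder problem that inherits the full difficulty of the classical expander-code question.
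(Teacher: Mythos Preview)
The statement you are attempting is a \emph{conjecture}; the paper does not prove it and explicitly lists it as the main open problem (see the ``Future research directions'' subsection). The paper establishes only the weaker sufficient condition $\delta d_0>3$ (\cref{thm:main}) and the necessary condition $\delta d_0>1$ (\cref{prop:lowerbd}). So there is no proof in the paper to compare your attempt against, and what you have written is a research programme rather than a proof.

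Two specific comments on your plan. First, a factual correction: the FED paradigm does \emph{not} underlie \cref{thm:main}. The paper's algorithm is a flipping-based descendant of Sipser--Spielman and Dowling--Gao (the EasyFlip/DeepFlip/HardSearch hierarchy described in \cref{sec:idea}); the authors explicitly flag adapting Viderman's FED as a \emph{possible future} route to improving the $\delta d_0>3$ threshold, not as the method they use. Second, your peeling lemma for stage~(a) is not merely heuristic but actually correct as stated: for nonempty $S'$ with $|S'|\le\alpha n$, the average degree of a check in $N(S')$ toward $S'$ is $c|S'|/|N(S')|\le 1/\delta<d_0$, so some check in $N(S')$ has degree between $1$ and $d_0-1$ into $S'$, and iterating gives the elimination ordering. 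Thus erasure decoding under $\delta d_0>1$ is not where the difficulty lies.

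You have correctly located the genuine obstruction in stage~(b) --- bounding the flagged set $L'$ --- and correctly observed that the $d_0=2$ endpoint is exactly the long-open $\delta>1/2$ question for $T(G,\mathrm{Par})$. That observation alone should tell you that a proof along these lines is not currently available: your proposal names the right bottleneck but does not supply the missing idea, and neither does the paper.
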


\paragraph{Randomized decoding of expander codes.} Another important direction in the study of expander codes is to understand the maximum number of errors that can be corrected in a linear-time decoding algorithm. In a recent work, Chen, Cheng, Li, and Ouyang \cite{chen2023improved} obtained a quite satisfactory answer to this problem for $T(G,Par)$. They showed that for every $\delta>1/2$ and $(c,d,\alpha,\delta)$-bipartite expander $G$, $T(G,Par)$ has minimum distance at least $\frac{\alpha}{2(1-\delta)}\cdot n-O(1)$, and this is tight up to a $1-o(1)$ factor. Moreover, for $\delta>\frac{3}{4}$, they also gave a linear-time decoding algorithm which corrects $\frac{3\alpha}{16(1-\delta)}\cdot n$ errors. A similar problem for general expander codes $T(G,C_0)$ was studied by \cite{Dowling2018fast}.

Our decoding algorithm that proves \cref{thm:main} is deterministic and corrects $\gamma n$ errors in linear time. Our next result shows that one can correct more errors by using a randomized algorithm.

\begin{theorem}\label{thm:randDec}
    Let $G$ be a $(c,d,\alpha,\delta)$-bipartite expander and $C_0$ be a $[d,k_0,d_0]$-linear code, where $c,d,\alpha,\delta,d_0,k_0$ are positive constants. If $\delta d_0 > 3$, then there exists a linear-time randomized decoding algorithm for Tanner code $T(G,C_0)$ such that if the input has at most $\alpha n$ errors from a codeword, then with probability $1-\exp\left\{  -\Theta_{c, \delta, d_0}\left( n \right) \right\}$, the decoding algorithm can output the correct codeword.
\end{theorem}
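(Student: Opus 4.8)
The plan is to bootstrap from the deterministic decoder of \cref{thm:main}: that algorithm already corrects every error pattern of weight at most $\gamma n$, so the only regime requiring new ideas is $\gamma n < \mathrm{wt}(e) \le \alpha n$, and in this regime $\mathrm{wt}(e) = \Theta(n)$, which is exactly what makes a failure probability of the form $e^{-\Theta(n)}$ attainable via concentration. I would structure the randomized algorithm as an outer loop of $O(1)$ rounds, each round applying a cheap ($O(n)$-time) randomized step and then invoking the subroutines behind \cref{thm:main}, and argue that with probability $1-e^{-\Theta_{c,\delta,d_0}(n)}$ the error weight is driven below $\gamma n$ within these rounds, after which the deterministic guarantee finishes. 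Note that correctness --- recovering the true codeword $x$ rather than some other nearby codeword --- is automatic, because $\delta d_0 > 1$ forces $T(G,C_0)$ to have minimum distance greater than $\alpha n$: if a nonzero codeword had support $S$ with $|S| \le \alpha n$, every check in $N(S)$ would see a nonzero codeword of $C_0$, hence at least $d_0$ coordinates of $S$, so $c|S| = \sum_{u \in N(S)} |N(u)\cap S| \ge d_0 |N(S)| \ge \delta c d_0 |S| > c|S|$, a contradiction; the same inequality makes the erasure-decoding step below succeed for any erasure set of size at most $\alpha n$.

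The randomized step is the crux. My proposal is a randomized version of the ``Find Erasures'' phase. Deterministically, one flags a set $D \supseteq \mathrm{supp}(e)$ of suspicious coordinates from the syndrome and then erasure-decodes; this needs $|D| \le \alpha n$, which can only be guaranteed when $\mathrm{wt}(e) \le \gamma n$ because $D$ is a constant factor larger than $\mathrm{supp}(e)$. When $D$ would overshoot the erasure budget, I would instead keep only a random subsample of $D$, erasure-decode against that subsample to eliminate the errors it captures, and recurse on the residual (smaller) error pattern; alternatively one can realize the random step as a random perturbation of the received word on a random subset of $L$, or a random order in which the local corrections at the checks are applied, with the same effect. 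In each case I would exhibit a potential function $\Phi$ --- the error weight, or the syndrome weight $|\{u \in R : y_{N(u)} \notin C_0\}|$ --- and show that one combined deterministic$+$random step decreases $\mathbb{E}[\Phi]$ by a constant factor, with the drift constant depending only on $c,\delta,d_0$; a concentration bound (Chernoff, or a bounded-difference martingale inequality) then gives that a single round fails to achieve a constant-factor decrease with probability only $e^{-\Theta_{c,\delta,d_0}(n)}$, and since there are only $O(1)$ rounds the overall failure probability stays $e^{-\Theta(n)}$.

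The main obstacle is designing the random step so that three things hold at once: (i) the residual instance is still a genuine $(c,d,\alpha,\delta)$-expander-code instance to which \cref{thm:main} applies; (ii) the error weight provably drops by a definite constant factor with constant probability; and (iii) the failure event decomposes into $\Theta(n)$ nearly-independent local events, so that concentration yields $e^{-\Theta(n)}$ rather than merely $o(1)$. Requirements (i) and (iii) rule out the most naive randomizations --- e.g.\ erasing each coordinate independently keeps the error \emph{rate} unchanged while only spending erasure budget, so it cannot help --- which is why the random step must exploit the code structure rather than act obliviously. Concretely, the delicate part is proving that a random subsample of the flagged set still contains all but a vanishing fraction of the errors, \emph{uniformly} over all error patterns of weight $\le \alpha n$, so that a union bound against the per-pattern $e^{-\Theta(n)}$ failure probability survives, and bounding the residual error weight after each round tightly enough that a constant number of rounds suffices to reach $\gamma n$. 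Once there, a final call to \cref{thm:main} together with one linear-time erasure-decoding pass outputs $x$, and the total running time is $O(n)$ since every step and every round is linear and the number of rounds is constant.
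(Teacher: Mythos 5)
Your high-level plan (bootstrap to \cref{thm:main} once the weight falls below $\gamma n$; monitor progress via the syndrome; use concentration for the $e^{-\Theta(n)}$ guarantee; invoke the $\delta d_0 > 1 \Rightarrow d(T(G,C_0)) > \alpha n$ distance bound) matches the paper's outer structure. But the heart of the argument --- the randomized step --- is where your proposal both diverges from the paper and contains a genuine gap. You propose a Viderman-style ``Find Erasures and Decode'' primitive: flag a suspicious set $D \supseteq \mathrm{supp}(e)$, subsample it randomly, and erasure-decode against the subsample. Erasure decoding, however, is only sound when every non-erased coordinate is correct. A random subsample $D' \subsetneq D$ will, by design, omit a constant fraction of $\mathrm{supp}(e)$; those residual errors sit in the non-erased part and corrupt the very check equations the erasure decoder relies on, so the step does not ``eliminate the errors it captures'' --- it simply has no correctness guarantee. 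None of your listed alternatives (random perturbation, random order) resolve this, and you acknowledge the key design requirements (i)--(iii) are left unverified. This is not a cosmetic omission: the paper itself flags adapting Viderman's erasure approach to general inner codes as an open problem in its future-directions section.

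The paper instead randomizes the bit-flipping step, not the erasure step, and the crucial device is a vote-weighted sampling probability. After the voting phase that produces $S_m$ (variables receiving exactly $m$ flips), it samples each $v \in S_m$ into the flip set $P$ independently with probability $m/(2c)$. This choice is what makes the drift argument go through without having to identify a ``good'' $m$: writing $M = A \cup B$ for the set of voting constraints, $\mathbb{E}|P| = \sum_m \tfrac{m}{2c}|S_m| = \tfrac{|M|}{2c}$, while the expected number of sampled \emph{errors} is at least $\sum_m \tfrac{m}{2c}\beta_m|S_m| = \tfrac{|A|}{2c}$, and the inequality $|A| \ge (\tfrac12 + \tfrac{\epsilon_0\delta^2}{2})|M|$ from the deterministic analysis (driven by $\delta d_0 > 3$) shows more than half of $P$ is errors in expectation. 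Because each variable is sampled independently, a direct Hoeffding bound over $\Theta(|M|) = \Theta(|F|) = \Theta(n)$ independent indicators gives the $e^{-\Theta(n)}$ concentration you want, with no need for the per-pattern union bound you worried about --- the probability is over the algorithm's coins for a \emph{fixed} input, and only $O(1)$ rounds are union-bounded. Your concern (i) is also not an obstacle: the graph never changes, and the only expansion-related invariant one must preserve is $|F| \le \alpha n$, which is monitored through the unsatisfied-constraint count $|U|$ and the inequality $|U| \ge \tfrac{\delta d_0-1}{d_0-1}c|F|$. In short, the intended architecture is right, but the load-bearing idea --- sample from the flip candidates with probability proportional to the vote count, rather than subsample an erasure set --- is absent from your proposal, and the replacement you offer does not work as stated.
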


\subsection{Notations and definitions}\label{sec:notation}

A graph is a pair $G = (V,E)$, where $V$ is a set whose elements are called vertices and $E$ is a set of 2-subsets of $V$, whose elements are called edges. For a vertex $u\in V$, the set of \textit{neighbors} of $u$ in $G$ is denoted by $N(u):=\{v\in V:\{u,v\}\in E\}$. For a subset $S\subseteq V(G)$, let $N(S)=\cup_{u\in S} N(u)$ be the set of all the neighbors of the vertices in $S$.
A graph $G$ is \textit{bipartite} if $V(G)$ admits a bipartition $V(G)=L\cup R$ such that both $L$ and $R$ contain no edge. Furthermore, $G$ is $(c,d)$-regular if every vertex $v\in L$ has exactly $c$ neighbors in $R$ and every vertex $u\in R$ has exactly $d$ neighbors in $L$.

Let $\mathbb{F}_2=\{0,1\}$ denote the finite field of size 2. A code $C$ is simply a subset of $\mathbb{F}_2^n$. For two vectors $x=(x_1,\ldots,x_n),~y=(y_1,\ldots,y_n)\in\mathbb{F}_2^n$, the \textit{Hamming distance} between $x$ and $y$, denoted by $d_H(x,y)$, is the number of coordinates where $x$ and $y$ differ, that is, $d_H(x,y)=|\{i\in[n]: x_i\neq y_i\}|$. The \textit{minimum distance} of a code $C\subseteq\mathbb{F}_2^n$, denoted by $d(C)$, is the minimum of $d_H(x,y)$ among all distinct $x,y\in C$. Let wt$(x)$ denote the number of nonzero coordinates of $x$. A code $C\subseteq\mathbb{F}_2^n$ is said to be an \textit{$[n,k,d(C)]$-linear code} if it is a linear subspace in $\mathbb{F}_2^n$ with dimension $k$ and minimum distance $d(C)$. It is well-known that for every linear code $C$, $d(C)=\min\{{\rm wt}(x):x\in C\setminus\{0\}\}$.

Throughout, let $G$ be a $(c,d,\alpha ,\delta)$-bipartite expander, and $C_0$ be a $[d,k_0,d_0]$ linear code. Let $T(G,C_0)$ be the Tanner code defined by $G$ and $C_0$. Let {\rm Check} be the error-detection algorithm of $C_0$, which checks whether a vector in $\mathbb{F}_2^d$ is a codeword of $C_0$.  Assume that {\rm Check} takes $h_0$ time. Similarly, let Decode be the correct-correction algorithm for $C_0$, which corrects up to $\lfloor\frac{d_0-1}{2}\rfloor$ errors. Assume that {\rm Decode} takes $t_0$ time. Note that $h_0,t_0$ are constants depending only on $C_0$ but not on $n$.

Conventionally speaking, let us call the vertices in $L$ \textit{variables} and the vertices in $R$ \textit{constraints}. Given a vector $x\in\mathbb{F}_2^n$, which is corrupted from some codeword $y\in T(G,C_0)$, let us call a constraint $u\in R$ \textit{satisfied} if $x_{N(u)}\in C_0$, otherwise call it \textit{unsatisfied.}

\subsection{Some related works}

Below, we briefly review two previous works \cite{Dowling2018fast, sipser1996expander} that are closely related to our decoding algorithms for Theorems \ref{thm:main} and \ref{thm:randDec}. Let us start from the decoding algorithm of Sipser and Spielman \cite{sipser1996expander}. We summarize as follows the so-called iterated decoding or message-passing algorithm of \cite{sipser1996expander} that decodes $T(G,Par)$.
\begin{itemize}
    \item Let $y\in T(G,Par)$ be the correct codeword that we want to decode from the received vector $x$. In the first round, the algorithm runs ${\rm Check}(x_{N(u)})$ for every $u\in R$. If a constraint $u$ is unsatisfied, then it sends a ``flip'' message to every variable in $N(u)\subseteq L$. Sipser and Spielman showed that as long as the expansion ratio of $G$ is sufficiently large ($\delta>3/4$) and the number of corruptions in $x$ is sufficiently small but not identically zero (that is, $1\le d_H(x,y)\le(2\delta-1)\alpha\cdot n$), then there must exist a variable $v\in L$ that receives $>c/2$ flip messages, which implies that more than half constraints in $N(v)$ are unsatisfied. The algorithm then flips $x_v$ and updates $x$ and the status of the constraints in $N(v)$. Note that since $Par$ is the parity-check code, flipping $x_v$ makes all satisfied constraints in $N(v)$ unsatisfied and all unsatisfied constraints in $N(v)$ satisfied. Therefore, by flipping $x_v$ one can strictly reduce the number of unsatisfied constraints.

    \item The algorithm then runs the above process repeatedly. As long as there are still unsatisfied constraints, the algorithm can find the desired $v\in L$ so that flipping $x_v$ strictly reduces the number of unsatisfied constraints. As there are at most $|R|=cn/d$ unsatisfied constraints, the above process must stop in $O(n)$ rounds and therefore yields an $O(n)$ time decoding algorithm.
\end{itemize}

Dowling and Gao \cite{Dowling2018fast} extend Sipser and Spielman's algorithm from $T(G,Par)$ to the more general setting $T(G,C_0)$ by making use of the minimum distance of $C_0$. Note that their algorithm works for linear codes defined on any finite field, but we will describe it only for $\mathbb{F}_2$.
\begin{itemize}
    \item The algorithm begins by setting a threshold $t\le\lfloor\frac{d_0-1}{2}\rfloor$ and then runs ${\rm Decode}(x_{N(u)})$ for every $u\in R$. If a constraint $u\in R$ satisfies $1\le d_H({\rm Decode}(x_{N(u)}),x_{N(u)})\le t-1$, then it sends a ``flip'' message to every variable $v\in N(u)$ with ${\rm Decode}(x_{N(u)})_v\neq x_v$. Note that ${\rm Decode}(x_{N(u)})\in\mathbb{F}_2^d$ is a codeword in $C_0$. The algorithm then flips all $x_v$ for those $v$ receiving at least one flip, and then updates $x$. Dowling and Gao showed that as long as the minimum distance $d_0$ of $C_0$ is sufficiently large, i.e. it satisfies \eqref{eq:Dowling-Gao}, then flipping all variables that receive at least one flip can reduce the number of corrupted variables in $x$ by some positive fraction.

    \item In the next steps, the algorithm runs the above process repeatedly. As the number of corrupted variables is at most $O(n)$, the algorithm will stop in $O(\log n)$ rounds. Crucially, in order to show that the running time of the algorithm is still linear-order but not of order $n\log n$, the authors proved that the running time of every single round is within a constant factor of the number of corrupted variables at the beginning of this round. As the numbers of corrupted variables form a decreasing geometric sequence with the leading term at most $n$, the total running time, which is within a constant factor of the sum of this geometric sequence, is also $O(n)$.
\end{itemize}

Next, we will summarize some known bounds on the minimum distance of expander codes. Sipser and Speilman \cite{sipser1996expander} showed that when $G$ is a $(c,d,\alpha,\delta)$-bipartite expander with expansion ratio $\delta>1/2$, then the minimum distance of $T(G,Par)$ is at least $\alpha n$. Viderman \cite{viderman2013linear} improved the lower bound above to $2\delta\alpha n$. Very recently, Chen, Cheng, Li and Ouyang \cite{chen2023improved} showed that the optimal minimum distance of $T(G,Par)$ is approximately $\Theta(\frac{\alpha n}{1-\delta})$, thus significantly improving the results of \cite{skachek2008minimum,sipser1996expander,viderman2013linear} when $\delta$ is close to 1. When the inner code $C_0$ is a linear code with minimum distance $d_0$, Dowling and Gao \cite{Dowling2018fast} showed that the expander code $T(G,C_0)$ has minimum Hamming distance at least $d_0\delta\lfloor\alpha n\rfloor$. Note that when $d_0=2$, this recovers Viderman's result.

We will close this subsection by mentioning some explicit constructions of bipartite graphs with good vertex expansion properties, including the lossless expanders and the unique-neighbor expanders. There are several very recent works on the explicit constructions of these expanders, as briefly mentioned below.

Informally speaking, a constant-degree {\it lossless expander} is a $c$-left-regular bipartite graph with an expander radio of $1-\epsilon$ such that $\epsilon$ can be arbitrarily small. Capalbo, Reingold, Vadhan, and Wigderson \cite{capalbo2002randomness} constructed the first infinite family of explicit lossless expanders by using a fairly involved form of the zigzag product. Very recently, Cohen, Roth and Ta-Shma \cite{cohen2023hdx} and independently Golowich \cite{golowich2024new} provided new and simpler explicit constructions of these graphs by combining a {\it large} bipartite spectral expander with a small lossless expander.

A $(c,d)$-regular bipartite graph is called a {\it $(c,d,\alpha,\delta)$-unique-neighbor expander} if for every subset $S\subseteq L$ of size at most $\alpha n$, the number of neighbors of $S$ that have exactly one neighbor in $S$ is at least $\delta c|S|$. It is not hard to see that a $(c,d,\alpha,1-\epsilon)$-bipartite expander is also a $(c,d,\alpha,1-2\epsilon)$-unique-neighbor expander. When $\epsilon<1/2$, explicit bipartite expanders are also unique-neighbor expanders. Other explicit constructions of unique-neighbor expanders can be found in e.g. \cite{alon2002explicit,asherov2024bipartite,becker2016symmetric,hsieh2024explicit,kopparty2023simple}.

\subsection{Key new ideas in our work}\label{sec:idea}

In this subsection, we briefly introduce the key new ideas in our work. Let us focus on the deterministic decoding algorithm that proves \cref{thm:main}. Let us begin by analyzing the following two possible places where the previous algorithm in \cite{Dowling2018fast} could be improved.

In every decoding round of the above algorithm, the constraints in $R$ which satisfy $1\le d_H({\rm Decode}(x_{N(u)}),x_{N(u)})\le t-1$ (and hence send at least one and at most $t-1$ flips to $L$) in fact have two statuses, as detailed below. Let $A$ be the set of constraints $u\in R$ that sends at least one flip and \text{Decode}$(x_{N(u)})$ computes the correct codeword in $C_0$ (i.e., \text{Decode}$(x_{N(u)})=y_{N(u)}$); let $B$ be the set of constraints $u\in R$ that sends at least one flip and \text{Decode}$(x_{N(u)})$ computes an incorrect codeword in $C_0$ (i.e., \text{Decode}$(x_{N(u)})\neq y_{N(u)}$).

\paragraph{Two possible places where the previous algorithm could be improved:}

\begin{itemize}
    \item [(i)] It could be the case that every constraint $u\in A$ satisfies $d_H({\rm Decode}(x_{N(u)}),x_{N(u)})=1$ and hence sends only one correct flip to $L$; in the meanwhile, every constraint $u\in B$ may satisfy $d_H({\rm Decode}(x_{N(u)}),x_{N(u)})=t-1$ and sends as many as $t-1$ flips to $L$, which could be all wrong. In this case, the constraints in $R$ altogether send $|A|$ correct flips and $(t-1)|B|$ wrong flips to the variables in $L$.

    \item [(ii)] Unfortunately, the situation could get even worse. Recall that our bipartite graph $G$ is $(c,d)$-regular. It could be the case that the neighbors of the constraints in $A$ are highly concentrated (e.g., all $|A|$ correct flips are received by as few as $|A|/c$ variables in $L$), and the neighbors of the constraints in $B$ are highly dispersed (e.g., all $(t-1)|B|$ possibly wrong flips are received by as many as $(t-1)|B|$ variables in $L$). As a consequence, a small number of corrupted variables but a large number of correct variables in $L$ receive flip messages.
\end{itemize}

Given the two issues above, if we flip all variables that receive at least one flip, then in the worst case, we could correct $|A|/c$ old corrupt variables but produce $(t-1)|B|$ new corrupt variables. Recall that to make the algorithm in \cite{Dowling2018fast} work, in each round, we need to reduce the number of corrupted variables by at least a positive fraction, which implies that in the worst case it is necessary to have $|A|/c\ge(t-1)|B|$. Together with some lower bound on $|A|$ and upper bound on $|B|$ (see \cite{Dowling2018fast} for details),
one can prove that in such worst scenario \eqref{eq:Dowling-Gao} is necessary for Dowling and Gao's algorithm to work.

Our new algorithm begins by noting that we could indeed fix the two problems mentioned above. To do so, we introduce several new ideas as briefly presented below.

\paragraph{Key new ideas in our work:} Let $F:=\{i\in[n]:x_i\neq y_i\}$ be the set of corrupt variables in $x$. Similarly to \cite{Dowling2018fast}, our new algorithm begins by setting a threshold $t=\lfloor\frac{1}{\delta}\rfloor$ and then runs ${\rm Decode}(x_{N(u)})$ for every $u\in R$.
\begin{itemize}
    \item [(a)] To fix the first problem, if a constraint $u\in R$ satisfies $1\le d_H({\rm Decode}(x_{N(u)}),x_{N(u)})\le t-1$, then instead of sending a flip message to every $v\in N(u)$ with ${\rm Decode}(x_{N(u)})_v\neq x_v$, the new algorithm just arbitrarily picks \textit{exactly one} such variable $v$, and sends a flip message to \textit{only} this specific $v$. By doing so, every constraint in $A\cup B$ sends exactly one flip to $L$.

    \item [(b)] To fix the second problem, we associate each $v\in L$ with a counter $\tau_v\in\{0,1,\ldots,c\}$ that counts the number of flips received by $v$. For each $m\in [c]$, let $S_m$ denote the set of variables that receive exactly $m$ flips. Then, instead of flipping every variable that receives at least one flip, i.e., instead of flipping $\cup_{m=1}^c S_m$, we only flip $S_m$ for some $m\in [c]$. Crucially, we show that if the number $|F|$ of corrupt variables is not too large, then there must \textit{exist} some $m\in[c]$ such that $|S_m|$ has the same order as $|F|$, and more importantly, a $(1/2+\kappa)$-fraction of variables in $|S_m|$ are corrupted (and therefore can be corrected by the flipping operation), where $\kappa$ is an absolute positive constant. Therefore, it follows that by flipping all variables in $S_m$, one can reduce $|F|$ by some positive fraction.
\end{itemize}
Note that the details of (a) and (b) can be found in \cref{sec:easyflip}, where we call the algorithm corresponding to (a) and (b) ``EasyFlip'' and write ${\rm EasyFlip}(x,m)$ as the output of the \hyperref[alg:easyflip]{EasyFlip} if $x$ is the input vector and $S_m$ is flipped (see \cref{alg:easyflip}).

However, there is still a gap that needs to be fixed, that is, how to find the required $S_m$? A plausible solution is to run ${\rm EasyFlip}(x,m)$ for every $m\in [c]$. This would roughly increase the total running time by a $c$ factor, which will still be $O(n)$, provided that the original running time is $O(n)$. Unfortunately, by doing so we still cannot \textit{precisely} identify the required $S_m$, as in general we do not know how to count the number of corrupted variables in some corrupted vector. We will fix this issue by introducing our third key new idea:
\begin{itemize}
    \item [(c)] Note that what we can explicitly count in each round of the algorithm is the number of unsatisfied constraints. Roughly speaking, our strategy is to run \hyperref[alg:easyflip]{EasyFlip} iteratively for a large but still constant number of times and then pick the final output that significantly reduces the number of unsatisfied constraints.

    More precisely, assume that we will run \hyperref[alg:easyflip]{EasyFlip} iteratively for $s$ rounds. Let $x^0:=x$ and write $x^1:={\rm EasyFlip}(x^0,m_1)$ as the output of the 1st \hyperref[alg:easyflip]{EasyFlip} invocation where the variables in $S_{m_1}$ are flipped for some $m_1\in [c]$; more generally, for $k\in[s]$, write $x^k:={\rm EasyFlip}(x^{k-1},m_k)$ as the output of the $k$th \hyperref[alg:easyflip]{EasyFlip} invocation where the variables in $S_{m_k}$ is flipped for some $m_k\in [c]$. Note that in \cref{alg:deepflip} we call the above iterated invocations of \hyperref[alg:easyflip]{EasyFlip} as ``DeepFlip'', and write $x^k:={\rm DeepFlip}(x,(m_1,\ldots,m_k))$ as the output of the $k$th \hyperref[alg:easyflip]{EasyFlip} invocation. For $0\le k\le s$, let $F^k\subseteq L$ and $U^k\subseteq R$ denote the sets of corrupted variables and unsatisfied constraints caused by $x^k$, respectively. We prove that there are constants $0<\epsilon\ll\epsilon'\ll\epsilon''<1$ such that the following two wordy but useful observations hold:

    \item [(c1)] If the number of corrupted variables is reduced \textit{dramatically} then the number of unsatisfied constraints is reduced \textit{significantly}, i.e., if for some $k\in[s]$, $|F^k|\le\epsilon |F^0|$, then $|U^k|\le\epsilon'|U|$;

    \item [(c2)] If the number of unsatisfied constraints is reduced \textit{significantly}, then the number of corrupted variables must be reduced by a least a constant fraction i.e., if for some $k\in[s]$, $|U^k|\le\epsilon' |U^0|$, then $|F^k|\le\epsilon''|F|$.

    In the following, we will briefly argue how we will make use of the two observations (c1) and (c2). Recall that in (b) we have essentially guaranteed that for every $k\in[s]$, there exists some $m^*_k\in[c]$ such that by flipping $S_{m^*_k}$ in \hyperref[alg:easyflip]{EasyFlip}, one could reduce the number of corrupted variables by an $\eta$-fraction for some $\eta\in(0,1)$. It follows that if we run \hyperref[alg:deepflip]{DeepFlip} iteratively for $(m_1,\ldots,m_s)=(m^*_1,\ldots,m^*_s)$, then we have $|F^s|\le(1-\eta)^s|F|<\epsilon |F|$, provided that $s>\log_{(1-\eta)^{-1}} \epsilon^{-1}$ is sufficiently large (but still a constant independent of $n$). Therefore, if we run \hyperref[alg:deepflip]{DeepFlip} thoroughly for all $(m_1,\ldots,m_s)\in [c]^s$, then by (c1) there must exist at least one\footnote{Clearly, ${\rm DeepFlip}(x,(m^*_1,\ldots,m^*_s))$ gives a candidate for such $x^k$.} $x^k:={\rm DeepFlip}(x,(m_1,\ldots,m_k))$ with $k\le s$ such that $|U^k|\le\epsilon'|U|$. Moreover, using the last inequality, such $x^k$ and $(m_1,\ldots,m_k)$ can be explicitly identified. Now, by (c2) we can conclude that the number of corrupted variables is indeed reduced by at least a constant fraction.
\end{itemize}
Note that the above brute-force search only increases the total running time by at most a $c^s$ factor. The details of (c) and the analysis of \hyperref[alg:deepflip]{DeepFlip} can be found in \cref{sec:deepflip} and \cref{alg:deepflip}. Moreover, we call the algorithm that runs ${\rm DeepFlip}(x,(m_1,\ldots,m_s))$ thoroughly for all $(m_1,\ldots,m_s)\in [c]^s$ as ``HardSearch'', and is discussed in \cref{sec:hardsearch} and \cref{alg:hardsearch}. The discussion above basically shows that every \hyperref[alg:hardsearch]{HardSearch} invocation could reduce the number of corrupted variables by a constant fraction.

Running \hyperref[alg:hardsearch]{HardSearch} iteratively for $O(\log n)$ rounds, the total number of corrupted variables will be smaller than $\lfloor\frac{d_0-1}{2}\rfloor$, which can be easily corrected by running Decode for every $u\in R$. The main algorithm that puts everything together is called ``MainDecode'', and is presented in \cref{sec:maindecode} and \cref{alg:main}.

To show that the total running time is still linear in $n$, we adopt an argument similar to that in the previous works (e.g., \cite{Dowling2018fast}). We show that the running time of every \hyperref[alg:hardsearch]{HardSearch} invocation is within a constant factor of the number of corrupted variables at the beginning of this invocation.

Lastly, we would like to mention that our randomized decoding algorithm (see \cref{algo:rand}), which proves \cref{thm:randDec} and has a larger decoding radius than the deterministic algorithm, basically follows from the same framework mentioned above.
Loosely speaking, the high-level idea of the randomized algorithm is to reduce the number of corruptions to a moderate size that can be handled by the deterministic algorithm. For that purpose, we design a random flip strategy which can be summarized as follows.

Recall that for every $m\in[c]$, $S_m$ denotes the set of all variables that receive exactly $m$ flips. First, for every constraint $u\in R$ satisfying $1\le d_H({\rm Decode}(x_{N(u)}),x_{N(u)})\le t$, we arbitrarily pick exactly one variable $v\in N(u)$ with ${\rm Decode}(x_{N(u)})_v\neq x_v$ and send a flip message to this specific $v$. Then, we collect all suspect variables that receive at least one flip. Subsequently, we design a random sampling procedure to select a subset of $\cup_{m\in [c]} S_{m}$ to flip. We show that this procedure can ensure, with high probability, that this subset has more corrupted variables than correct variables, as long as the total number of corruptions is at most $\alpha n$. By applying this strategy iteratively, we can show that in each iteration, the number of corruptions will be reduced by a positive fraction. Then, after running a constant number of iterations, the number of corrupted bits can be reduced to a range that the deterministic algorithm can handle. At this point, the deterministic algorithm is invoked to correct all remaining corrupted variables. Moreover, as $n$ increases, the fail probability of each iteration tends to $0$. Therefore, the overall random algorithm will succeed with high probability.

\subsection{Future research directions}

In this subsection, we list some possible directions for future research.
\begin{itemize}
    \item Perhaps the most attractive problem is trying to solve \cref{que-2} and \cref{con:main}. We have shown that $\delta d_0>3$ is sufficient for \cref{que-2}. However, when $d_0=2$ our result does not recover the best known record $\delta>2/3-\Omega(1)$ of Viderman \cite{viderman2013linear} or the earlier result $\delta>3/4$ of Sipser and Spielman \cite{sipser1996expander}. Therefore, as a first step to solve \cref{que-2} and \cref{con:main} in full generality, it would be interesting to prove that the weaker condition $\delta d_0>4/3$ or even $\delta d_0>3/2$ is also sufficient for \cref{que-2}.

    \item Our decoding algorithm is built upon the earlier works of Sipser and Spielman \cite{sipser1996expander} and Dowling and Gao \cite{Dowling2018fast}. Since the work of Viderman \cite{viderman2013linear} makes a step beyond \cite{sipser1996expander} by introducing the so-called ``Find Erasures and Decode'' algorithm, it would be interesting to know if one can utilize the new idea of \cite{viderman2013linear} to essentially improve our bound $\delta d_0>3$.

    \item Although we showed that when $\delta d_0>3$, every Tanner code $T(G,C_0)$ can correct $\Omega_{c,d,\alpha,\delta}(n)$ errors in $O_{c,d,\alpha,\delta}(n)$ time, we did not make a serious attempt to optimize the hidden constants. It is always interesting to find the optimal constants in $\Omega(\cdot)$ and $O(\cdot)$. In particular, finding the optimal fraction of errors that one can correct in linear time is crucial in decoding expander codes.
\end{itemize}

\section{Collection of some auxiliary lemmas}

Given two subsets $S,T\subseteq V(G)$, let $E(S,T)$ denote the set of edges with one endpoint in $S$ and another endpoint in $T$. 
For every positive integer $t$, let
\begin{itemize}
    \item $N_{\le t}(S)=\{u\in V(G):1\le|N(u)\cap S|\le t\}$,
    \item  $N_{t}(S)=\{u\in V(G):|N(u)\cap S|=t\}$,
    \item and $N_{\ge t}(S)=\{u\in V(G): |N(u)\cap S|\ge t\}$.
\end{itemize}

We will make use of the following crucial property of bipartite expander graphs.

\begin{proposition}[Folklore]\label{prop:expander}
    Let $G$ be a $(c,d,\alpha,\delta)$-bipartite expander. Then, for every set $S\subseteq L$ with $|S|\le\alpha n$ and every integer $t\in[d]$, the following inequality holds
    \begin{align*}
        |N_{\le t}(S)|\ge \frac{\delta(t+1)-1}{t}\cdot c|S|.
    \end{align*}
\end{proposition}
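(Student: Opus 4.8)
The plan is to run a two-way double count on the $c|S|$ edges of $G$ leaving $S$. Write $a := |N_{\le t}(S)|$ and $b := |N_{\ge t+1}(S)|$. The first observation is that every $u \in N(S)$ satisfies $|N(u)\cap S| \ge 1$, so it lies in exactly one of $N_{\le t}(S)$ and $N_{\ge t+1}(S)$; hence $N(S) = N_{\le t}(S) \sqcup N_{\ge t+1}(S)$ and $|N(S)| = a+b$. Applying the expansion hypothesis (legitimate since $|S| \le \alpha n$) gives the first inequality
\begin{align*}
a + b \;=\; |N(S)| \;\ge\; \delta c |S|.
\end{align*}

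Next I would count the edges incident to $S$ by their right endpoint. Since $G$ is $c$-left-regular, $c|S| = \sum_{u \in N(S)} |N(u)\cap S|$. Splitting this sum over the two parts of $N(S)$ and bounding each summand from below — by $1$ for $u \in N_{\le t}(S)$ and by $t+1$ for $u \in N_{\ge t+1}(S)$ — yields the second inequality
\begin{align*}
a + (t+1) b \;\le\; \sum_{u \in N(S)} |N(u)\cap S| \;=\; c|S|.
\end{align*}

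Finally I would combine the two inequalities with the right weights: multiplying the first by $t+1$ and subtracting the second eliminates $b$ and leaves exactly $ta$ on the left,
\begin{align*}
ta \;=\; (t+1)(a+b) - \bigl(a + (t+1)b\bigr) \;\ge\; (t+1)\delta c|S| - c|S| \;=\; \bigl(\delta(t+1)-1\bigr) c|S|,
\end{align*}
and dividing by $t \ge 1$ gives the stated bound. There is no genuine obstacle here; the only point that requires a moment's care is the choice of the linear combination $(t+1)\cdot(\text{first}) - (\text{second})$, as opposed to simply substituting $b \ge \delta c|S| - a$ into the second inequality, which would produce only the weaker estimate with $t+1$ in the denominator. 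It is also worth remarking that the degenerate endpoint $t = d$ is automatically covered, since then a right vertex can have at most $d$ neighbours in $S$, so $N_{\ge d+1}(S) = \emptyset$, $b = 0$, and the same computation applies verbatim.
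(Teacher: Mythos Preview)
Your proof is correct and is essentially the paper's own argument: both double-count the $c|S|$ edges leaving $S$, lower-bound each contribution by $1$ on $N_{\le t}(S)$ and by $t+1$ on $N_{\ge t+1}(S)$, and combine with the expansion lower bound $|N(S)|\ge \delta c|S|$ to isolate $|N_{\le t}(S)|$. One small quibble with your closing remark: substituting $b\ge \delta c|S|-a$ into the second inequality in the right direction actually yields the \emph{same} bound, not a weaker one, so the linear-combination step is just a cleaner bookkeeping of that substitution rather than a genuinely stronger move.
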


\begin{proof}
    By double counting the number of cross edges between $S$ and $N(S)$, one can infer that
    \begin{align*}
c|S|&=|E(S,N(S))|=\sum\limits_{i=1}^di|N_i(S)|\ge |N_{\le t}(S)|+(t+1)(|N(S)|-|N_{\le t}(S)|)\\
&\ge (t+1)|N(S)|-t|N_{\le t}(S)|\ge (t+1)\delta c|S|-t|N_{\le t}(S)|,
    \end{align*}
    where the last inequality follows by the property of bipartite expanders.
\end{proof}

It was known that for every integer $c,d,n\ge 2$, a random $(c,d)$-regular bipartite graph $G$  with bipartition $V(G)=L\cup R$ and $|L|=n$ satisfied the following property with probability $1-(e/\alpha)^{-\alpha n}$ (see Theorem 26 in \cite{sipser1996expander} or Proposition A.3 in \cite{chen2023improved}). For all $0<\alpha<1$, all subsets $S$ of $L$ with size $\alpha n$ have at least
\begin{align}\label{eq:existence}
n\left(\frac{c}{d}(1-(1-\alpha)^d)-2\alpha\sqrt{c\ln\frac{e}{\alpha}}\right)
\end{align}
neighbors. The following proposition is an easy consequence of the above discussion.

\begin{proposition}\label{prop:random}
     For integers $d,n$ and constant $0<\delta<1$, there exists a $(c,d,\alpha,\delta)$-bipartite expander $G$ with $V(G)=L\cup R$ and $|L|=n$, for some sufficiently small $\alpha$ and sufficiently large $c$.
\end{proposition}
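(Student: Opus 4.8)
The plan is to prove \cref{prop:random} by the probabilistic method: I will show that for a suitable constant $\alpha$ (small) and a suitable constant $c$ (large), both depending only on $d$ and $\delta$, a random $(c,d)$-regular bipartite graph with $|L|=n$ is a $(c,d,\alpha,\delta)$-bipartite expander with positive probability. This is the same union-bound computation that underlies the cited existence statement around \eqref{eq:existence}, except that I carry it out for \emph{every} left-set of size at most $\alpha n$, not only for those of size exactly $\alpha n$. Concretely, fixing $1\le s\le\lfloor\alpha n\rfloor$, I would bound $\Pr[\,\exists\, S\subseteq L,\ |S|=s,\ |N(S)|<\delta cs\,]$ by a union bound over the $\binom{n}{s}$ choices of $S$ and over the $\binom{|R|}{\lceil\delta cs\rceil}$ sets $T\subseteq R$ of size $\lceil\delta cs\rceil$ that could contain $N(S)$; a standard estimate bounds the probability that all $cs$ edges incident to $S$ land in a fixed such $T$ by $(\lceil\delta cs\rceil/|R|)^{cs}$ up to lower-order factors. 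Using $\binom{a}{b}\le(ea/b)^{b}$, $|R|=cn/d$, and writing $t:=s/n$, this simplifies (after absorbing the ceilings into harmless constants) to a per-size bound of the form $\big[\tfrac{e}{t}\,\big(e^{\delta}(\delta d t)^{1-\delta}\big)^{c}\big]^{tn}=\big[e\,e^{\delta c}(\delta d)^{(1-\delta)c}\,t^{(1-\delta)c-1}\big]^{tn}$.

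The crux is to choose the parameters so that this is geometrically decaying in $s$. First I fix $\alpha$ small enough that $\rho:=e^{\delta}(\delta d\alpha)^{1-\delta}<1$, i.e. $\alpha<\tfrac{1}{\delta d}e^{-\delta/(1-\delta)}$, which is possible since $\delta<1$; I also take $c$ to be a multiple of $d$ (so that $(c,d)$-regular bipartite graphs on $|L|=n$ exist for every $n$) and large enough that $(1-\delta)c\ge 2$, so the exponent $(1-\delta)c-1$ is positive and $t^{(1-\delta)c-1}\le\alpha^{(1-\delta)c-1}$ for all $t\le\alpha$. Substituting this, the size-$s$ failure probability is at most $q^{s}$ with $q:=\tfrac{e}{\alpha}\rho^{c}$. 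Since $\rho<1$ and $\alpha$ are now fixed constants, I enlarge $c$ once more so that $q<\tfrac12$; then $\Pr[\,\exists\, S,\ |S|\le\alpha n,\ |N(S)|<\delta c|S|\,]\le\sum_{s\ge1}q^{s}=\tfrac{q}{1-q}<1$, so a $(c,d,\alpha,\delta)$-bipartite expander with $|L|=n$ exists. (When $\alpha n<1$ the condition is trivially met by any $(c,d)$-regular graph, so no lower bound on $n$ is needed.)

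The one genuine obstacle is that a single clean appeal to \eqref{eq:existence} does not cover the whole range of set sizes: that bound is sharp for sets of size about $\alpha n$, but its error term $2\alpha\sqrt{c\ln(e/\alpha)}$ blows up relative to the main term as $\alpha\to 0$, so one cannot simply invoke it with a sequence of shrinking parameters to take care of small $S$. Running the raw union bound instead resolves this, because for small $s$ the event that all $cs$ edges of $S$ fall into a set of only $\delta cs$ right-vertices is so improbable that even the crude estimate above is more than enough; the quantitative heart of the matter is exactly that, once $\alpha$ is small enough to make $\rho<1$, the ``bad base'' $\tfrac{e}{\alpha}\rho^{c}$ can be driven below $1$ by taking $c$ large, uniformly over all $1\le s\le\alpha n$. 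The remaining work — handling the ceilings in $\lceil\delta cs\rceil$, justifying the edge-collision estimate in whichever random-regular-graph model is used, and the divisibility condition $d\mid cn$ — is routine and can be quoted from \cite{sipser1996expander,chen2023improved}.
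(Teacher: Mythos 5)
Your approach is the same as the paper's — probabilistic method on a random $(c,d)$-regular bipartite graph with a union bound over all left-sets of size at most $\alpha n$ — but you carry out the union bound explicitly where the paper simply declares the proposition an ``easy consequence'' of \eqref{eq:existence}. Your computation is correct: the per-size bound
$\bigl[\tfrac{e}{t}\bigl(e^{\delta}(\delta d t)^{1-\delta}\bigr)^{c}\bigr]^{tn}$
follows from $\binom{n}{s}\binom{cn/d}{\delta cs}(\delta d s/n)^{cs}$ by standard estimates, the monotonicity in $t$ once $(1-\delta)c>1$ is right, and the two-stage choice of parameters (first $\alpha$ to make $\rho=e^{\delta}(\delta d\alpha)^{1-\delta}<1$, then $c$ to drive $q=\tfrac{e}{\alpha}\rho^{c}$ below $\tfrac12$) closes the argument cleanly.

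Your observation that \eqref{eq:existence} as restated in the paper does not by itself cover all set sizes simultaneously is a fair critique of the exposition: for a set of size $s$ with $s/n\to 0$, the error term $2(s/n)\sqrt{c\ln(en/s)}$ dominates the main term $\approx cs/n$ for any fixed constant $c$, so one cannot get the $\delta c s$ lower bound on $|N(S)|$ by plugging $\alpha=s/n$ into the displayed inequality. The paper is implicitly leaning on the stronger all-sizes statements proved in Theorem~26 of Sipser--Spielman and Proposition~A.3 of Chen--Cheng--Li--Ouyang (both of which do quantify over all $|S|\le\alpha n$), not literally on the single-size restatement \eqref{eq:existence}. Your proof has the advantage of being self-contained and of making the dependence of $\alpha$ and $c$ on $d$ and $\delta$ completely explicit, at the cost of re-deriving the edge-collision estimate; the paper's route is shorter but relies on the reader unpacking the cited theorems in their full strength.
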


\section{Deterministic decoding: Decoding \texorpdfstring{$\Omega(n)$ corruptions in $O(n)$ time}{}}

\begin{theorem}[restatement of \cref{thm:main}]
    Let $G$ be a $(c,d,\alpha,\delta)$-bipartite expander and $C_0$ be a $[d,k_0,d_0]$-linear code, where $c,d,\alpha,\delta,d_0,k_0$ are positive constants. If $\delta d_0>3$, then there exists a linear-time decoding algorithm for the Tanner code $T(G,C_0)$ which can correct $\gamma n$ errors, where $\gamma=\frac{2\alpha}{d_0(1+0.5c\delta)}$.
\end{theorem}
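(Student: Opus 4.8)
The plan is to build the decoding algorithm in three nested layers — \textbf{EasyFlip}, \textbf{DeepFlip}, and \textbf{HardSearch} — wrapped by a top-level routine \textbf{MainDecode}, exactly as sketched in \cref{sec:idea}, and to prove at each layer the corresponding quantitative guarantee. Throughout, set $t=\lfloor 1/\delta\rfloor$ so that $t\ge 1$ and $\delta(t+1)>1$, which makes \cref{prop:expander} give a genuine lower bound on $|N_{\le t}(S)|$; also note $t\le\frac{d_0-1}{2}$ follows from $\delta d_0>3$ (since $2/\delta<d_0$, so $t<d_0/2$). First I would analyze one round of \textbf{EasyFlip}: given corrupt set $F$ with $|F|$ not too large, each constraint $u$ with $1\le d_H(\mathrm{Decode}(x_{N(u)}),x_{N(u)})\le t-1$ sends exactly one flip, and I partition these constraints into the ``correct'' set $A$ (where $\mathrm{Decode}$ recovers $y_{N(u)}$) and the ``wrong'' set $B$. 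A constraint in $A$ that sees $\le t-1$ corrupted coordinates of $x_{N(u)}$ must have at least one neighbor in $F$; conversely a constraint with $\ge 1$ and $\le t-1$ neighbors in $F$ will \emph{decode correctly}, so $A\supseteq N_{\le t-1}(F)\setminus N_{\ge 1}(\text{``too many''})$ up to the usual bookkeeping — more precisely, $|A|$ is bounded below via \cref{prop:expander} applied to $F$ with parameter $t-1$, while $|B|$ is bounded above by the number of constraints with $\ge t$ neighbors in $F$, which is at most $\frac{c|F|}{t}$. Combining $|A|\ge \frac{\delta t-1}{t-1}c|F|$-type bounds (careful with the $t=1$ degenerate case, handled separately) with $|B|\le \frac{c|F|}{t}\le \delta c|F|$ gives that the total number of flips landing on $F$ strictly exceeds the number landing on $L\setminus F$; a pigeonhole over the buckets $S_1,\dots,S_c$ then yields some $m$ with $|S_m|=\Theta(|F|)$ and at least a $(1/2+\kappa)$-fraction of $S_m$ inside $F$, so flipping $S_m$ shrinks $|F|$ by a factor $1-\eta$ for an absolute constant $\eta>0$.

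Next I would set up \textbf{DeepFlip}: iterate \textbf{EasyFlip} for $s$ rounds over a guess sequence $(m_1,\dots,m_s)\in[c]^s$, and prove the two ``wordy'' observations (c1) and (c2). For (c1), if $|F^k|\le\epsilon|F^0|$ then since each unsatisfied constraint in $U^k$ must have a neighbor in $F^k$ (a constraint whose $x_{N(u)}$ equals $y_{N(u)}$ is satisfied), $|U^k|\le c|F^k|\le c\epsilon|F^0|$, while $|U^0|\ge$ (some positive fraction of $|F^0|$) because many constraints adjacent to $F^0$ are unsatisfied — again via \cref{prop:expander}, the set $N_{\le 1}(F^0)$ of constraints with a \emph{unique} neighbor in $F^0$ is unsatisfied and has size $\ge(2\delta-1)c|F^0|$ when $t=1$, or more generally one uses that $N_1(F^0)$ is large; picking $\epsilon$ small enough gives $|U^k|\le\epsilon'|U^0|$. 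For (c2), this is the quantitative heart: if $|U^k|\le\epsilon'|U^0|$ I must conclude $|F^k|\le\epsilon''|F^0|$, equivalently, a configuration with few unsatisfied constraints cannot have many corrupt variables. Here I invoke \cref{prop:expander} again — a large $F^k$ forces $|N_{\le t}(F^k)|\ge\frac{\delta(t+1)-1}{t}c|F^k|$, and each such constraint is either unsatisfied or lies in the ``wrong-decode'' category, but the latter count is controlled — so $|U^k|$ cannot be too small relative to $|F^k|$. This step is where the constant $3$ in $\delta d_0>3$ enters: I need $\delta(t+1)-1>0$ together with slack to absorb the miscounted constraints, and chasing the inequalities will pin down $\gamma=\frac{2\alpha}{d_0(1+0.5c\delta)}$.

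Then \textbf{HardSearch} runs \textbf{DeepFlip} over \emph{all} $(m_1,\dots,m_s)\in[c]^s$ and over all prefixes $k\le s$, and outputs the first $x^k$ it finds with $|U^k|\le\epsilon'|U^0|$; existence of such a $k$ is guaranteed because the ``oracle'' sequence $(m_1^*,\dots,m_s^*)$ from the \textbf{EasyFlip} analysis drives $|F^s|$ below $\epsilon|F^0|$, hence (c1) applies — and crucially $|U^k|$ is something the algorithm can \emph{compute} by running \textrm{Check} on every constraint, whereas $|F^k|$ is not. By (c2) the returned $x^k$ has $|F^k|\le\epsilon''|F^0|$. \textbf{MainDecode} iterates \textbf{HardSearch} for $O(\log n)$ rounds, geometrically driving the corrupt count below $\lfloor\frac{d_0-1}{2}\rfloor$, at which point a single pass of \textrm{Decode} over all constraints finishes — here one checks that with $<\lfloor\frac{d_0-1}{2}\rfloor$ total errors, every constraint adjacent to a corrupt variable decodes correctly, using that $F$ is small enough that each constraint sees $\le\lfloor\frac{d_0-1}{2}\rfloor$ corrupt neighbors (this needs the decoding radius $\gamma n$ to be small relative to $\alpha n/d_0$ or so, which is exactly the role of the denominator $1+0.5c\delta$ in $\gamma$). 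For the running time, I would show by the standard telescoping argument that a \textbf{HardSearch} invocation beginning with $f$ corrupt variables runs in $O_{c,d,s,h_0,t_0}(f+1)$ time — each \textbf{EasyFlip} touches only constraints adjacent to the current corrupt-or-flipped set, whose size stays $\Theta(f)$ — and since the $f$'s across the $O(\log n)$ rounds form a geometric series with leading term $\le\gamma n$, the total is $O(n)$; the constant $c^s$ blowup is harmless as $s$ is a constant. The main obstacle I anticipate is calibrating the chain of constants $\epsilon\ll\epsilon'\ll\epsilon''$ and the decoding radius simultaneously so that (c1), (c2), the per-round geometric decay $\eta$, and the final ``$<\lfloor\frac{d_0-1}{2}\rfloor$ errors are cleanly decodable'' condition all hold under the single hypothesis $\delta d_0>3$ — in particular getting (c2) to work forces the precise slack that the constant $3$ provides, and any looseness there is what prevents pushing the bound down to $\delta d_0>3/2$.
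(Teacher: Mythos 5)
Your overall architecture mirrors the paper exactly (\hyperref[alg:easyflip]{EasyFlip}/\hyperref[alg:deepflip]{DeepFlip}/\hyperref[alg:hardsearch]{HardSearch}/\hyperref[alg:main]{MainDecode}, the observations (c1)/(c2), the telescoping running-time argument), but the central quantitative step of the EasyFlip analysis — showing that correct flips outnumber wrong flips by a $\Theta(|F|)$ margin — does not close the way you have sketched it, and this is not a mere calibration issue. First, with your threshold $\le t-1$ the set $A$ equals $N_{\le t-1}(F)$ and \cref{prop:expander} applied with parameter $t-1$ gives $|A|\ge\frac{\delta t-1}{t-1}c|F|$; since $t=\lfloor 1/\delta\rfloor$ forces $\delta t\le 1$, this lower bound is nonpositive and hence vacuous. (The paper uses threshold $\le t$, so that $A=N_{\le t}(F)$ and $|A|\ge\frac{\delta(t+1)-1}{t}c|F|>0$.) Second, your bound $|B|\le\frac{c|F|}{t}$ discards exactly the information that makes $d_0$ matter: a constraint in $B$ must see at least $d_0-t$ corrupt neighbors (its decoded codeword is at distance $\ge d_0$ from $y_{N(u)}$ and at distance $\le t$ from $x_{N(u)}$), giving $B\subseteq N_{\ge d_0-t}(F)$ and $|B|\le\frac{c|F|}{d_0-t}$; and the claimed chain $\frac{c|F|}{t}\le\delta c|F|$ is reversed, since $t\le 1/\delta$ means $1/t\ge\delta$.

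Even with both corrections, the two \emph{separate} bounds $|A|\ge\frac{\delta(t+1)-1}{t}c|F|$ and $|B|\le\frac{c|F|}{d_0-t}$ are not enough to conclude $|A|>|B|$: take e.g.\ $\delta=0.1$, $d_0=31$ (so $\delta d_0=3.1>3$), $t=10$; then the lower bound on $|A|$ is $\frac{0.1}{10}c|F|=\frac{c|F|}{100}$ while the upper bound on $|B|$ is $\frac{c|F|}{21}$, and the first is smaller than the second. What the paper actually does is a \emph{joint} double-counting: it combines $\sum_k k|N_k(F)|=c|F|$ with $\sum_k|N_k(F)|\ge\delta c|F|$ by multiplying the second by $\frac{1}{\delta}+\epsilon_0$ and subtracting, which together with $A=N_{\le t}(F)$, $B\subseteq N_{\ge d_0-t}(F)$, and the arithmetic condition $d_0-t-\frac{1}{\delta}-\epsilon_0>\frac{1}{\delta}+\epsilon_0-1$ (this is where $\delta d_0>3$ is actually used) yields the key inequality $|A|-|B|\ge\epsilon_0\delta^2 c|F|$ directly. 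It is precisely this inequality — not the two separate one-sided bounds — that feeds into \cref{lem:alpha} and the pigeonhole over the buckets $S_m$ to produce an $m$ with $\alpha_m\ge\frac{1}{2}+\Omega(1)$ and $|S_m|=\Omega(|F|)$. Your sketch of (c1), (c2), \hyperref[alg:hardsearch]{HardSearch}, and the final cleanup is otherwise consistent with the paper, although the paper's (c2) is simpler than what you describe: it uses $N_{\le d_0-1}(F^k)\subseteq U^k$ with \cref{prop:expander} at parameter $d_0-1$, needing only $\delta d_0>1$, rather than a decomposition into ``wrong-decode'' categories at parameter $t$.
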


\begin{remark}
    Our objective in deterministic decoding is not to optimize the decoding radius, but rather the decoding regime. Throughout the algorithm, there may arise situations where the size of corruptions is slightly larger than the initial. To address this, we employ a pruning method (see line $5$ of \cref{alg:deepflip}) within the algorithm to keep the increment under control. However, it is crucial that we keep the initial corruptions smaller than $\gamma n$ to ensure that at any point during the algorithm, the size of corruptions never exceeds the specified parameter $\alpha n$ (see the proof of \cref{lem:deepflip} for details); otherwise, we will have no guarantee regarding the expander property of the set of corruptions.
\end{remark}

We need to set some parameters. Suppose that $d_0 > \frac{3}{\delta} - 1$. Let $t = \lfloor \frac{1}{\delta} \rfloor$. Take $\epsilon_0 > 0$ such that $d_0 > \frac{3}{\delta} - 1 + 2\epsilon_0$ and $\lfloor \frac{1}{\delta} + \epsilon_0 \rfloor = \lfloor \frac{1}{\delta} \rfloor$. For every $0<\epsilon_1< \frac{\epsilon_0 \delta^2}{100} $, let $\epsilon_2=\frac{\epsilon_1}{c+1}\cdot\frac{\delta(t+1)-1}{t}>0$ and
\[ \epsilon_3 = \epsilon_2 \left( 2(1-\epsilon_1) \left( \frac{1}{2} +  \frac{\epsilon_0 \delta^2}{2}  \right) - 1 \right) > 0. \]
It is not hard to check that $\epsilon_1,\epsilon_2$ and $\epsilon_3$ are all well-defined. Lastly, let $\epsilon_4=\frac{\delta d_0-1}{d_0-1}\cdot(1-\epsilon_3)$, $\ell=\left\lceil\log_{1-\epsilon_3}\left(\left\lfloor\frac{d_0-1}{2}\right\rfloor\frac{1}{\gamma n}\right)\right\rceil$ and $s_0=\left\lceil\log_{1-\epsilon_3}\left(\epsilon_4\frac{\delta d_0-1}{d_0-1}\right)\right\rceil$.

\subsection{The main decoding algorithm -- MainDecode}\label{sec:maindecode}

Given a corrupt vector $x\in\mathbb{F}_2^n$ with at most $\gamma n$ corruptions, our main decoding algorithm (see \cref{alg:main} below) works as follows. The algorithm is divided into two parts. In the first part (see steps 2-10 below), it invokes \hyperref[alg:hardsearch]{HardSearch} (see \cref{alg:hardsearch} below) recursively for $\ell$ rounds, where in every round the number of corrupt variables is reduced by a $(1-\epsilon_3)$-fraction. After $\ell$ executions of \hyperref[alg:hardsearch]{HardSearch}, the number of corrupt variables is reduced to at most $\lfloor\frac{d_0-1}{2}\rfloor$. Then, in the second part of the algorithm (see steps 11-13 below), the decoder of the inner code $C_0$ is applied to finish decoding.

\begin{algorithm}[htbp]
\caption{Main decoding algorithm for expander codes -- MainDecode}
\label{alg:main}
\begin{algorithmic}[1]
\REQUIRE $G$, $C_0$, $x\in\mathbbm{F}_2^n$
\ENSURE $x' \in \mathbbm{F}_2^n$
\STATE Set $i=1$ and $x^0=x$
\FOR{$1\le i\le\ell$}
\STATE $U^{i-1}\gets \{u\in R: x^{i-1}_{N(u)}\notin C_0\}$
\IF{$|U^{i-1}|=0$}
\RETURN $x'\gets x^{i-1}$
\ELSE
\STATE $x^{i}\gets$ HardSearch$(x^{i-1})$
\STATE $i\gets i+1$
\ENDIF
\ENDFOR
\FOR{every $u\in R$ such that $x^\ell_{N(u)}\notin C_0$}
\STATE $x^\ell_{N(u)}\gets \text{Decode}(x^\ell_{N(u)})$
\ENDFOR
\RETURN $x'\gets x^\ell$
\end{algorithmic}
\end{algorithm}

The next two lemmas justify the correctness and the linear running time of {\rm \hyperref[alg:main]{MainDecode}}.

\begin{lemma}\label{lem:main}
(i) Let $x$ be the input vector of {\rm \hyperref[alg:hardsearch]{HardSearch}} and let $F$ be the set of corrupt variables of $x$. Let $x':={\rm HardSearch}(x)$ and $F'$ be the set of corrupt variables of $x'$. If $|F|\le\gamma n$, then $|F'|\le(1-\epsilon_3)\cdot|F|$.

(ii) In step 11 of {\rm \hyperref[alg:main]{MainDecode}}, the number of corrupt variables in $x^\ell$ is at most $\lfloor\frac{d_0-1}{2}\rfloor$.
\end{lemma}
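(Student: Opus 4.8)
The plan is to treat part~(i) as the core statement and to deduce part~(ii) from it by iteration. Everything rests on a two-sided comparison, valid for any received vector whose set $F$ of corrupt variables satisfies $|F|\le\alpha n$, between $|F|$ and the size of the set $U$ of unsatisfied constraints. For one direction, an unsatisfied constraint $u$ has $x_{N(u)}\neq y_{N(u)}$ and hence a neighbour in $F$, so a double count of the edges between $F$ and its unsatisfied neighbours gives $|U|\le c|F|$. For the other direction, a short computation from $\delta d_0>3$ gives $t=\lfloor 1/\delta\rfloor\le\lfloor (d_0-1)/2\rfloor$; so any constraint $u$ with $1\le|N(u)\cap F|\le t$ satisfies $1\le d_H(x_{N(u)},y_{N(u)})\le t<d_0$, which means $x_{N(u)}$ lies at distance strictly between $0$ and $d_0$ from the codeword $y_{N(u)}$ and therefore cannot itself be a codeword of $C_0$, i.e.\ $u$ is unsatisfied. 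Hence $N_{\le t}(F)\subseteq U$, and \cref{prop:expander} gives $|U|\ge\frac{\delta(t+1)-1}{t}\,c|F|$, with $\frac{\delta(t+1)-1}{t}>0$. These two bounds make ``few corrupt variables'' and ``few unsatisfied constraints'' interchangeable up to the explicit constant $\frac{\delta(t+1)-1}{t}$, which is the quantitative content of observations (c1) and (c2) from \cref{sec:idea}.

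For part~(i): first note $\gamma\le\alpha$ (immediate since $d_0\ge 3$), so the hypothesis $|F|\le\gamma n$ puts us in the regime where the comparison applies; moreover, the pruning of line~5 of \cref{alg:deepflip} together with \cref{lem:deepflip} keeps every intermediate corrupt set arising inside HardSearch below $\alpha n$, so the comparison remains available at every step --- this is exactly where the precise value of $\gamma$ enters. Next I would invoke the EasyFlip guarantee to be established in \cref{sec:easyflip}: whenever the current vector is not already a codeword, there is some $m\in[c]$ for which $S_m$ has size of the same order as the current corrupt set and more than a $\bigl(\tfrac12+\kappa\bigr)$-fraction of $S_m$ is corrupt, for an absolute constant $\kappa>0$; hence flipping $S_m$ decreases the corrupt count by at least a fixed positive fraction. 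Iterating this along the resulting ``good'' tuple $(m_1^\ast,\dots,m_{s_0}^\ast)\in[c]^{s_0}$, the output $\mathrm{DeepFlip}(x,(m_1^\ast,\dots,m_{s_0}^\ast))$ has corrupt count reduced by an arbitrarily large constant factor once $s_0$ is the constant fixed earlier, so by the comparison its number of unsatisfied constraints drops below the threshold governed by $\epsilon_1,\epsilon_2$. Since HardSearch enumerates all of $[c]^{s_0}$ and all prefixes and returns a recorded vector minimising the number of unsatisfied constraints, its output $x'$ inherits that bound; feeding it back through the comparison in the reverse direction while tracking the constants $\epsilon_0,\dots,\epsilon_4$ gives $|F'|\le(1-\epsilon_3)|F|$.

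For part~(ii): if MainDecode returns at line~5 then line~11 is never reached and there is nothing to prove, so assume the loop runs for all $1\le i\le\ell$, producing $x^1,\dots,x^\ell$ with respective corrupt sets $F^1,\dots,F^\ell$. By induction on $i$, applying part~(i) at each step --- which is legitimate because $|F^{i-1}|\le(1-\epsilon_3)^{\,i-1}\gamma n\le\gamma n$ --- we obtain $|F^i|\le(1-\epsilon_3)^{\,i}\gamma n$, and in particular $|F^\ell|\le(1-\epsilon_3)^{\,\ell}\gamma n$. Since $0<1-\epsilon_3<1$ and $\ell\ge\log_{1-\epsilon_3}\!\bigl(\lfloor\tfrac{d_0-1}{2}\rfloor/(\gamma n)\bigr)$, we have $(1-\epsilon_3)^{\,\ell}\le\lfloor\tfrac{d_0-1}{2}\rfloor/(\gamma n)$, hence $|F^\ell|\le\lfloor\tfrac{d_0-1}{2}\rfloor$, which is the assertion.

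I expect the main obstacle to be the EasyFlip guarantee used in part~(i): proving that among $S_1,\dots,S_c$ there is a set which is simultaneously $\Theta(|F|)$ in size and majority-corrupt. This is the step that actually consumes the hypothesis $\delta d_0>3$ and that realises the two fixes~(a) and~(b) to the failure modes of the Dowling--Gao scheme, and it forces one to re-verify, after each individual flip, that the corrupt set is still at most $\alpha n$ so that \cref{prop:expander} may be reapplied --- which is precisely why the pruning step and the smallness of $\gamma$ are needed. A secondary, purely bookkeeping, difficulty is to check that the constants $\epsilon_0,\epsilon_1,\epsilon_2,\epsilon_3,\epsilon_4$ and the number $s_0$ of DeepFlip rounds chain correctly, so that the passage ``$|U'|$ small, hence $|F'|\le(1-\epsilon_3)|F|$'' closes with exactly the stated constant.
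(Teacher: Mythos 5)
Your proposal is correct and follows the same route as the paper: part (i) uses the nonemptiness of $M$ (\cref{lem:mid01}) together with \cref{lem:deepflip} to conclude that the returned $x'$ satisfies $|F'|\le(1-\epsilon_3)|F|$, and part (ii) iterates part (i) $\ell$ times using the definition of $\ell$. One small slip in your description: \hyperref[alg:hardsearch]{HardSearch} does not return a recorded vector minimising $|U'|$, but rather the first tuple whose DeepFlip output is not $\bot$ and has $|U'|\le\epsilon_4|U|$ --- those two conditions together are exactly membership in $M$, which is what makes \cref{lem:deepflip} applicable.
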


\begin{lemma}\label{lem:HardSearchtime}
    (i) Let $x$ be the input vector of {\rm \hyperref[alg:hardsearch]{HardSearch}} and let $F$ be the set of corrupt variables of $x$. If $|F|\le\gamma n$, then the running time of {\rm \hyperref[alg:hardsearch]{HardSearch}} is at most $O(n+|F|)$.

    (ii) Furthermore, if the number of corrupt variables in the input vector of {\rm \hyperref[alg:main]{MainDecode}} is at most $\gamma n$, then the running time of {\rm \hyperref[alg:main]{MainDecode}} is $O(n)$.
\end{lemma}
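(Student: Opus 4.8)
The plan is to establish the two lemmas by carefully tracking how the three nested procedures -- EasyFlip, DeepFlip, HardSearch -- interact with the corrupt-variable set $F$ and the unsatisfied-constraint set $U$, using \cref{prop:expander} as the sole combinatorial input. For part (i) of \cref{lem:main}, the heart of the matter is the ``one-flip-per-constraint'' analysis of EasyFlip: when every constraint in $A\cup B$ sends exactly one flip, the counters $\tau_v$ partition the flipped variables into $S_1,\dots,S_c$, and an averaging argument (via \cref{prop:expander} with $t=\lfloor 1/\delta\rfloor$, which controls $|N_{\le t}(F)|$ and hence forces many constraints to be in $A$ rather than $B$) should show that \emph{some} $S_m$ with $|S_m|=\Theta(|F|)$ has a strict majority of corrupt variables; this is exactly claim (b) of \cref{sec:idea}, and it yields an $\eta$-fraction reduction for the ``correct'' choice $m_k^*$. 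Chaining these for $s_0$ rounds gives $|F^{s_0}|\le(1-\eta)^{s_0}|F|<\epsilon_1|F|$ along the path $(m_1^*,\dots,m_{s_0}^*)$. Then observations (c1)/(c2) -- which I would state and prove as separate claims relating $|F^k|$ and $|U^k|$ to $|F^0|,|U^0|$ through the bipartite-expansion inequality $|U^0|\le c|F^0|$ and $|U^0|\ge (\delta(t+1)-1)/t\cdot c|F^0|$ from \cref{prop:expander} -- guarantee that the brute-force scan over all $(m_1,\dots,m_{s_0})\in[c]^{s_0}$ performed by HardSearch will detect some $x^k$ with $|U^k|\le\epsilon_2|U^0|$, and that this detected output necessarily has $|F'|\le(1-\epsilon_3)|F|$. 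Here the pruning step (line 5 of \cref{alg:deepflip}) must be invoked to argue that $|F^k|$ never exceeds $\alpha n$ during the search, so that \cref{prop:expander} remains applicable throughout -- this is where the hypothesis $|F|\le\gamma n$ with $\gamma=\frac{2\alpha}{d_0(1+0.5c\delta)}$ gets used. Part (ii) of \cref{lem:main} is then immediate: after $\ell=\lceil\log_{1-\epsilon_3}(\lfloor(d_0-1)/2\rfloor/(\gamma n))\rceil$ rounds of a $(1-\epsilon_3)$-contraction starting from $|F^0|\le\gamma n$, we reach $|F^\ell|\le\lfloor(d_0-1)/2\rfloor$, and since the corrupt variables are so few, each constraint sees at most $\lfloor(d_0-1)/2\rfloor$ errors (in fact one should check via expansion that the errors cannot all pile onto one constraint's neighborhood, but the crude bound $|F^\ell|\le\lfloor(d_0-1)/2\rfloor$ already suffices because any single constraint restricted to a radius-$\lfloor(d_0-1)/2\rfloor$ ball decodes correctly), so the final Decode loop recovers $y$ exactly.

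For \cref{lem:HardSearchtime}(i), the strategy is the standard ``cost proportional to current corruption'' accounting used in \cite{Dowling2018fast}. One EasyFlip invocation touches only constraints incident to $F$ plus $O(n)$ bookkeeping to initialize counters, but more carefully: running Decode on all $u\in R$ once costs $O(t_0\cdot cn/d)=O(n)$; however \emph{within} a DeepFlip chain, after the first EasyFlip we only need to re-examine constraints whose neighborhood changed, i.e. those incident to $S_{m_k}\subseteq F^{k-1}$, so each subsequent EasyFlip costs $O(|F^{k-1}|)$ rather than $O(n)$. Since $|F^k|\le(1+0.5c\delta)$-type bounded growth under pruning and the relevant path has length $\le s_0$ (a constant), one DeepFlip chain costs $O(n+|F|)$, and HardSearch runs $c^{s_0}=O(1)$ such chains, giving $O(n+|F|)$ total; I would need to be slightly careful that the ``$O(n)$'' one-time cost is incurred once per HardSearch call and not once per chain, which is arranged by computing the initial syndrome once and then resetting cheaply between chains. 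Part (ii) then sums the geometric series: the $i$-th HardSearch call has $|F^{i-1}|\le(1-\epsilon_3)^{i-1}\gamma n$, so the total is $\sum_{i=1}^{\ell}O(n+(1-\epsilon_3)^{i-1}\gamma n)$. The subtlety is that $\ell=\Theta(\log n)$, so $\sum O(n)=O(n\log n)$ naively -- the fix, exactly as in \cite{Dowling2018fast}, is that after the first HardSearch call the corruption has dropped below $\gamma n /\mathrm{poly}$, and the ``$O(n)$'' term should really be ``$O(n)$ only on the first call, $O(|F^{i-1}|)$ afterward'' once one observes that re-deriving the full syndrome at the start of HardSearch call $i\ge2$ can be avoided by passing along the syndrome incrementally; then the total collapses to $O(n)+\sum_{i\ge1}O((1-\epsilon_3)^{i-1}\gamma n)=O(n)$.

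The main obstacle I anticipate is the ``majority-corrupt $S_m$'' step inside EasyFlip -- precisely, showing that the flips distribute so that \emph{some} single level set $S_m$ is both large (comparable to $|F|$) and more than half corrupt. This requires simultaneously (a) a lower bound on the number of constraints in $A$ that send a \emph{correct} flip, which comes from $|N_{\le t}(F)|$ being large via \cref{prop:expander} combined with the fact that a constraint with $\le t-1\le\lfloor(d_0-1)/2\rfloor$ errors in its neighborhood decodes correctly; (b) an upper bound on the number of constraints in $B$ (wrong flips), which come only from constraints with $\ge t$ errors, whose count is bounded by $\frac{1}{t}|E(F,N_{\ge t}(F))|\le \frac{1}{t}(c|F|-|N_{\le t}(F)|)$; and (c) a pigeonhole over the $c$ possible counter values $m$ that converts the aggregate surplus of correct-over-wrong flips into a per-level-set surplus, which forces the threshold choices $\epsilon_1<\epsilon_0\delta^2/100$ etc. to line up so that $\epsilon_3>0$. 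Getting all inequalities to chain with room to spare -- and in particular verifying that the constants $\epsilon_0,\dots,\epsilon_4,\ell,s_0$ fixed just before \cref{sec:maindecode} are consistent with a strict $(1-\epsilon_3)$-contraction -- is the delicate bookkeeping that the rest of the section presumably carries out in full; everything downstream (the (c1)/(c2) translation and the running-time sums) is comparatively routine once that core estimate is in hand.
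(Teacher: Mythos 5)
Your running-time accounting follows the same strategy as the paper: a one-time $O(n)$ initialization, an $O(|F|)$ bound per {\rm DeepFlip} chain achieved by only re-examining constraints incident to the freshly flipped variables, a constant-factor ($c^{s_0}$) blow-up for {\rm HardSearch}, and an incremental-syndrome hand-off between successive {\rm HardSearch} calls so that the $O(n)$ term is paid once rather than $\ell=\Theta(\log n)$ times, after which the geometric series $\sum_i O((1-\epsilon_3)^{i}|F^0|)$ collapses to $O(n)$. The paper implements your ``reset cheaply between chains'' via an explicit difference vector $w$ with ${\rm wt}(w)=O(|F|)$ and re-uses it across {\rm HardSearch} calls, but the mechanism is the same. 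One small slip to fix: $S_{m_k}$ is \emph{not} a subset of $F^{k-1}$ (constraints in $B$ can flag uncorrupted variables); the bound you actually need is $|S_{m_k}|\le |A|+|B|\le c|F^{k-1}|$, which the paper records as inequality \eqref{ineq:|S_m|}, and likewise the per-step growth factor is $1+\tfrac{c}{d_0-t}$ from \cref{lem:inner03} rather than $1+0.5c\delta$. Neither affects the order of the estimate.
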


Assuming the correctness of the above two lemmas, we can prove \cref{thm:main} as follows.

\begin{proof}[Proof of \cref{thm:main}]
    Let $y\in T(G,C_0)$ be a codeword and $x\in\mathbb{F}_2^n$ be a corrupted vector. Let $F=\{i\in[n]:x_i\neq y_i\}$ be the set of corrupt variables of $x$ with respect to $y$. To prove the theorem, it suffices to show that as long as $|F|\le\gamma n$, \hyperref[alg:main]{MainDecode} finds $y$ correctly in linear time. We will analyze the following two cases:
    \begin{itemize}
        \item  If the algorithm returns $x^i$ for some $0\le i\le\ell-1$, then as $|U^i|=0$, we must have $x^i\in T(G,C_0)$. Let $F^i$ be the set of the corrupt variables of $x^i$. Then it follows by \cref{lem:main} (i) that $d(x^i,y)=|F^i|\le(1-\epsilon_3)^i |F|\le(1-\epsilon_3)^i\gamma n<d(T(G,C_0))$, which implies that $x^i=y$.
        \item If the algorithm does not return $x^i$ for any $0\le i\le\ell-1$, then it follows by \cref{lem:main} (ii) that $d(x^{\ell},y)\le\lfloor\frac{d_0-1}{2}\rfloor$. Therefore, one can find $y$ by running Decode for every $u\in R$.
    \end{itemize}
    Moreover, by \cref{lem:HardSearchtime} the running time of \hyperref[alg:main]{MainDecode} is $O(n)$, completing the proof of the theorem.
\end{proof}

The remaining part of this section is organized as follows. In \cref{sec:easyflip} below, we will introduce the basic building block of deterministic decoding -- \hyperref[alg:easyflip]{EasyFlip}, which also corresponds to items (a) and (b) in \cref{sec:idea}. In \cref{sec:deepflip} we will introduce the algorithm \hyperref[alg:deepflip]{DeepFlip}, which runs \hyperref[alg:easyflip]{EasyFlip} iteratively for a constant number of times. \hyperref[alg:deepflip]{DeepFlip} corresponds to item (c) in \cref{sec:idea}. In \cref{sec:hardsearch} we will introduce \hyperref[alg:hardsearch]{HardSearch}, which is designed by running \hyperref[alg:deepflip]{DeepFlip} thoroughly for all choices of $(m_1,\ldots,m_s)$ until the number of unsatisfied constraints is significantly reduced. The proofs of \cref{lem:main} and \cref{lem:HardSearchtime} are also presented in \cref{sec:hardsearch}.

\subsection{The basic building block of deterministic decoding -- EasyFlip}\label{sec:easyflip}

In this subsection, we will present the algorithm \hyperref[alg:easyflip]{EasyFlip} (see \cref{alg:easyflip} below), which is the basic building block of our deterministic decoding. It contains the following two parts:

\begin{itemize}
\item EasyFlip (i): in the first part (see steps 1-6 below), it invokes Decode for each constraint $u\in R$ and sends flips to some variables $v\in L$;
\item EasyFlip (ii): in the second part (see steps 7-11 below), it counts the number of flips received by each variable in $L$ and flips all variables that receive exactly $m$ flips.
\end{itemize}

\begin{algorithm}[htbp]
\caption{Flip all variables receiving exactly $m$ flips -- EasyFlip}
\label{alg:easyflip}
\begin{algorithmic}[1]
\REQUIRE $G$, $C_0$, $x\in\mathbbm{F}_2^n$, and $m\in[c]$
\ENSURE $x'\in\mathbbm{F}_2^n$
\FOR{every $u\in R$}
\STATE $\omega\gets \text{Decode}(x_{N(u)})$
\IF{$1\le d_H(\omega,x_{N(u)})\le t$}
\STATE send a ``flip'' to an arbitrary vertex $v\in N(u)$ with $\omega_v\neq x_v$
\ENDIF
\ENDFOR
\FOR{every $v\in L$}
\IF{$v$ receives exactly $m$ flips}
\STATE flip $x_v$
\ENDIF
\ENDFOR
\RETURN $x'\gets x$
\end{algorithmic}
\end{algorithm}

Our goal is to show that there must exist an integer $m\in[c]$ such that by flipping all variables $v\in L$ that receive exactly $m$ flips, one can reduce the number of corrupt variables in $x'$ by a $(1-\epsilon_3)$-fraction, as compared with $x$. Note that for this moment, it suffices to prove the existence of such an $m$ and we do not need to find it explicitly. In fact, later we will find the required $m$ by exhaustive search.

We make the discussion above precise by the following lemma.

\begin{lemma}\label{lem:easyflip}
  Let $x$ be the input vector of {\rm \hyperref[alg:easyflip]{EasyFlip}} and let $F$ be the set of corrupt variables of $x$. If $|F|\le\alpha n$, then there exists an integer $m\in[c]$ such that the following holds. Let $x'={\rm EasyFlip}(x,m)$ be the output vector of {\rm \hyperref[alg:easyflip]{EasyFlip}} and $F'$ be the set of corrupt variables of $x'$. Then $|F'|\le (1-\epsilon_3)|F|$.
\end{lemma}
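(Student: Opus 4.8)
The plan is to fix the codeword $y\in T(G,C_0)$ underlying $x$ and the corruption set $F:=\{i\in[n]:x_i\ne y_i\}$, to determine exactly which constraints send a flip and to which variable, and then to choose $m$ by a pigeonhole argument comparing the total number of flips received by corrupt variables with that received by correct ones. First I would classify the flipping constraints. Since $d_0>\frac3\delta-1$ and $t=\lfloor\frac1\delta\rfloor$, one checks $t\le\lfloor\frac{d_0-1}{2}\rfloor$, so any constraint $u$ with $1\le|N(u)\cap F|\le t$ decodes correctly (i.e. $\omega=y_{N(u)}$ in \cref{alg:easyflip}), has $d_H(\omega,x_{N(u)})=|N(u)\cap F|\in[1,t]$, and therefore sends exactly one flip, to a vertex of $N(u)\cap F$. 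Write $A$ for this set of constraints, so that $A=N_{\le t}(F)$ and every $A$-flip hits a corrupt variable. If a constraint $u$ sends a flip but decodes incorrectly (write $B$ for this set), then $d_0\le d_H(\omega,y_{N(u)})\le d_H(\omega,x_{N(u)})+|N(u)\cap F|\le t+|N(u)\cap F|$, whence $B\subseteq N_{\ge d_0-t}(F)$. As each constraint sends at most one flip, the total number of flips equals $|A|+|B|$.

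The heart of the argument consists of two estimates on $|A|$ and $|B|$; set $\beta:=\frac{\delta(t+1)-1}{t}$ and $R:=\frac{\delta(d_0-t)-1}{1-\delta}$. \cref{prop:expander} gives $|A|=|N_{\le t}(F)|\ge\beta c|F|$. For the ratio, dividing the lower bound on $|A|$ by the upper bound on $|B|$ is too lossy (that quotient can fall below $1$); instead I would minimize $|N_{\le t}(F)|/|N_{\ge d_0-t}(F)|$ over all nonnegative degree sequences $(|N_i(F)|)_i$ with $\sum_i|N_i(F)|\ge\delta c|F|$ and $\sum_i i\,|N_i(F)|=c|F|$. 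The minimum is attained by putting all of $N(F)$ on degrees $1$ and $d_0-t$ and equals $R$, so $|A|/|B|\ge R$. A short computation then gives $\frac{R}{R+1}-\frac12=\frac12-\frac{1-\delta}{\delta(d_0-t-1)}$; the hypothesis $d_0>\frac3\delta-1+2\epsilon_0$ (with $t\le\frac1\delta$) forces $d_0-t-1>\frac2\delta-2+2\epsilon_0$, so $\frac{R}{R+1}-\frac12>\frac{\epsilon_0\delta}{2(1-\delta(1-\epsilon_0))}$, and this, together with $\epsilon_1<\frac{\epsilon_0\delta^2}{100}$ and $\epsilon_0$ chosen small, yields $\frac{R}{R+1}-\mu\ge\epsilon_1$, where $\mu:=(1-\epsilon_1)(\frac12+\frac{\epsilon_0\delta^2}{2})$ (note that $2\mu-1$ is exactly the factor in $\epsilon_3=\epsilon_2(2\mu-1)$).

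To conclude, for $m\in[c]$ let $S_m$ be the set of variables receiving exactly $m$ flips, $c_m:=|S_m\cap F|$ and $w_m:=|S_m\setminus F|$; since ${\rm EasyFlip}(x,m)$ flips precisely $S_m$, its corruption set has size $|F|-c_m+w_m$, so it suffices to exhibit $m$ with $c_m-w_m\ge\epsilon_3|F|$. Counting $(\text{variable},\text{flip})$ incidences, $\sum_m m\,|S_m|=|A|+|B|\ge\beta c|F|$, whereas $\sum_m m\,c_m\ge|A|$ since all $A$-flips land on $F$; combined with $|A|\ge R|B|$ this gives $\sum_m m\,c_m\ge\frac{R}{R+1}\sum_m m\,|S_m|$. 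Put $M:=\{m\in[c]:|S_m|\ge\epsilon_2|F|\}$ and suppose, towards a contradiction, that $c_m<\mu|S_m|$ for every $m\in M$. Using $\sum_{m\notin M}m\,c_m\le c\sum_{m\notin M}|S_m|<c^2\epsilon_2|F|<c\epsilon_1\beta|F|$ (this is where $\epsilon_2=\frac{\epsilon_1\beta}{c+1}$ enters), we obtain $\frac{R}{R+1}\sum_m m\,|S_m|\le\sum_m m\,c_m<\mu\sum_m m\,|S_m|+c\epsilon_1\beta|F|$, hence $(\frac{R}{R+1}-\mu)\beta c|F|\le(\frac{R}{R+1}-\mu)\sum_m m\,|S_m|<c\epsilon_1\beta|F|$ and so $\frac{R}{R+1}-\mu<\epsilon_1$, contradicting the previous paragraph (and in particular forcing $M\ne\emptyset$). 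Thus some $m\in M$ satisfies $c_m\ge\mu|S_m|$, and then $c_m-w_m=2c_m-|S_m|\ge(2\mu-1)|S_m|\ge(2\mu-1)\epsilon_2|F|=\epsilon_3|F|$, as desired.

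I expect the main obstacle to be the joint bound $|A|/|B|\ge R$ in the second step: one has to pin down the extremal degree distribution of $N(F)$ exactly and then verify that the resulting gap $R-1$ dominates the slack $\approx\epsilon_0\delta^2$ coming from the threshold $\mu$. This is the only place where the hypothesis $\delta d_0=\Omega(1)$ (in the form $d_0>\frac3\delta-1$) is genuinely used, and the only step where the constants $\epsilon_0,\epsilon_1,\epsilon_2,\epsilon_3$ must be carefully balanced against one another.
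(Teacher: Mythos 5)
Your proof is correct and follows essentially the same strategy as the paper's: classify the flipping constraints into $A=N_{\le t}(F)$ (all of whose flips land on $F$) and $B\subseteq N_{\ge d_0-t}(F)$, establish a quantitative imbalance between them via a weighted count over the degree classes of $N(F)$, and pigeonhole over the flip-count buckets $S_m$ using $|A|\ge\beta c|F|$ to find one that is both large and majority-corrupt. The one real variation is in how the imbalance is packaged: you prove a ratio bound $|A|/|B|\ge R=\frac{\delta(d_0-t)-1}{1-\delta}$ (weight $1/\delta$ in the count), hence $\frac{|A|}{|A|+|B|}\ge\frac{R}{R+1}$, whereas the paper proves a difference bound $|A|-|B|\ge\epsilon_0\delta^2 c|F|$ (weight $1/\delta+\epsilon_0$) and combines it with $|A|+|B|\le c|F|$ to get $\frac{|A|}{|A|+|B|}\ge\frac12+\frac{\epsilon_0\delta^2}{2}$. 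The paper's calibration makes this bound slot directly into the definition of $\mu=(1-\epsilon_1)(\frac12+\frac{\epsilon_0\delta^2}{2})$ with no further arithmetic, whereas your route needs the separate numerical check $\frac{R}{R+1}-\mu\ge\epsilon_1$, which you sketch and which does hold under the paper's constraints on $\epsilon_0,\epsilon_1$. One remark on rigor: the linear-programming claim that $|A|/|B|$ is minimized by putting all of $N(F)$ on degrees $1$ and $d_0-t$ is stated without proof; in a complete write-up it is cleanest to replace it with the direct count, namely multiply $|N(F)|\ge\delta c|F|$ by $1/\delta$, subtract $\sum_k k|N_k(F)|=c|F|$, discard the nonpositive terms for degrees $k$ with $t<k<d_0-t$, and bound the two tails to obtain $(\tfrac1\delta-1)|A|\ge(d_0-t-\tfrac1\delta)|N_{\ge d_0-t}(F)|\ge(d_0-t-\tfrac1\delta)|B|$, i.e.\ $|A|/|B|\ge R$ (with the edge case $\delta=1$ forcing $|B|=0$).
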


The next lemma shows that \hyperref[alg:easyflip]{EasyFlip} runs in linear time.

\begin{lemma}\label{lem:easyflip-time}
    If $|F|\le\alpha n$, then the running time of {\rm \hyperref[alg:easyflip]{EasyFlip}} is at most $O(n+|F|)$ ,  where the hidden constant depends only on $t_0,c,d$.
\end{lemma}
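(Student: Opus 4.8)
\textbf{Proof plan for \cref{lem:easyflip-time}.}

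The plan is to account for the running time of \hyperref[alg:easyflip]{EasyFlip} in two stages, matching the two parts of the algorithm, and to show that each stage costs $O(n+|F|)$.

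First I would handle the first \textbf{for} loop (steps 1--6). Here the algorithm runs $\text{Decode}(x_{N(u)})$ for every $u\in R$; since $|R|=cn/d$ and each invocation of $\text{Decode}$ takes $t_0$ time (a constant depending only on $C_0$), this contributes $\frac{cn}{d}\cdot t_0=O(n)$ time. Computing $d_H(\omega,x_{N(u)})$ and picking an arbitrary $v\in N(u)$ with $\omega_v\neq x_v$ both take $O(d)$ time per constraint, adding another $O(n)$. So the first loop is $O(n)$. Note that this is $O(n)$ rather than $O(n+|F|)$, which is fine since $O(n)\le O(n+|F|)$; I would nonetheless keep the finer bookkeeping in mind for the overall \hyperref[alg:hardsearch]{HardSearch} analysis where the $|F|$ term will matter.

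The more delicate point is the second \textbf{for} loop (steps 7--11), where we must \emph{not} afford to iterate over all of $L$, because that would cost $\Omega(n)$ every round and could ruin the geometric-sum argument used later in \cref{lem:HardSearchtime}. The key observation is that only variables that actually \emph{receive} a flip need to be examined: a variable $v$ receives a flip only if it lies in $N(u)$ for some constraint $u$ that sent a flip in step 4. The total number of flips sent is at most $|R|$, but more usefully, a flip is sent by $u$ only if $1\le d_H(\text{Decode}(x_{N(u)}),x_{N(u)})\le t$, which (since $t\le\lfloor\frac{d_0-1}{2}\rfloor$ and $\text{Decode}$ returns the nearest codeword) forces $x_{N(u)}\neq y_{N(u)}$, i.e.\ $u$ has a neighbor in $F$; hence such $u$ lies in $N(F)$, and there are at most $c|F|$ of them. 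Each contributes one flip to one variable in $L$, and that variable lies in $N(N(F))$, a set of size at most $cd|F|$. So, implementing the counters $\tau_v$ via a list of touched variables rather than a full pass over $L$, the second loop visits at most $c|F|$ flip-messages and updates at most $cd|F|$ counters, for $O(c|F|)=O(|F|)$ time; initializing and reading off these counters is done lazily in the same bound. I would also remark that producing the output vector $x'$ need not require copying all $n$ coordinates if we maintain $x$ in place and only record the flipped coordinates — but since the statement allows an $O(n)$ term anyway, I can simply bound the output-handling by $O(n)$.

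Combining the two stages gives total running time $O(n)+O(|F|)=O(n+|F|)$, with the hidden constant depending only on $t_0$, $c$, and $d$ (the first loop contributes the $t_0,c,d$ dependence; the second contributes only $c,d$). The main obstacle, and the only place requiring genuine care, is the second bullet: one must be sure the implementation does not scan all of $L$, and one must verify the bound ``only variables in $N(N(F))$ can receive flips,'' which rests on the fact that a constraint sending a flip must be unsatisfied and hence adjacent to $F$. Once that is in place the rest is routine bookkeeping.
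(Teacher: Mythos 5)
Your proof is correct, and the analysis of the first loop coincides with the paper's: $O(n)$ from $|R|=cn/d$ invocations of Decode plus $O(d)$ per constraint for the Hamming-distance computation and flip selection.

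For the second loop, though, you take a genuinely finer route than the paper does. You observe that a flip-sending constraint is necessarily unsatisfied, hence in $N(F)$, so only $O(c|F|)$ flips are sent and (with lazy/touched-list bookkeeping) the counter updates cost only $O(|F|)$. The paper is less careful here: it simply charges $O(cn)$ for initializing and updating counters $\tau_v$ over all $v\in L$ plus $O(|S_m|)$ for the actual flipping, then adds everything to get $O(n+|F|)$. Your argument tightens the second loop from $O(n)$ to $O(|F|)$, but this does not change the final bound because the Decode pass over all of $R$ in the first loop still contributes an unavoidable $O(n)$ term. So the refinement buys nothing for this particular lemma.

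Where your intuition is slightly off is the stated motivation: you worry that an $O(n)$ pass in the second loop ``could ruin the geometric-sum argument used later.'' In fact the paper accepts the $O(n)$ cost here and deals with the amortization issue elsewhere, namely in the proof of \cref{lem:deepflip-time}: the $\{z^u\}$ and $\{\tau_v\}$ data structures are maintained \emph{across} EasyFlip invocations inside DeepFlip, so the $k$-th invocation (for $k\ge 2$) only touches $N(S_{m_{k-1}})$ and costs $O(|F^{k-1}|)$ without any fresh full pass over $L$ or $R$. Your per-invocation refinement of the second loop would not by itself yield that stronger guarantee, because Decode over all of $R$ in the first loop would still dominate at $O(n)$. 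So the paper's amortization technique and yours address different costs, and the paper's is the one actually needed downstream. Nevertheless, as a self-contained proof of exactly this lemma, your argument is valid and the bound $O(n+|F|)$ with constants depending only on $t_0,c,d$ follows.
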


\subsubsection{Proof of \texorpdfstring{\cref{lem:easyflip}}{}}

Let us first introduce some notation and easy inequalities. Let $y\in T(G,C_0)$ be the correct codeword that we want to decode from $x$. Let $A$ be the set of constraints $u\in R$ that sends a flip and \text{Decode}$(x_{N(u)})$ computes the correct codeword in $C_0$ (that is, \text{Decode}$(x_{N(u)})=y_{N(u)}$). Similarly, let $B$ be the set of constraints $u\in R$ that sends a flip and \text{Decode}$(x_{N(u)})$ computes an incorrect codeword in $C_0$ (i.e., \text{Decode}$(x_{N(u)})\neq y_{N(u)}$).

By the definitions of $A$ and $N_{\le t}(F)$, it is easy to see that
\begin{align}\label{eq:def-A}
    A=\{u\in R:1\le|N(u)\cap F|\le t\}=N_{\le t}(F).
\end{align}
Therefore, it follows by \eqref{eq:def-A} and \cref{prop:expander} that
\begin{align}\label{ineq:|A|}
    |A|\ge\frac{\delta(t+1)-1}{t}\cdot c|F|.
\end{align}
Moreover, since a constraint $u\in R$ computes an incorrect codeword in $C_0$ only if it sees at least $d_0-t$ corrupt variables in its neighbors (recall that $d(C_0)\ge d_0$), we have that
\begin{align}\label{eq:def-B}
    B=\{u\in R:|N(u)\cap F|\ge d_0-t\text{ and }\exists\text{ }\omega\in C_0\text{ s.t. } 1\le d_H(\omega,x_{N(u)})\le t\}\subseteq N_{\ge d_0-t}(F).
\end{align}
By counting the number of edges between $F$ and $N(F)$, we see that
\begin{align*}
    (d_0-t)|B|\le|E(F,B)|\le|E(F,N(F))|=c|F|,
\end{align*}
which implies that
\begin{align}\label{ineq:|B|}
    |B|\le\frac{c|F|}{d_0-t}.
\end{align}

Consider the following two equalities,
\begin{align*}
    \sum_{k = 1}^d k \cdot |N_k(F)| =  c |F|~~~~\text{and}~~~~\sum_{k = 1}^d |N_k(F)| \ = \ |N(F)| \ \ge \ \delta c |F|.
\end{align*}

By multiplying the second by $\frac{1}{\delta}+\epsilon_0$ and subtracting the first one, we have
\begin{align*}
    \sum_{k = 1}^t (\frac{1}{\delta} + \epsilon_0 - k) |N_k(F)| - \sum_{k = t+1}^d (k - \frac{1}{\delta} - \epsilon_0) |N_k(F)| \ge \left( \left( \frac{1}{\delta} + \epsilon_0 \right) \delta - 1\right) c |F|\ge \epsilon_0 \delta c |F|,
\end{align*}
Moreover, it follows by \eqref{eq:def-A} and \eqref{eq:def-B} that
\begin{align*}
    &\sum_{k = 1}^t (\frac{1}{\delta} + \epsilon_0 - k) |N_k(F)| - \sum_{k = t+1}^d (k - \frac{1}{\delta} - \epsilon_0) |N_k(F)| \\
    &\le\sum_{k = 1}^t (\frac{1}{\delta} + \epsilon_0 - k) |N_k(F)| - \sum_{k = d_0-t}^d (k - \frac{1}{\delta} - \epsilon_0) |N_k(F)| \\
    &\le (\frac{1}{\delta} + \epsilon_0 - 1) |N_{\le t}(F)| - (d_0 - t - \frac{1}{\delta} - \epsilon_0)|N_{\ge d_0-t}(F)|\\
    &\le (\frac{1}{\delta} + \epsilon_0 - 1) |A| - (d_0 - t - \frac{1}{\delta} - \epsilon_0)|B|.
\end{align*}

As $d_0 > \frac{3}{\delta} - 1 + 2\epsilon_0$ and $t = \lfloor \frac{1}{\delta} \rfloor$, we have $d_0 - t - \frac{1}{\delta} - \epsilon_0 > \frac{1}{\delta} + \epsilon_0 - 1$. Combining the above two inequalities, one can infer that
\begin{align*}
  \epsilon_0 \delta c |F| \le (\frac{1}{\delta} + \epsilon_0 - 1) |A| - (d_0 - t - \frac{1}{\delta} - \epsilon_0)|B| \le (\frac{1}{\delta} + \epsilon_0 - 1) (|A| - |B|) \le \frac{1}{\delta} (|A| - |B|) ,
\end{align*}
which implies that
\begin{align}\label{ineq:compare_A_and_B}
    |A| - |B| \ge \epsilon_0 \delta^2 c |F|.
\end{align}
On the other hand, since $A$ and $B$ are disjoint subsets of $N(F)$, we have that
\begin{align}\label{ineq:trivial_upper_bound_of_A+B}
    |A| + |B| \le |N(F)|\le c |F|.
\end{align}

For every integer $m\in[c]$, let $S_m$ be the set of variables in $L$ that receive exactly $m$ flips. Then the variables in $S_m$ receive a total number of $m|S_m|$ flips. In \hyperref[alg:easyflip]{EasyFlip}, every constraint in $A\cup B$ sends exactly one flip to $L$. The total number of flips sent by constraints in $R$ and received by variables in $L$ is exactly
\begin{align}\label{eq:|A|+|B|}
    |A|+|B|=\sum_{m=1}^c m|S_m|.
\end{align}

Let $Z$ be the set of correct variables that receive at least one flip, i.e., $Z=\left(\cup_{m=1}^cS_m\right)\setminus F$. Observe that the set $F'$ of corrupt variables in the output vector $x'$ consists of corrupt variables not flipped by \hyperref[alg:easyflip]{EasyFlip}, which is $F\setminus S_m$, and correct variables that are erroneously flipped by \hyperref[alg:easyflip]{EasyFlip}, which is $S_m\cap Z$. Therefore,
\begin{align}\label{eq:F'}
    F'=(F\setminus S_m)\cup(S_m\cap Z).
\end{align}

Let $\alpha_m$ be the fraction of corrupt variables in $S_m$. Then we have that
\begin{align}\label{eq:def-alpha}
 \alpha_m=\frac{|S_m\cap F|}{|S_m|}~~\text{and}~~1-\alpha_m=\frac{|S_m\cap Z|}{|S_m|}.
\end{align}
Let $\beta_m$ denote the fraction of flips sent from $A$ to $S_m$ among all flips received by $S_m$, i.e.,
\begin{align}\label{eq:def-beta}
    \beta_m=\frac{\text{the number of flips sent from $A$ to $S_m$}}{m|S_m|}.
\end{align}

The following inequality is crucial in the analysis of \hyperref[alg:easyflip]{EasyFlip}.

\begin{claim}\label{lem:alpha}
For every $m\in[c]$, $\alpha_{m}\ge\beta_{m}$.
\end{claim}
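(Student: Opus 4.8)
The plan is to exploit one clean structural observation: every flip emitted by a constraint in $A$ necessarily lands on a \emph{corrupt} variable. Indeed, if $u\in A$ then by definition $\omega:={\rm Decode}(x_{N(u)})=y_{N(u)}$, and in step $4$ of \hyperref[alg:easyflip]{EasyFlip} the flip sent by $u$ goes to some $v\in N(u)$ with $\omega_v\neq x_v$, i.e.\ with $y_v\neq x_v$; hence $v\in F$. So no flip originating from $A$ can ever reach a correct variable. This is really the only idea needed; everything else is bookkeeping.

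First I would fix $m\in[c]$ with $S_m\neq\emptyset$ (the case $S_m=\emptyset$ being vacuous, or excluded by convention since it plays no role in the later use of the claim). By the observation above, all flips sent from $A$ to $S_m$ are in fact received by variables in $S_m\cap F$. On the other hand, by the definition of $S_m$, every variable in $S_m$ — in particular every variable of $S_m\cap F$ — receives exactly $m$ flips in total, counting flips from both $A$ and $B$. Hence the total number of flips (from anywhere) absorbed by the vertices of $S_m\cap F$ equals $m\,|S_m\cap F|$, and in particular the number of flips sent from $A$ to $S_m$ is at most $m\,|S_m\cap F|$.

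Dividing this inequality by $m\,|S_m|$ and recalling the definitions \eqref{eq:def-alpha} and \eqref{eq:def-beta}, we obtain
\[
\beta_m=\frac{\text{number of flips sent from }A\text{ to }S_m}{m|S_m|}\le\frac{m\,|S_m\cap F|}{m\,|S_m|}=\frac{|S_m\cap F|}{|S_m|}=\alpha_m,
\]
which is exactly the assertion of \cref{lem:alpha}. There is no genuine obstacle here: the ``hard part'', such as it is, is simply noticing that correct local decoding at $u\in A$ forces its flip into $F$, together with the fact that each vertex of $S_m$ absorbs exactly $m$ flips by construction; both are immediate from the definitions, so the claim follows by a one-line counting estimate. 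This inequality is precisely the leverage that will later let us argue (in the proof of \cref{lem:easyflip}) that some $S_m$ contains a strict majority of corrupt variables.
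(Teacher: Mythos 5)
Your proof is correct and uses essentially the same key observation as the paper's: constraints in $A$ compute the correct local codeword, so every flip they emit lands on a corrupt variable. The paper phrases the counting via the complementary sets (flips hitting $S_m\setminus F$ must come from $B$), but this is just the contrapositive bookkeeping of your inequality and yields the same one-line estimate.
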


\begin{proof}
As every variable in $S_m$ receives the same number of $m$ flips, by \eqref{eq:def-alpha} the number of flips received by $S_m\setminus F$ is $(1-\alpha_m)m|S_m|$. Moreover, by \eqref{eq:def-beta} the number of flips sent from $B$ to $S_m$ is $(1-\beta_m)m|S_m|$. Since the constraints in $A$ always compute the correct codewords in $C_0$, they always send correct flips to their neighbors in $L$. Therefore, the flips received by $S_m\setminus F$ (which are the wrong flips) must be sent by $B$, which implies that
    \begin{align*}
        (1-\alpha_m)m|S_m|\le(1-\beta_m)m|S_m|,
    \end{align*}
where the inequality follows from the fact that $B$ could also send flips to $S_m\cap F$ (which are the correct flips). Thus, $\alpha_m\ge\beta_m$, as needed.
\end{proof}

The following result shows that there exists an integer $m\in[c]$ such that there exists a \textit{large} set $S_m$ that contains \textit{many} corrupt variables.

\begin{claim}\label{lem:inner01}
If $|F|\le\alpha n$, then there exists an integer $m\in[c]$ such that $\alpha_{m}\ge(1-\epsilon_1)\frac{|A|}{|A|+|B|}$ and $|S_{m}|\ge \epsilon_2|F|$.
\end{claim}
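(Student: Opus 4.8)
The plan is to argue by contradiction: suppose that for every $m\in[c]$, whenever $|S_m|\ge\epsilon_2|F|$ we have $\alpha_m<(1-\epsilon_1)\frac{|A|}{|A|+|B|}$, and also handle the small sets $S_m$ (with $|S_m|<\epsilon_2|F|$) separately. The point is to estimate the total number of \emph{correct} flips received by $L$, namely $|A|=\sum_{m=1}^c\beta_m m|S_m|$ (using that each $u\in A$ sends exactly one flip, which is correct, so $A$'s flips split among the $S_m$'s), and to bound this from above. First I would split the index set $[c]$ into $\mathcal{L}=\{m:|S_m|\ge\epsilon_2|F|\}$ (the ``large'' classes) and $\mathcal{S}=\{m:|S_m|<\epsilon_2|F|\}$ (the ``small'' classes). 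The contribution of the small classes to the total flip count is at most $\sum_{m\in\mathcal{S}}m|S_m|<c\cdot c\cdot\epsilon_2|F|=c^2\epsilon_2|F|$, and by the choice $\epsilon_2=\frac{\epsilon_1}{c+1}\cdot\frac{\delta(t+1)-1}{t}$ together with the lower bound $|A|\ge\frac{\delta(t+1)-1}{t}c|F|$ from \eqref{ineq:|A|}, this is a genuinely small fraction of $|A|$ — roughly $\frac{c\epsilon_1}{c+1}|A|$, which is strictly less than $\epsilon_1|A|$.

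Next, for the large classes, I would use Claim~\ref{lem:alpha} ($\alpha_m\ge\beta_m$) together with the contradiction hypothesis $\alpha_m<(1-\epsilon_1)\frac{|A|}{|A|+|B|}$ to bound $\beta_m<(1-\epsilon_1)\frac{|A|}{|A|+|B|}$ for all $m\in\mathcal{L}$. Then the number of correct flips landing in large classes is
\[
\sum_{m\in\mathcal{L}}\beta_m m|S_m|<(1-\epsilon_1)\frac{|A|}{|A|+|B|}\sum_{m\in\mathcal{L}}m|S_m|\le(1-\epsilon_1)\frac{|A|}{|A|+|B|}\cdot(|A|+|B|)=(1-\epsilon_1)|A|,
\]
where the middle inequality uses $\sum_{m=1}^c m|S_m|=|A|+|B|$ from \eqref{eq:|A|+|B|}. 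Adding the small-class contribution, the total number of correct flips received by $L$ is strictly less than $(1-\epsilon_1)|A|+\epsilon_1|A|=|A|$ (after checking the small-class bound is $\le\epsilon_1\cdot\frac{|A|}{|A|+|B|}\cdot$(something), or more crudely $\le\epsilon_1|A|$ with room to spare). But every constraint in $A$ sends exactly one correct flip, so the total number of correct flips received is exactly $|A|$ — a contradiction.

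I expect the main obstacle to be bookkeeping rather than conceptual: one must be careful about which quantity plays the role of ``correct flips'' and confirm that $\sum_{m}\beta_m m|S_m|$ really equals $|A|$ (a correct flip from $u\in A$ is received by some variable $v$, and $v$ receives exactly $m$ flips for some $m$, so it belongs to some $S_m$ — every correct flip is counted exactly once), and one must make sure the small-class slack $c^2\epsilon_2|F|$ is correctly compared against $|A|$ via \eqref{ineq:|A|} and the definition of $\epsilon_2$, so that the two error terms combine to something strictly below $|A|$. A secondary subtlety is that the contradiction hypothesis is applied only to large classes, so the small classes must be absorbed into the $\epsilon_1$ budget — this is exactly why $\epsilon_2$ carries the factor $\frac{1}{c+1}$ (leaving one ``unit'' of the $\epsilon_1$ budget for the large classes and using up at most $\frac{c}{c+1}$ of it on the small ones). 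Once the split is set up correctly the inequalities are routine. Note also that the hypothesis $|F|\le\alpha n$ is used only to guarantee $F$ satisfies the expander bound \eqref{ineq:|A|}, via Proposition~\ref{prop:expander}.
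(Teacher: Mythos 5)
Your argument is correct and follows the same route as the paper's: both start from $|A|=\sum_{m=1}^c\beta_m m|S_m|$, invoke $\beta_m\le\alpha_m$ (Claim~\ref{lem:alpha}), use the contradiction hypothesis to bound each term, and close by comparing against the lower bound~\eqref{ineq:|A|}. The only cosmetic difference is that you explicitly partition $[c]$ into large and small classes and bound the two pieces separately (with the slightly cruder constant $c^2\epsilon_2|F|$ for the small classes rather than the paper's $\tfrac{c(c+1)}{2}\epsilon_2|F|$), whereas the paper absorbs both cases into a single termwise bound $\alpha_m m|S_m|<(1-\epsilon_1)\tfrac{|A|}{|A|+|B|}\,m|S_m|+m\epsilon_2|F|$; either way the arithmetic against $\epsilon_2=\tfrac{\epsilon_1}{c+1}\cdot\tfrac{\delta(t+1)-1}{t}$ yields the desired contradiction.
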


\begin{proof}
Suppose for the sake of contradiction that for every $m\in[c]$, we have either $\alpha_m<(1-\epsilon_1)\frac{|A|}{|A|+|B|}$ or $|S_m|<\epsilon_2|F|$. Then, by counting the number of flips sent from $A$ to $L$ (which is exactly $|A|$), we have that
\begin{align*}
    |A|&=\sum\limits_{m=1}^c\beta_m m|S_m|\le\sum\limits_{m=1}^c\alpha_m m|S_m|\\
    &<(1-\epsilon_1)\frac{|A|}{|A|+|B|}\sum\limits_{m=1}^cm|S_m|+\sum\limits_{m=1}^cm\epsilon_2|F|\\
    &=(1-\epsilon_1)|A|+\frac{c(c+1)}{2}\cdot\epsilon_2|F|,
\end{align*}
where the first inequality follows from \cref{lem:alpha}, the second inequality follows from our assumption on $\alpha_m$ and $|S_m|$, and the last equality follows from \eqref{eq:|A|+|B|}.

Rearranging gives that
\begin{align*}
    |A|<\frac{\epsilon_2(c+1)}{2\epsilon_1}\cdot c|F|=\frac{\delta(t+1)-1}{2t}\cdot c|F|,
\end{align*}
contradicting \eqref{ineq:|A|}.
\end{proof}

Next, we will show that by flipping all the variables in $S_m$, where $m$ satisfies the conclusion of \cref{lem:inner01}, one can reduce the size of the set of corrupt variables by a $(1-\epsilon_3)$-fraction, thereby proving \cref{lem:easyflip}.

\begin{proof}[Proof of \cref{lem:easyflip}]
    Let $m\in[c]$ satisfy the conclusion of \cref{lem:inner01}. 
    Combining the two inequalities \eqref{ineq:compare_A_and_B} and \eqref{ineq:trivial_upper_bound_of_A+B}, one can infer that
    \begin{align}\label{ineq:|A|/(|A|+|B|)}
        \frac{|A|}{|A|+|B|} =\frac{1}{2}+ \frac{|A|-|B|}{2(|A|+|B|)} \ge \frac{1}{2}+ +\frac{ \epsilon_0 \delta^2 c|F|}{2c|F|} = \frac{1}{2} + \frac{\epsilon_0 \delta^2}{2}.
    \end{align}
    Therefore, it follows by \eqref{ineq:|A|/(|A|+|B|)} that
\begin{align}\label{ineq:|A||B|}
    \alpha_m\ge(1-\epsilon_1)\frac{|A|}{|A|+|B|} \ge (1-\epsilon_1) \left(\frac{1}{2} + \frac{\epsilon_0 \delta^2}{2}\right) .
\end{align}
It follows by \eqref{eq:F'} that
\begin{align*}
    |F'|&=(|F|-|S_m\cap F|)+|S_m\cap Z|\\
    &=|F|-(2\alpha_m-1)|S_m|\\
    &\le|F|-\left(2(1-\epsilon_1)\left(\frac{1}{2}+\frac{\epsilon_0 \delta^2}{2}\right)-1\right)|S_m| \\
    &=|F|-(\epsilon_3/\epsilon_2)|S_m|\le|F|-\epsilon_3|F|,
\end{align*}
as needed, where the second equality follows by \eqref{eq:def-alpha}, the first inequality follows by \eqref{ineq:|A||B|}, the last equality follows by the definition of $\epsilon_3$ and the last inequality follows by \cref{lem:inner01}.
\end{proof}

We will conclude by the following inequality, which shows that for an arbitrary $m\in [c]$, flipping $S_m$ would not significantly increase the number of corrupt variables.

\begin{claim}\label{lem:inner03}
    For arbitrary $x\in\mathbb{F}_2^n$ and $m\in[c]$, let $x':={\rm EasyFlip}(x,m)$. Let $F$ and $F'$ be the sets of corrupt variables of $x$ and $x'$, respectively. Then $|F'|\le (1+\frac{c}{d_0-t})|F|$.
\end{claim}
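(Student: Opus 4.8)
The plan is to bound the number of \emph{new} corrupt variables created by flipping $S_m$, and to observe that $S_m \subseteq \cup_{k=1}^c S_k$ receives all its flips from the constraints in $A \cup B$. First I would note that, exactly as in the proof of \cref{lem:easyflip}, we have $F' = (F \setminus S_m) \cup (S_m \cap Z)$, where $Z = (\cup_{k=1}^c S_k) \setminus F$ is the set of correct variables receiving at least one flip; hence $|F'| \le |F| + |S_m \cap Z| \le |F| + |Z|$. So it suffices to show $|Z| \le \frac{c}{d_0 - t}|F|$.

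The key step is that every flip received by a variable in $Z$ must originate from a constraint in $B$ (a constraint computing an incorrect codeword), because every constraint in $A$ computes the correct codeword $y_{N(u)}$ and therefore only sends flips to variables in $F$. Since each constraint in $A \cup B$ sends exactly one flip, the number of flips landing in $Z$ is at most $|B|$, and since each variable in $Z$ receives at least one flip, $|Z| \le |B|$. Now invoke the bound \eqref{ineq:|B|}, namely $|B| \le \frac{c|F|}{d_0 - t}$, which was derived purely by double counting edges between $F$ and $B \subseteq N_{\ge d_0 - t}(F)$ and does \emph{not} require $|F| \le \alpha n$. Combining, $|F'| \le |F| + |B| \le (1 + \frac{c}{d_0 - t})|F|$, as claimed.

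I do not expect a genuine obstacle here; the only point requiring a little care is that this claim is asserted for an \emph{arbitrary} input vector $x$, with no size restriction on $F$, so I must make sure every inequality I cite is unconditional. The bound \eqref{ineq:|B|} and the decomposition \eqref{eq:F'} both hold without the expansion hypothesis being invoked, so the argument goes through verbatim. (If one wanted to be slightly more careful about the inclusion $B \subseteq N_{\ge d_0 - t}(F)$ when $d_0 - t \le 0$, one notes $t = \lfloor 1/\delta \rfloor < d_0$ follows from $\delta d_0 > 3 > 1$, so $d_0 - t \ge 1$ and the division is legitimate.)
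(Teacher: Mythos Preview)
Your proposal is correct and follows essentially the same approach as the paper's proof: both argue that only constraints in $B$ can send wrong flips, so the number of newly created corrupt variables is at most $|B|$, and then apply \eqref{ineq:|B|} to conclude. Your version is slightly more explicit in tracking $Z$ and $S_m\cap Z$, and your remark that neither \eqref{eq:F'} nor \eqref{ineq:|B|} depends on the expansion hypothesis (so the claim holds for arbitrary $x$) is a useful clarification that the paper leaves implicit.
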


\begin{proof}
    Since the constraints in $A$ always compute the correct codewords in $C_0$, they always send correct flips to their neighbors in $L$. Therefore, the wrong flips must be sent by $B$. Therefore, in the worst case (i.e., assuming that $A=\emptyset$), we have that
    \begin{align*}
        |F'|\le |F|+|B|\le \left(1+\frac{c}{d_0-t}\right)|F|,
    \end{align*}
    where the second inequality follows from \eqref{ineq:|B|}.
\end{proof}

\subsubsection{Proof of \texorpdfstring{\cref{lem:easyflip-time}}{}}

Let $U=\{u\in R:x_{N(u)}\notin C_0\}$ be the set of unsatisfied constraints with respect to $x$. The following inequalities will be useful. First, it follows from \eqref{eq:def-A} and \eqref{eq:def-B} that $A\cup B\subseteq U\subseteq N(F)$. As $A\cap B=\emptyset$, we have that
    \begin{align}\label{ineq:|A|+|B|}
       |A|+|B|\le |U|\le |N(F)|\le c|F|.
    \end{align}
Second, it follows by \eqref{eq:|A|+|B|} that
\begin{align}\label{ineq:|S_m|}
        |S_m|\le\frac{|A|+|B|}{m}\le c|F|.
    \end{align}

Recall that we divided \hyperref[alg:easyflip]{EasyFlip} roughly into two parts, according to the discussion above \cref{alg:easyflip}.  We will compute the running time of these two parts separately.

\paragraph{EasyFlip (i):}
\begin{itemize}
    \item For each constraint $u\in R$, invoking Decode$(x_{N(u)})$ requires $t_0$ time, and computing the Hamming distance $d_H({\rm Decode}(x_{N(u)}),x_{N(u)})$ requires $O(d)$ time. Therefore, this process takes $O((t_0+d)|R|)$ time for all constraints in $R$.

    \item We associate each constraint $u\in R$ with an indicator vector $z^u\in\{0,1\}^d$ that indexes the neighbor of $u$ that receives the flip sent from $u$\footnote{Assume that the $d$ coordinates of $z^u$ are labelled by $d$ the neighbors of $u$.}. Initially, $z^u:=0^d$. If $u$ sends a flip to some $v\in N(u)\subseteq L$, then update $z^u_v:=1$. Note that for every $u\in R$, $z^u$ has at most one nonzero coordinate. Initializing and updating the vectors $z^u$ for all $u\in R$ take $O(d|R|)+O(|A|+|B|)$ time, where we need $O(d|R|)$ time to initialize and $O(|A|+|B|)$ time to update, as $R$ sends $|A|+|B|$ flips to $L$.

    \item In total, EasyFlip (i) takes $O((t_0+d)|R|+|A|+|B|)$ time.
\end{itemize}

\paragraph{EasyFlip (ii):}
\begin{itemize}
    \item We associate each variable $v\in L$ with a counter $\tau_v\in\{0,1,\ldots,c\}$ to count the number of flips received by $v$. Initially, set $\tau_v:=0$ and then update $\tau_v=|\{u\in N(v):z^u_v=1\}|$. The algorithm in fact flips all variables in $S_m=\{v\in L: \tau_v=m\}$. Initializing and updating the counters $\tau_v$ for all $v\in L$ take $O(cn)$ time, as we need $O(cn)$ time to initialize and $O(cn)$ time to update.

    \item Lastly, we need to flip $|S_m|$ variables, which needs $O(|S_m|)$ time.

    \item In total, EasyFlip (ii) takes $O(cn+|S_m|)$ time.
\end{itemize}

To sum up, the running time of \hyperref[alg:easyflip]{EasyFlip} is at most
\begin{align*}
O((t_0+d)|R|+|A|+|B|)+O(cn+|S_m|)=O\big((t_0/d+1)cn+c|F|\big)=O(n+|F|),
\end{align*}
where the first equality follows from $|R|=\frac{cn}{d}$, \eqref{ineq:|A|+|B|} and \eqref{ineq:|S_m|}.

\subsection{Running EasyFlip iteratively for a constant number of times -- DeepFlip}\label{sec:deepflip}

In this subsection, we will present and analyze \hyperref[alg:deepflip]{DeepFlip} (see \cref{alg:deepflip} below), which is designed by running \hyperref[alg:easyflip]{EasyFlip} iteratively for $s$ times for a particular choice of $(m_1,\ldots,m_s)\in [c]^s$. Note that by iteratively we mean a sequence of operations $x^0:=x,x^1:={\rm EasyFlip}(x^0,m_1),\ldots,x^s={\rm EasyFlip}(x^{s-1},m_s)$.

\begin{algorithm}[htbp]
\caption{Running EasyFlip iteratively for a particular choice $(m_1,\ldots,m_s)\in [c]^s$ -- DeepFlip}
\label{alg:deepflip}
\begin{algorithmic}[1]
\REQUIRE $G$, $C_0$, $x\in\mathbbm{F}_2^n$, and $(m_1,\ldots,m_s)\in [c]^s$
\ENSURE $x^s\in\mathbbm{F}_2^n$ or $\bot$
\STATE Set $k=1$ and $x^0=x$
\FOR{$1\le k\le s$}
\STATE $x^{k}\gets$ EasyFlip$(x^{k-1},m_{k})$
\STATE $U^{k}\gets\{u\in R:x^{k}_{N(u)}\notin C_0\}$
\IF{$|U^{k}|>(1-\epsilon_3)^{k}\cdot c\gamma n$}
\RETURN  $\bot$
\ELSE
\STATE $k\gets k+1$
\ENDIF
\ENDFOR
\RETURN $x^s$
\end{algorithmic}
\end{algorithm}

Our goal is to show that as long as the number of corrupt variables in $x$ is not too large, by running \hyperref[alg:easyflip]{EasyFlip} iteratively for a large enough (but still constant) number of times, there exists a vector $(m_1,\ldots,m_s)\in[c]^s$ such that the number of corrupt variables in the final output $x^s$ is at most a $(1-\epsilon_3)$-fraction of the number of corrupt variables in the initial input $x$. Most importantly, later we will show that such a vector $(m_1,\ldots,m_s)$ can be found \textit{explicitly} and \textit{efficiently}.

The above assertion will be made precise by the following lemma.

\begin{lemma}\label{lem:deepflip}
  Let $x$ be the input vector of \hyperref[alg:deepflip]{DeepFlip} and let $F$ be the set of corrupt variables of $x$. If $|F|\le\gamma n$, then for every $s\ge s_0$ there exists a nonempty subset $M\subseteq [c]^s$ such that the following holds for every $(m_1,\ldots,m_s)\in M$. Let $x^s:=${\rm DeepFlip}$(x,(m_1,\ldots,m_s))$ be the output vector of {\rm \hyperref[alg:deepflip]{DeepFlip}} and $F^s$ be the set of corrupt variables of $x^s$. Then $|F^s|\le (1-\epsilon_3)|F|$.
\end{lemma}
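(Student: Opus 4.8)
The plan is to iterate Lemma~\ref{lem:easyflip} and use the pruning step (line~5 of \hyperref[alg:deepflip]{DeepFlip}) to prevent the corruption count from blowing up during the intermediate rounds. First I would set up the key invariant: I claim that one can choose $(m_1^\ast,\ldots,m_s^\ast)\in[c]^s$ recursively so that, writing $F^k$ for the corrupt set of $x^k:={\rm EasyFlip}(x^{k-1},m_k^\ast)$, we have $|F^k|\le(1-\epsilon_3)|F^{k-1}|$ for \emph{every} $k\in[s]$ for which $|F^{k-1}|\le\alpha n$. This is exactly the output of Lemma~\ref{lem:easyflip}, but to apply it at step $k$ I must know $|F^{k-1}|\le\alpha n$. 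Here is where the choice of $\gamma$ enters: since $|F^0|=|F|\le\gamma n$ and each successful EasyFlip step shrinks the corrupt set, the only danger is a round where the guarantee of Lemma~\ref{lem:easyflip} does \emph{not} yet apply, i.e. if some intermediate $|F^{k-1}|$ had already exceeded $\alpha n$. So the first main step is a monotonicity/boundedness argument: show by induction on $k$ that as long as we pick $m_k^\ast$ per Lemma~\ref{lem:easyflip}, $|F^k|\le|F|\le\gamma n<\alpha n$ holds throughout, so the hypothesis of Lemma~\ref{lem:easyflip} is never violated and $|F^k|\le(1-\epsilon_3)^k|F|$ for all $k\le s$.

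The second step is to reconcile this clean picture with the actual behavior of \hyperref[alg:deepflip]{DeepFlip}, which does not track $|F^k|$ (it cannot — it does not know $y$) but instead checks the \emph{unsatisfied constraint} count $|U^k|$ and returns $\bot$ if $|U^k|>(1-\epsilon_3)^k c\gamma n$. So I must verify that along the good path $(m_1^\ast,\ldots,m_s^\ast)$ this pruning test is never triggered, i.e. $|U^k|\le(1-\epsilon_3)^k c\gamma n$ for all $k\in[s]$. This follows from the containment $U^k\subseteq N(F^k)$ together with the left-regularity bound $|N(F^k)|\le c|F^k|\le c(1-\epsilon_3)^k|F|\le(1-\epsilon_3)^k c\gamma n$; this is precisely the inequality $|A|+|B|\le|U|\le|N(F)|\le c|F|$ already recorded as \eqref{ineq:|A|+|B|} in the proof of Lemma~\ref{lem:easyflip-time}, applied to $x^k$. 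Hence for this particular choice $(m_1^\ast,\ldots,m_s^\ast)$, \hyperref[alg:deepflip]{DeepFlip} runs to completion and returns $x^s$ with $|F^s|\le(1-\epsilon_3)^s|F|\le(1-\epsilon_3)|F|$ (using $s\ge s_0\ge1$). Therefore the set $M$ of all $(m_1,\ldots,m_s)\in[c]^s$ for which DeepFlip returns some $x^s$ with $|F^s|\le(1-\epsilon_3)|F|$ is nonempty, which is the assertion of the lemma. (The choice of $s_0$ will matter later, in \hyperref[alg:hardsearch]{HardSearch}, to get $|F^s|$ small enough relative to $|U|$; for this lemma alone $s\ge1$ suffices, and $s\ge s_0$ is assumed.)

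I expect the main obstacle to be the bookkeeping in the first step: one has to be careful that Lemma~\ref{lem:easyflip} is stated with hypothesis $|F|\le\alpha n$ while DeepFlip is fed a vector with $|F|\le\gamma n$, and $\gamma=\frac{2\alpha}{d_0(1+0.5c\delta)}$ is comfortably below $\alpha$, so there is slack — but in principle a single EasyFlip step could momentarily \emph{increase} the count (Claim~\ref{lem:inner03} only bounds the blow-up by a factor $1+\frac{c}{d_0-t}$), so one must argue that along the \emph{good} path chosen via Lemma~\ref{lem:easyflip} no increase ever happens. The cleanest way is: define $m_k^\ast$ inductively to be the index guaranteed by Lemma~\ref{lem:easyflip} applied to $x^{k-1}$ (legitimate because the inductive hypothesis gives $|F^{k-1}|\le(1-\epsilon_3)^{k-1}|F|\le\gamma n<\alpha n$), conclude $|F^k|\le(1-\epsilon_3)|F^{k-1}|$, and close the induction. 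The remark preceding the parameter setup explicitly flags that keeping the initial corruption below $\gamma n$ is what guarantees the expander property is available at every intermediate step, so this is the intended argument; I would just make sure the inequality $\gamma<\alpha$ is stated explicitly before invoking it.
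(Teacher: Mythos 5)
The lemma as literally stated is an existence claim and your argument does produce a nonempty set $M$ witnessing it, but the set you choose — ``all $(m_1,\ldots,m_s)$ for which DeepFlip returns an $x^s$ with $|F^s|\le(1-\epsilon_3)|F|$'' — makes the ``for every $(m_1,\ldots,m_s)\in M$'' part true \emph{by definition}, so the lemma becomes vacuous. That circular choice breaks the role the lemma plays downstream: in the proof of \cref{lem:main}, $M$ must be exactly the set from \cref{def:M}, whose membership conditions (a) $|U^k|\le(1-\epsilon_3)^k c\gamma n$ and (b) $|U^s|\le\epsilon_4|U|$ are phrased purely in terms of the \emph{observable} unsatisfied-constraint counts, because \hyperref[alg:hardsearch]{HardSearch} can only test those. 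Your $M$ is defined via $|F^s|$, which the algorithm cannot compute, so HardSearch has no way to certify that the sequence it finds lies in your $M$.

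The genuine mathematical content of \cref{lem:deepflip} is therefore the converse direction you did not attempt: fix $M$ as in \cref{def:M}, take an \emph{arbitrary} $(m_1,\ldots,m_s)\in M$ (which may not be the ideal path — by \cref{lem:inner03}, a single EasyFlip step along such a path can \emph{increase} $|F^k|$ by a factor $1+\tfrac{c}{d_0-t}$), and deduce $|F^s|\le(1-\epsilon_3)|F|$ from conditions (a) and (b) alone. Two nontrivial steps are needed. First, one proves by induction, using condition (a) together with \cref{lem:inner03} and the expander bound $\frac{\delta d_0-1}{d_0-1}c|F^k|\le|N_{\le d_0-1}(F^k)|\le|U^k|$, that every intermediate $|F^k|\le\frac{\alpha n}{1+c/(d_0-t)}\le\alpha n$ — this is where the tight choice of $\gamma$ actually gets used, not in the monotone way you describe. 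Second, once $|F^s|\le\alpha n$ is known, condition (b) combines with the same expander inequality (applied to $F^s$) and with the trivial upper bound $|U|\le c|F|$ to give $\frac{\delta d_0-1}{d_0-1}c|F^s|\le|U^s|\le\epsilon_4|U|\le\epsilon_4 c|F|=\frac{\delta d_0-1}{d_0-1}(1-\epsilon_3)c|F|$, hence $|F^s|\le(1-\epsilon_3)|F|$. Your proof of $M\ne\emptyset$ (iterating \cref{lem:easyflip} along an ideal path) matches the paper's \cref{lem:deepflip-ideal} and \cref{lem:mid01}, but without this second half you have not shown that \emph{observably good} sequences are \emph{actually good}, which is the point of the lemma.
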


The next lemma shows that for every fixed $(m_1,\ldots,m_s)\in[c]^s$, the algorithm \hyperref[alg:deepflip]{DeepFlip} runs in linear time.

\begin{lemma}\label{lem:deepflip-time}
    If $|F|\le\gamma n$ and $s$ is a constant, then the running time of {\rm \hyperref[alg:deepflip]{DeepFlip}} is at most $O(n+|F|)$, where the hidden constant depends only on $t_0,c,d,s$.
\end{lemma}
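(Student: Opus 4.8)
\textbf{Proof proposal for \cref{lem:deepflip-time}.}

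The plan is to bound the running time of \hyperref[alg:deepflip]{DeepFlip} by summing, over the $k=1,\dots,s$ iterations, the cost of each \hyperref[alg:easyflip]{EasyFlip} invocation in line~3 plus the cost of recomputing the unsatisfied-constraint set $U^k$ in line~4 and the comparison in line~5. The key point to establish is that the number of corrupt variables stays $O(|F|)$ throughout the loop, so that the per-iteration bound from \cref{lem:easyflip-time} can be applied at every step with a uniform constant.

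First I would set up the invariant. Write $F^0=F$ and let $F^k$ be the set of corrupt variables of $x^k$ after the $k$th \hyperref[alg:easyflip]{EasyFlip} call. By \cref{lem:inner03} (applied with $x=x^{k-1}$, $x'=x^k$), each invocation satisfies $|F^k|\le (1+\tfrac{c}{d_0-t})|F^{k-1}|$, hence by induction $|F^k|\le (1+\tfrac{c}{d_0-t})^k|F|\le (1+\tfrac{c}{d_0-t})^s|F|$. Since $s$, $c$, $d_0$, $t$ are all constants, this shows $|F^k|=O(|F|)=O(\gamma n)\le O(\alpha n)$ for every $k\in\{0,\dots,s\}$; in particular the hypothesis $|F^{k-1}|\le \alpha n$ needed to invoke \cref{lem:easyflip-time} holds at the start of each iteration. (One should check the constants are chosen so that $(1+\tfrac{c}{d_0-t})^s\gamma\le\alpha$; if not, one instead uses the pruning guarantee — note that whenever the loop does not return $\bot$, line~5 forces $|U^k|\le (1-\epsilon_3)^k c\gamma n$, and combined with $|U^k|\le c|F^k|$ this does not directly bound $|F^k|$ from above, so the clean route is the \cref{lem:inner03} induction above, possibly after observing that returning $\bot$ only shortens the run.)

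Next I would tally the cost per iteration. By \cref{lem:easyflip-time}, line~3 costs $O(n+|F^{k-1}|)=O(n)$ (absorbing $|F^{k-1}|=O(\gamma n)=O(n)$). Line~4 recomputes $U^k=\{u\in R: x^k_{N(u)}\notin C_0\}$ by running ${\rm Check}(x^k_{N(u)})$ for all $u\in R$, costing $O(h_0|R|)=O(h_0 cn/d)=O(n)$, and line~5 does a single comparison. So each of the $s$ iterations costs $O(n)$, and the total is $O(sn)=O(n)$ since $s$ is constant; more precisely the bound is $O(n+|F|)$ with a hidden constant depending only on $t_0$, $h_0$, $c$, $d$, $s$ (the $h_0$ and $t_0$ being constants of $C_0$, one may fold them into the others). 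The early return of $\bot$ only decreases the number of iterations actually executed, so it does not affect the upper bound.

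The only real subtlety — and the step I would be most careful about — is confirming that the hypotheses of \cref{lem:easyflip-time} (namely $|F|\le\alpha n$ for the input of that invocation) remain valid across all $s$ iterations. This is where the geometric growth bound $|F^k|\le(1+\tfrac{c}{d_0-t})^k|F|$ from \cref{lem:inner03} is essential, and it is also the reason the algorithm carries the pruning test in line~5: the statement of \cref{lem:deepflip} (whose proof we may assume) is precisely where the interplay between the $\bot$-return and the corruption size is handled, so for the \emph{timing} lemma it suffices to invoke the crude multiplicative bound, which makes this step routine once stated. Everything else is bookkeeping of constants, exactly as in the running-time analyses of \cite{Dowling2018fast} and in the proof of \cref{lem:easyflip-time}.
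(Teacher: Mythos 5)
Your bound is correct for the lemma as stated, but you prove it by a cruder implementation than the paper's, and the difference matters downstream.

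The paper's proof does not recompute $U^k$ by running $\mathrm{Check}$ over all of $R$ in each iteration. Instead it maintains, across the $s$ iterations, the indicator vectors $z^u$, the counters $\tau_v$, and an indicator $\lambda$ for the current unsatisfied set, updating them incrementally: after flipping $S_{m_{k-1}}$, the only constraints whose status can change are those in $N(S_{m_{k-1}})$, so the $k$-th $\mathrm{EasyFlip}$ (for $k\ge 2$) and the recomputation of $U^k$ each cost $O(|N(S_{m_{k-1}})|)=O(c|S_{m_{k-1}}|)=O(|F|)$, not $O(n)$. Combined with the one-time $O(n)$ initialization for $k=1$ and \cref{lem:inner03}, this gives $O(n)+\sum_{k=1}^{s}O\bigl((1+\tfrac{c}{d_0-t})^{k}|F|\bigr)=O(n+|F|)$ with the $O(n)$ charged only once. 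Your version pays $O(n)$ per iteration (from the full-scan $\mathrm{Check}$ over $R$), giving $O(sn)=O(n)$, which does satisfy $O(n+|F|)$ but loses the separation between one-time setup cost and per-iteration work proportional to $|F|$. That separation is exactly what the subsequent proofs of \cref{lem:HardSearchtime} rely on: inside $\mathrm{HardSearch}$ the $c^s$ $\mathrm{DeepFlip}$ invocations share state so that all but the first cost $O(|F|)$, and across the $\ell=O(\log n)$ $\mathrm{HardSearch}$ rounds in $\mathrm{MainDecode}$ the per-round cost is $O(|F^{i-1}|)$, which sums geometrically to $O(n)$. With your from-scratch implementation each $\mathrm{HardSearch}$ costs $\Theta(n)$ and $\mathrm{MainDecode}$ becomes $\Theta(n\log n)$, so while the lemma in isolation is verified, the implementation you analyze cannot support \cref{thm:main}.

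One small additional point: your worry about maintaining $|F^{k-1}|\le\alpha n$ for the hypothesis of \cref{lem:easyflip-time} is not actually needed. Inspecting the proof of \cref{lem:easyflip-time}, the time bound relies only on $|A|+|B|\le|N(F)|\le c|F|$ and $|S_m|\le(|A|+|B|)/m$, which hold with no upper bound on $|F|$; the $\alpha n$ hypothesis is there for the expansion/correctness argument, not the timing one. Your fallback via \cref{lem:inner03} is fine, but you could simply observe the hypothesis is vacuous for the running-time claim.
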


\subsubsection{Proof of \texorpdfstring{\cref{lem:deepflip}}{}}

Given $(m_1,\ldots,m_s)\in [c]^s$ and $x^0:=x$, for each $k\in[s]$, let $x^k:=${\rm EasyFlip}$(x^{k-1},m_k)$. With this notation,
\begin{align*}
    x^s={\rm EasyFlip}(x^{s-1},m_s)={\rm DeepFlip}(x,(m_1,\ldots,m_s)),
\end{align*}
is exactly the output vector of \hyperref[alg:deepflip]{DeepFlip}. Let $F$ be the set of corrupt variables in $x$ and $U$ be the set of unsatisfied constraints with respect to $x$. Sometimes, we will also use $F^0:=F$ and $U^0:=U$. For $k\in[s]$, define $F^k$ and $U^k$ similarly with $x$ replaced by $x^k$. Then
\begin{align}\label{eq:unsatisfied_constraints}
    N_{\le d_0-1}(F)\subseteq U\subseteq N(F),
\end{align}
where the first inclusion holds since $d(C_0)=d_0$.

The following lemma can be viewed as an ``idealized'' version of \cref{lem:deepflip}.

\begin{lemma}\label{lem:deepflip-ideal}
    With the above notation, the following holds.
    If $|F|\le \alpha n$, then there exists a vector $(m_1,\ldots,m_s)\in[c]^s$ such that
    \begin{itemize}
        \item [{\rm (i)}] $|F^s|\le(1-\epsilon_3)^s|F|$;
        \item [{\rm (ii)}] for each $k\in[s]$, $|U^k|\le (1-\epsilon_3)^k\cdot c|F|$;
        \item [{\rm (iii)}] $|U^s|\le(1-\epsilon_3)^s\cdot\frac{d_0-1}{\delta d_0-1}\cdot|U|$.
    \end{itemize}
\end{lemma}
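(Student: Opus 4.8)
\textbf{Proof proposal for Lemma~\ref{lem:deepflip-ideal}.}

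The plan is to build the vector $(m_1,\ldots,m_s)$ greedily by iterating \cref{lem:easyflip}, and to prove (ii) and (iii) as consequences of (i) together with the expander bounds relating $|F^k|$ to $|U^k|$. First I would establish (i): starting from $x^0 = x$ with $|F^0| = |F| \le \alpha n$, I apply \cref{lem:easyflip} to obtain $m_1 \in [c]$ with $|F^1| \le (1-\epsilon_3)|F^0|$; in particular $|F^1| \le |F^0| \le \alpha n$, so \cref{lem:easyflip} applies again and yields $m_2$, and so on. By induction, for each $k \in [s]$ we get $m_k$ with $|F^k| \le (1-\epsilon_3)|F^{k-1}|$, hence $|F^k| \le (1-\epsilon_3)^k |F| \le \alpha n$, so the hypothesis of \cref{lem:easyflip} is maintained throughout. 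Taking $k = s$ gives (i).

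Next I would derive (ii). Since the $F^k$ from the construction above satisfy $|F^k| \le (1-\epsilon_3)^k|F|$, I use the right-hand inclusion in \eqref{eq:unsatisfied_constraints} applied to $x^k$, namely $U^k \subseteq N(F^k)$, together with the trivial bound $|N(F^k)| \le c|F^k|$ (each variable has $c$ neighbors). This gives $|U^k| \le c|F^k| \le (1-\epsilon_3)^k c|F|$, which is exactly (ii). For (iii), I would instead bound $|U^s|$ from above via $N(F^s)$ and bound $|U| = |U^0|$ from below via the left-hand inclusion $N_{\le d_0-1}(F) \subseteq U$ in \eqref{eq:unsatisfied_constraints}. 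By \cref{prop:expander} with $t = d_0 - 1$ and using $|F| \le \alpha n$, $|U| \ge |N_{\le d_0-1}(F)| \ge \frac{\delta d_0 - 1}{d_0 - 1} c|F|$. Combining with $|U^s| \le c|F^s| \le (1-\epsilon_3)^s c|F|$ yields $|U^s| \le (1-\epsilon_3)^s c|F| = (1-\epsilon_3)^s \cdot \frac{d_0-1}{\delta d_0 - 1} \cdot \frac{\delta d_0 - 1}{d_0 - 1} c|F| \le (1-\epsilon_3)^s \cdot \frac{d_0-1}{\delta d_0 - 1} \cdot |U|$, which is (iii). (Here $\delta d_0 > 3 > 1$ guarantees the denominator $\delta d_0 - 1$ is positive, so the quantity is well-defined.)

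The main subtlety — not really an obstacle, but the point requiring care — is the bookkeeping that keeps every intermediate $|F^k|$ below $\alpha n$ so that both \cref{lem:easyflip} and \cref{prop:expander} remain applicable at each step; this is why the geometric decay $|F^k| \le (1-\epsilon_3)^k|F|$ must be threaded through the induction rather than only invoked at the end, and it is precisely the issue flagged in the remark after the theorem statement (the pruning in line~5 of \cref{alg:deepflip} is what enforces this in the real, non-idealized algorithm). A second point to be careful about is that the choice of $m_k$ in \cref{lem:easyflip} depends on the current vector $x^{k-1}$, so the set $M$ of good vectors in \cref{lem:deepflip} is obtained by collecting all such greedy branches; for the idealized \cref{lem:deepflip-ideal} it suffices to exhibit one. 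Finally, $s \ge s_0$ is not needed for \cref{lem:deepflip-ideal} itself — it will be used later (in deriving \cref{lem:deepflip} from this idealized version) to force $(1-\epsilon_3)^s$ small enough that the reduction survives the pruning threshold — so I would not invoke it here.
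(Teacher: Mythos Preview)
Your proposal is correct and follows essentially the same approach as the paper: build $(m_1,\ldots,m_s)$ greedily via \cref{lem:easyflip}, maintain $|F^k|\le(1-\epsilon_3)^k|F|\le\alpha n$ by induction to get (i), then deduce (ii) from $U^k\subseteq N(F^k)$ and (iii) by combining the upper bound $|U^s|\le c|F^s|$ with the lower bound $|U|\ge\frac{\delta d_0-1}{d_0-1}c|F|$ from \cref{prop:expander}. Your closing remarks about the role of $s\ge s_0$ and the pruning threshold are also accurate and match how the paper uses this idealized lemma downstream.
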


\begin{proof}
As $|F|\le \alpha n$, by \cref{lem:easyflip}, there exists $m_1\in[c]$ such that $x^1={\rm EasyFlip}(x,m_1)$ satisfies $$|F^1|\le(1-\epsilon_3)|F|\le\alpha n.$$
Continuing this process, it follows by \cref{lem:easyflip} that for each $k\in[s]$, there exists $m_k\in[c]$ such that $x^k={\rm EasyFlip}(x^{k-1},m_k)$ satisfies
\begin{align}\label{eq:F^k}
    |F^k|\le(1-\epsilon_3)|F^{k-1}|\le (1-\epsilon_3)^k|F|\le\alpha n.
\end{align}
Such a vector $(m_1,\ldots,m_s)\in[c]^s$ clearly satisfies property (i).

To prove (ii), note that it follows by \eqref{eq:unsatisfied_constraints} and \eqref{eq:F^k} that for each $k\in[s]$,
    \begin{align*}
        |U^k|\le |N(F^k)|\le c|F^k|\le(1-\epsilon_3)^k\cdot c|F|,
    \end{align*}
as needed.

To prove (iii), as $|F|\le\alpha n$, applying \cref{prop:expander} in concert with \eqref{eq:unsatisfied_constraints} gives that
    \begin{align*}
        \frac{\delta d_0-1}{d_0-1}\cdot c|F|\le |N_{\le d_0-1}(F)|\le|U|.
    \end{align*}
Combining the equation above and (i) gives that
    \begin{align*}
        |U^s|\le c|F^s|\le(1-\epsilon_3)^s\cdot c|F|\le (1-\epsilon_3)^s\cdot\frac{d_0-1}{\delta d_0-1}\cdot|U|,
    \end{align*}
completing the proof of (iii).
\end{proof}

\cref{lem:deepflip-ideal} (i) indicates that there exists an ``ideal'' choice, say $(m^*_1,\ldots,m^*_s)\in[c]^s$, such that if $|F|\le \alpha n$, then after the execution of {\rm \hyperref[alg:easyflip]{EasyFlip}} iteratively for $s$ times (directed by $(m^*_1,\ldots,m^*_s)$), the number of corrupt variables in the final output $x^s$ is at most a $(1-\epsilon_3)^s$-fraction of the number of corrupt variables in the initial input $x^0=x$.

Unfortunately, in general, there is no way to compute the number of corrupt variables in the input and output of each execution of {\rm \hyperref[alg:easyflip]{EasyFlip}}. From this perspective, there is no easy way to \textit{explicitly} find the ideal $(m^*_1,\ldots,m^*_s)\in[c]^s$. However, \cref{lem:deepflip-ideal} (iii), which is a consequence of \cref{lem:deepflip-ideal} (i), essentially shows that if the number of corrupt variables reduces \textit{dramatically}, then the number of unsatisfied constraints also reduces \textit{significantly} - fortunately, it is clear that this quantity can be computed in linear time! The analysis of our deterministic decoding algorithm relies heavily on this observation.

The above discussion motivates the following definition.

\begin{definition}\label{def:M}
Given the input vector $x$ of {\rm \hyperref[alg:deepflip]{DeepFlip}}, let $M$ be the set consisting of all vectors $(m_1,\ldots,m_s)\in[c]^s$ which satisfy the following two properties:
\begin{itemize}
    \item [{\rm (a)}] for each $k\in[s]$, $|U^k|\le (1-\epsilon_3)^k\cdot c\gamma n$;
    \item [{\rm (b)}] $|U^s|\le\epsilon_4 |U|$, where $\epsilon_4=\frac{\delta d_0-1}{d_0-1}\cdot(1-\epsilon_3)$.
\end{itemize}
\end{definition}

The following result is an easy consequence of \cref{lem:deepflip-ideal}.

\begin{claim}\label{lem:mid01}
If $|F|\le \gamma n$ and $s\ge s_0$, then $M\neq\emptyset$.
\end{claim}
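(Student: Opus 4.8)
The plan is to show that the vector $(m_1,\ldots,m_s)\in[c]^s$ produced by \cref{lem:deepflip-ideal} already lies in $M$; since $M$ is designed with exactly this in mind, the proof is a short verification. First I would check that the hypothesis $|F|\le\gamma n$ entails $|F|\le\alpha n$, so that \cref{lem:deepflip-ideal} is applicable: since $\delta d_0>3$ and $\delta\le1$ we have $d_0>3$, hence $d_0(1+0.5c\delta)>2$ and $\gamma=\frac{2\alpha}{d_0(1+0.5c\delta)}<\alpha$. Applying \cref{lem:deepflip-ideal} then yields a vector $(m_1,\ldots,m_s)$ with its properties (i)--(iii).

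Next I would verify the two defining properties of $M$ (\cref{def:M}) for this vector. Property (a) is immediate from \cref{lem:deepflip-ideal}(ii) combined with $|F|\le\gamma n$: one gets $|U^k|\le(1-\epsilon_3)^k c|F|\le(1-\epsilon_3)^k c\gamma n$ for all $k\in[s]$. For property (b), \cref{lem:deepflip-ideal}(iii) reduces the task to checking $(1-\epsilon_3)^s\cdot\frac{d_0-1}{\delta d_0-1}\le\epsilon_4$, i.e.\ $(1-\epsilon_3)^{s-1}\le\bigl(\tfrac{\delta d_0-1}{d_0-1}\bigr)^2$; I would then plug in $s_0=\bigl\lceil\log_{1-\epsilon_3}\bigl(\epsilon_4\tfrac{\delta d_0-1}{d_0-1}\bigr)\bigr\rceil$, observe that $\epsilon_4\tfrac{\delta d_0-1}{d_0-1}=\bigl(\tfrac{\delta d_0-1}{d_0-1}\bigr)^2(1-\epsilon_3)$, and deduce from $s\ge s_0$ the bound $(1-\epsilon_3)^s\le\bigl(\tfrac{\delta d_0-1}{d_0-1}\bigr)^2(1-\epsilon_3)$, which is the required inequality after cancelling $(1-\epsilon_3)$. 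This shows $(m_1,\ldots,m_s)\in M$, hence $M\neq\emptyset$.

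I expect the only delicate point -- and hence the main potential obstacle, modest as it is -- to be sign bookkeeping with the base-$(1-\epsilon_3)$ logarithm: since $0<1-\epsilon_3<1$, both $\log_{1-\epsilon_3}$ and $x\mapsto(1-\epsilon_3)^x$ are decreasing, so the step $s\ge\log_{1-\epsilon_3}(q)\Rightarrow(1-\epsilon_3)^s\le q$ is correct but must be argued in the right direction. I would also want to record beforehand that $q:=\epsilon_4\tfrac{\delta d_0-1}{d_0-1}\in(0,1)$ (using $\delta<1$, so $0<\tfrac{\delta d_0-1}{d_0-1}<1$), which guarantees $\log_{1-\epsilon_3}(q)>0$ and hence that $s_0$ is a well-defined positive integer, and similarly that $0<1-\epsilon_3<1$ so that $s\mapsto(1-\epsilon_3)^s$ does decay. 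Once these observations are in place, the conclusion follows at once.
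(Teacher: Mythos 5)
Your proof is correct and follows essentially the same route as the paper: apply \cref{lem:deepflip-ideal} (after noting $\gamma<\alpha$), read off \cref{def:M}(a) from part (ii), and read off \cref{def:M}(b) from part (iii) together with the definition of $s_0$. You simply spell out the arithmetic behind the paper's ``it is not hard to see'' remark, including the monotonicity bookkeeping for the base-$(1-\epsilon_3)$ logarithm; the only trifling imprecision is the claim that $\tfrac{\delta d_0-1}{d_0-1}<1$ strictly (at $\delta=1$ it equals $1$), but $q<1$ still holds because of the extra factor $1-\epsilon_3$, so nothing breaks.
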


\begin{proof}
Since $|F|\le \gamma n<\alpha n$, there exists a vector $(m_1,\ldots,m_s)\in[c]^s$ that satisfies \cref{lem:deepflip-ideal}. By substituting $|F|\le\gamma n$ into \cref{lem:deepflip-ideal} (ii), it is easy to see that such a vector also satisfies \cref{def:M} (a). Moreover, by substituting $s\ge s_0=\left\lceil\log_{1-\epsilon_3}\left(\epsilon_4\frac{\delta d_0-1}{d_0-1}\right)\right\rceil$ into \cref{lem:deepflip-ideal} (iii), it is not hard to see that \cref{def:M} (b) also holds. Therefore, $M\neq\emptyset$, as needed.
\end{proof}

As briefly mentioned above, in general one cannot explicitly find the ideal $(m^*_1,\ldots,m^*_s)\in[c]^s$ which dramatically reduces the number of corruptions. Instead, under a stronger condition $|F|\le\gamma n$ (recall that \cref{lem:deepflip-ideal} assumes $|F|\le\alpha n$), \cref{lem:deepflip} shows that for every $(m_1,\ldots,m_s)\in M$, $x^s={\rm DeepFlip}(x,(m_1,\ldots,m_s))$ reduces the number of corrupt variables of $x$ by a $(1-\epsilon_3)$-fraction, which makes every member of $M$ an \textit{acceptable} (which may be not ideal) choice for \hyperref[alg:deepflip]{DeepFlip}.

Now we are ready to present the proof of \cref{lem:deepflip}.

\begin{proof}[Proof of \cref{lem:deepflip}]
First of all, we would like to show that \cref{lem:deepflip} is well defined, namely, for every $|F|\le\gamma n$ and $(m_1,\ldots,m_s)\in M$, ${\rm DeepFlip}(x,(m_1,\ldots,m_s))$ does not return $\bot$. Indeed, as $(m_1,\ldots,m_s)\in M$, by \cref{def:M} (a) we have that for every $1\le k\le s$, $|U^{k}|\le (1-\epsilon_3)^{k}\cdot c\gamma n$, which implies that $U^{k}$ always passes the test in step 5 of \cref{alg:deepflip}. Therefore, under the assumption of \cref{lem:deepflip}, the output of \hyperref[alg:deepflip]{DeepFlip} is a vector $x^s\in\mathbb{F}_2^n$.

To prove the lemma, assume for the moment that $|F^s|\le\alpha n$. Given the correctness of this assertion, applying \cref{prop:expander} in concert with \eqref{eq:unsatisfied_constraints} gives that
    \begin{align*}
        \frac{\delta d_0-1}{d_0-1}\cdot c|F^s|\le |N_{\le d_0-1}(F^s)|\le|U^s|.
    \end{align*}
Moreover, by combining the above equation and \cref{def:M} (b), we have
\begin{align*}
\frac{\delta d_0-1}{d_0-1}\cdot c|F^s|\le|U^s|\le\epsilon_4|U|\le \frac{\delta d_0-1}{d_0-1}\cdot(1-\epsilon_3)\cdot c|F|,
\end{align*}
which implies that
\begin{align*}
    |F^s|\le (1-\epsilon_3)|F|,
\end{align*}
as needed.

Therefore, it remains to show that $|F^s|\le\alpha n$. We will prove by induction that for each $0\le k\le s$, $|F^k|\le\frac{\alpha n}{1+c/(d_0-t)}\le\alpha n$. For the base case $k=0$, it follows by assumption that $|F^0|\le\gamma n<\frac{\alpha n}{1+c/(d_0-t)}$ as $d_0\ge3$, $\delta d_0>3$ and $t=\lfloor\frac{1}{\delta}\rfloor$. Suppose that for some $k\in[s]$ we have $|F^{k-1}|\le\frac{\alpha n}{1+c/(d_0-t)}$. Since $x^k=${\rm EasyFlip}$(x^{k-1},m_k)$, it follows by \cref{lem:inner03} that
\begin{align*}
    |F^k|\le (1+\frac{c}{d_0-t})|F^{k-1}|\le\alpha n.
\end{align*}
Therefore, we have
\begin{align*}
    \frac{\delta d_0-1}{d_0-1}\cdot c|F^k|\le|N_{\le d_0-1}(F^k)|\le|U^k|\le(1-\epsilon_3)^k\cdot c\gamma n\le c\gamma n,
\end{align*}
where the first inequality follows from \cref{prop:expander}, the second inequality follows from \eqref{eq:unsatisfied_constraints}, and the third inequality follows from \cref{def:M} (a). The last equation implies that
\begin{align*}
    |F^k|\le\frac{d_0-1}{\delta d_0-1}\gamma n<\frac{d_0}{2}\gamma n\le\frac{\alpha n}{1+c/(d_0-t)},
\end{align*}
as needed, where the second inequality follows from the assumption $\delta d_0>3$ and the last inequality follows from the definition of $\gamma$ in \cref{thm:main}, $\delta d_0>3$ and $t=\lfloor\frac{1}{\delta}\rfloor$.

The proof of the lemma is thus completed.
\end{proof}

\subsubsection{Proof of \texorpdfstring{\cref{lem:deepflip-time}}{}}

\hyperref[alg:deepflip]{DeepFlip} essentially consists of $s$ \hyperref[alg:easyflip]{EasyFlip} invocations, which compute $x^k:={\rm EasyFlip}(x^{k-1},m_k)$ for all $k\in[s]$. So, it suffices to analyze their total running time.

We will need the vectors $\{z^u:u\in R\}$ and the counters $\{\tau_v:v\in L\}$ defined in the proof of \cref{lem:easyflip-time}. We will also compute $U^k=\{u\in R:x^k_{N(u)}\notin C_0\}$ for all $0\le k\le s$ and associate it with an indicator vector $\lambda\in\{0,1\}^{|R|}$ to record the set of unsatisfied constraints\footnote{Assume that the coordinates of $\lambda$ are labeled by constraints in $R$.}. Note that all of $\{z^u:u\in R\}$, $\{\tau_v:v\in L\}$, and $\lambda$ will be updated with the computation of $x^k$ during \hyperref[alg:deepflip]{DeepFlip}.

By \cref{lem:easyflip-time}, computing $x^1:={\rm EasyFlip}(x^{0},m_1)$ takes time $O(n+|F^0|)$. Given $x^{k-1}:={\rm EasyFlip}(x^{k-2},m_{k-1})$, let us analyze the running time of computing $x^k:={\rm EasyFlip}(x^{k-1},m_k)$, where $k\in\{2,\ldots,s\}$. Recall that we divided \hyperref[alg:easyflip]{EasyFlip} roughly into two parts (see the discussion above \cref{alg:easyflip}). We will compute the running time of these two parts separately.

\paragraph{EasyFlip (i):}
\begin{itemize}
    \item Note that we know $S_{m_{k-1}}$, which is the set of flipped variables in the computation of $x^{k-1}$. Therefore, to compute $x^k$ we only need to invoke Decode$(x^{k-1}_{N(u)})$ and compute $d_H({\rm Decode}(x^{k-1}_{N(u)}),x^{k-1}_{N(u)})$ for those $u\in N(S_{m_{k-1}})$, since these are the only constraints that could possibly see a status change after computing $x^{k-1}$. This process takes at most $O((t_0+d)|N(S_{m_{k-1}})|)$ time over all the constraints in $N(S_{m_{k-1}})$.

    \item Note that we also know the current value of $z^u$ for every $u\in R$, which indexes the neighbor of $u$ that receives the flip sent from $u$ in the computation of $x^{k-1}$. Note that $z^u$ has at most one nonzero coordinate. Now, to compute $x^k$, if a constraint $u\in N(S_{m_{k-1}})$ sends a flip to a variable $v\in L$, then we update $z^u$ by setting $z^u_v:=1$ and all other coordinates $0$. This process takes at most $O(|N(S_{m_{k-1}})|)$ time.

    \item In total, computing $x^k$ EasyFlip (i) takes $O((t_0+d)|N(S_{m_{k-1}})|)$ time.
\end{itemize}

\paragraph{EasyFlip (ii):}
\begin{itemize}
    \item We know the current value of $\tau_v$ for every $v\in L$, which counts the number of flips received by $v$ in the computation of $x^{k-1}$. It is not hard to see that to compute $x^k$, a counter $\tau_v$ may change only if $v$ is a neighbor of the constraints in $N(S_{m_{k-1}})$, as other constraints would not send any flip. Since the constraints in $R$ send at most $|N(S_{m_{k-1}})|$ flips to variables in $L$, at most $|N(S_{m_{k-1}})|$ variables in $L$ can receive flip messages. So, updating the counters for all $v\in L$ takes at most $O(c|N(S_{m_{k-1}})|)$ time.

    \item Lastly, we need to flip $|S_{m_k}|$ variables, which needs $O(|S_{m_k}|)$ time.

    \item In total, computing $x^k$ EasyFlip (ii) takes $O(c|N(S_{m_{k-1}})|+|S_{m_k}|)$ time.
\end{itemize}

To sum up, the running time of computing $x^k$ is at most
\begin{align*}
O((t_0+d)|N(S_{m_{k-1}})|)+O(c|N(S_{m_{k-1}})|+|S_{m_k}|)=O((t_0+d+c)|N(S_{m_{k-1}})|+|S_{m_{k}}|).
\end{align*}

\paragraph{Updating $U^k$.} Note that in \hyperref[alg:deepflip]{DeepFlip} (see step 5 in \cref{alg:deepflip}) we also need to compute $|U^k|$ for each $k\in[s]$. It is clear that $U^0$ can be found by invoking $\text{Check}(x^0_{N(u)})$ for all $u\in R$, which takes time $O(h_0|R|)$. Then we need to initialize $\lambda$ to a vector that represents $U^0$, which takes time $O(|R|)=O(n)$. Moreover, given $\lambda$ which represents $U^{k-1}$, to compute $U^k$ we only need to invoke Check$(x^{k}_{N(u)})$ for those $u\in N(S_{m_{k}})$, since these are the only constraints that could possibly see a status change after computing $x^{k}$. Therefore, for each $k\in[s]$, the running time of computing $|U^k|$ is at most $O(h_0|N(S_{m_{k}})|)$.

\paragraph{Running time of DeepFlip.} Noting that $F^0=F$, the running time of \hyperref[alg:deepflip]{DeepFlip} is at most
\begin{align*}
    &O(n+|F|)+\sum_{k=1}^s O(|S_{m_{k}}|+|N(S_{m_k})|)\le O(n+|F|)+\sum_{k=1}^s O(|S_{m_{k}}|)\\
\le &O(n+|F|)+\sum_{k=1}^{s} O(|F^k|)\le O(n+|F|)+\sum_{k=1}^{s} O((1+\frac{c}{d_0-t})^k|F|)\\
= &O(n+|F|),
\end{align*}
where the three inequalities follow from $|N(S_{m_k})|\le c|S_{m_k}|$,  \eqref{ineq:|S_m|} and \cref{lem:inner03}, respectively.

\subsection{Running DeepFlip thoroughly until significantly reducing the number of unsatisfied constraints -- HardSearch}\label{sec:hardsearch}

In this subsection, we describe and analyze \hyperref[alg:hardsearch]{HardSearch} (see \cref{alg:hardsearch} below). Given an input vector $x\in\mathbb{F}_2^n$ with at most $\gamma n$ corruptions, \hyperref[alg:hardsearch]{HardSearch} runs DeepFlip$(x,(m_1,\ldots,m_s))$ over all choices of $(m_1,\ldots,m_s)\in[c]^s$ until it finds one, say $(m'_1,\ldots,m'_s)$, such that the number of unsatisfied constraints with respect to ${\rm DeepFlip}(x,(m'_1,\ldots,m'_s))$ is at most an $\epsilon_4$-fraction of the number of unsatisfied constraints with respect to $x$. Then \cref{lem:main} shows that the number of corruptions in $x'$ is at most a $(1-\epsilon_3)$-fraction of the number of corruptions in $x$. Therefore, running \hyperref[alg:hardsearch]{HardSearch} iteratively for $\ell$ rounds gives us a $(1-\epsilon_3)^{\ell}$-reduction on the number of corruptions.

\begin{algorithm}[htbp]
\caption{Running DeepFlip over all $(m_1,\ldots,m_s)\in[c]^s$ until finding an ``acceptable'' one -- HardSearch}
\label{alg:hardsearch}
\begin{algorithmic}[1]
\REQUIRE $G$, $C_0$, $x\in\mathbbm{F}_2^n$, and $s=s_0$
\ENSURE $x'\in \mathbbm{F}_2^n$
\STATE $U\gets \{u\in R: x_{N(u)}\notin C_0\}$
\FOR{every $(m_1,\ldots,m_s)\in [c]^s$}
\STATE $x'\gets$ DeepFlip$(x,(m_1,\ldots,m_s))$
\IF{$x'\neq \bot$ }
\STATE $U'\gets \{u\in R: x'_{N(u)}\notin C_0\}$
\IF{$|U'|\le \epsilon_4 |U|$}
\RETURN $x'$
\ENDIF
\ENDIF
\ENDFOR
\end{algorithmic}
\end{algorithm}

\subsubsection{Proof of \texorpdfstring{\cref{lem:main}}{}}
To prove (i), let $M$ be the set of vectors in $[c]^s$ which satisfy the two conditions in \cref{def:M} with respect to $F$ and $s$, where $|F|\le\gamma n$ and $s=s_0$. By our choices of $F$ and $s$, it follows by \cref{lem:mid01} that $M\neq\emptyset$. By \cref{lem:deepflip}, as long as \hyperref[alg:hardsearch]{HardSearch} finds a vector $(m_1,\ldots,m_s)\in M$, it would output a vector
$x'={\rm DeepFlip}(x,(m_1,\ldots,m_s))$ such that $|U'|\le \epsilon_4 |U|$\footnote{This holds since $(m_1,\ldots,m_s)\in M$ satisfies \cref{def:M} (b).} and $|F'|\le(1-\epsilon_3)|F|$, as needed.

It remains to prove (ii), which is an easy consequence of (i). Let $F^i$ be the set of corruptions in $x^i$ for all $0\le i\le\ell$. Then by (i) for every $0\le i\le \ell-1$, we have either $x^{i}\in T(G,C_0)$ (if $|U^i|=0$) or $|F^{i+1}|\le(1-\epsilon_3)|F^i|$ (if $|U^i|\neq 0$). Therefore, after at most $\ell=\left\lceil\log_{1-\epsilon_3}\left(\left\lfloor\frac{d_0-1}{2}\right\rfloor\frac{1}{\gamma n}\right)\right\rceil$ iterative executions of \hyperref[alg:hardsearch]{HardSearch}, the number of corrupt variables is at most
\begin{align*}
    (1-\epsilon_3)^\ell\gamma n\le \left\lfloor\frac{d_0-1}{2}\right\rfloor\frac{1}{\gamma n}\cdot \gamma n=\left\lfloor\frac{d_0-1}{2}\right\rfloor,
\end{align*}
as needed.

\subsubsection{Proof of \texorpdfstring{\cref{lem:HardSearchtime}}{}}

\paragraph{Proof of \cref{lem:HardSearchtime} (i).}
Fix an arbitrary total order on $[c]^s$, and assume that \hyperref[alg:hardsearch]{HardSearch} goes through all vectors in $[c]^s$ increasingly in this order. Note that in the worst case, \hyperref[alg:hardsearch]{HardSearch} could invoke \hyperref[alg:deepflip]{DeepFlip} for $c^s$ times. We will estimate the running time of \hyperref[alg:hardsearch]{HardSearch} for this worst case.

Note that the input vectors of all \hyperref[alg:deepflip]{DeepFlip} invocations are the same, i.e., equal to $x$, and we do not want to write down the entire vector $x$ from time to time, as it could take a lot of time. Instead, we will use a binary vector $w\in\mathbb{F}_2^n$ to record the flipped variables during each invocation of \hyperref[alg:deepflip]{DeepFlip}. 
If the current choice $(m_1,\ldots,m_s)\in[c]^s$ does not satisfy step 6 of \hyperref[alg:hardsearch]{HardSearch}, then we can use $w$ together with the current value of $x'$ to recover $x$ (indeed, $x=x'+w$) and then turn to a new choice of  $(m_1,\ldots,m_s)\in[c]^s$.

We will show that updating $w$ in any \hyperref[alg:deepflip]{DeepFlip} invocation costs only $O(|F|)$ time, where $F$ is the set of corrupt variables of $x$. Indeed, we initially set $w=0^n$ and if a variable $v\in L$ is flipped, then $w_v$ increases by 1 (addition modulo 2). Note that ${\rm DeepFlip}(x,(m_1,\ldots,m_s))$ flips a total number of $\sum_{k=1}^s|S_{m_k}|$ variables. For each $k\in[s]$, let $F^k$ be the set of corrupt variables of ${\rm DeepFlip}(x,(m_1,\ldots,m_k))$. It follows that both ${\rm wt}(w)$ and the time of updating $w$ can be bounded from the above by
    \begin{align}\label{ineq:Hardtime01}
    O(\sum_{k=1}^s|S_{m_k}|)=O(\sum_{k=0}^{s-1}|F^k|)=O(\sum_{k=0}^{s-1}(1+\frac{c}{d_0-t})^k|F|)=O(|F|),
    \end{align}
where the second and third inequalities follow from \eqref{ineq:|S_m|} and \cref{lem:inner03}, respectively.

Observe that \hyperref[alg:hardsearch]{HardSearch} contains at most $c^s$ \hyperref[alg:deepflip]{DeepFlip} invocations and each \hyperref[alg:deepflip]{DeepFlip} contains $s$ \hyperref[alg:easyflip]{EasyFlip} invocations. For notational convenience, we use the pair $(j,k)\in[c^s]\times[s] $ to index the $(j,k)$-th \hyperref[alg:easyflip]{EasyFlip} in \hyperref[alg:hardsearch]{HardSearch}. 
\begin{itemize}
    \item In the 1st \hyperref[alg:deepflip]{DeepFlip}, initializing and updating $w$ take $O(n+|F|)$ time. Therefore, by \cref{lem:easyflip-time}, the running time of the 1st \hyperref[alg:deepflip]{DeepFlip} is $O(n+|F|)$.
    \item For each $2\le j\le c^s$, we will compute the running time of the $j$-th \hyperref[alg:deepflip]{DeepFlip} as follows.
    \begin{itemize}
        \item First, we will use $w$ to recover the input vector $x$ of \hyperref[alg:hardsearch]{HardSearch}. More precisely, we flip all $x_v$ with $w_v=1$, which takes $O({\rm wt}(w))=O(|F|)$ time, as shown by \eqref{ineq:Hardtime01}.
        \item Note that the $(j,1)$-th \hyperref[alg:easyflip]{EasyFlip} only needs to invoke Decode for the neighbors of the variables $\{v\in L:w_v=1\}$, as the status of the other constraints would not change. After that, we reset $w$ to $0^n$. The above process altogether takes $O(|F|)$ time.
        \item For $2\le k\le s$, the analysis of the running time of the $(j,k)$-th \hyperref[alg:easyflip]{EasyFlip} is similar to the one presented in the proof \cref{lem:deepflip-time}, which takes $O(|F|)$ time.
        \item Note that $|U|$ and $|U'|$ are already known in each \hyperref[alg:deepflip]{DeepFlip} invocation, as shown in the proof of \cref{lem:deepflip-time}.
    \end{itemize}
      In summary, $j$-th \hyperref[alg:deepflip]{DeepFlip} requires $O(|F|)$ time.
\end{itemize}
Therefore, the running time of \hyperref[alg:hardsearch]{HardSearch} is $O(n+|F|)$.

\paragraph{Proof of \cref{lem:HardSearchtime} (ii).}
Let $x^0=x$ be the input vector of \hyperref[alg:main]{MainDecode} and $F^0$ be the set of corrupt variables of $x^0$. Note that \hyperref[alg:main]{MainDecode} contains $\ell$ \hyperref[alg:hardsearch]{HardSearch} invocations. For every $i\in[\ell]$, let $x^i:=$HardSearch$(x^{i-1})$ and $F^i$ be the set of corrupt variables in $x^i$. Moreover, similarly to the notation above, we use the triple $(i,j,k)\in[\ell]\times[c^s]\times[s]$ to index the $(i,j,k)$-th \hyperref[alg:easyflip]{EasyFlip} and the pair $(i,j)\in [\ell]\times[c^s]$ to index the $(i,j)$-th \hyperref[alg:deepflip]{DeepFlip} in \hyperref[alg:main]{MainDecode}.
\begin{itemize}
    \item By \cref{lem:HardSearchtime} (i), computing $x^1:=$HardSearch$(x^{0})$ takes $O(n+|F^0|)$ time.
    \item For each $2\le i\le \ell$, we analyze the running time of computing $x^i:=$HardSearch$(x^{i-1})$ as follows.
    \begin{itemize}
        \item First, we need to reset the vector $w$, which records the flipped variables of $x^{i-2}$ in computing $x^{i-1}$, to $0^n$. By \eqref{ineq:Hardtime01} the above process takes time $O({\rm wt}(w))=O(|F^{i-2}|)$. We also need to update $w$ during the calculation of $x^i$, which takes time $O(|F^{i-1}|)$ by \eqref{ineq:Hardtime01}.
        \item Note that the $(i,1,1)$-th \hyperref[alg:easyflip]{EasyFlip} only needs to invoke Decode for the neighbors of the variables that are flipped in the $(i-1,c^s,s)$-th \hyperref[alg:easyflip]{EasyFlip}, which takes time $O(|F^{i-2}|)$, since by \eqref{ineq:Hardtime01} the size of the neighbors of those variables are bounded by $O(|F^{i-2}|)$. After that, by \cref{lem:easyflip-time} we need an extra $O(|F^{i-1}|)$ time to finish this \hyperref[alg:easyflip]{EasyFlip}. For $2\le k\le s$, the analysis of the running time of the $(i,1,k)$-th \hyperref[alg:easyflip]{EasyFlip} is similar to the one presented in the proof of \cref{lem:deepflip-time}, which is at most $O(|F^{i-1}|)$. Therefore, the $(i,1)$-th \hyperref[alg:deepflip]{DeepFlip} takes $O(|F^{i-2}|+|F^{i-1}|)$ time.
        \item For $2\le j\le c^s$, similar to the discussion in the proof of (i), the running time of the $(i,j)$-th \hyperref[alg:deepflip]{DeepFlip} takes $O(|F^{i-1}|)$ time.
        \item The set $U^{i-1}$ has already been computed in the $(i-1)$-th \hyperref[alg:hardsearch]{HardSearch}.
    \end{itemize}
    Therefore, the $i$-th \hyperref[alg:hardsearch]{HardSearch} takes time $O(|F^{i-2}|+|F^{i-1}|)$.
\end{itemize}

To sum up, the total running time of \hyperref[alg:main]{MainDecode} is
    \begin{align*}
         &O(n+|F^0|)+\sum\limits_{i=2}^\ell O(|F^{i-2}|+|F^{i-1}|)= O(n+|F^0|)+\sum\limits_{i=0}^{\ell-1}O(|F^i|)\\
         =& O(n+|F^0|)+\sum\limits_{i=0}^{\ell-1}O((1-\epsilon_3)^i|F^0|)=  O(n),
    \end{align*}
where the second equality follows from \cref{lem:main}.

\section{Randomized decoding: Reduce large corruptions to a moderate size} \label{sec:rand_decoding}

\noindent In this section, we present our randomized decoding for Tanner codes which can correct more errors. The general strategy is as follows. First, we use a voting process to derive a set $S:=\cup_{m\in [c]} S_{m}$ of candidate variables to flip. More precisely, each constraint $u\in R$ that satisfies $1\le d_H({\rm Decode}(x_{N(u)}),x_{N(u)})\le t$ sends exactly one flip to an arbitrary variable $v\in N(u)$ with ${\rm Decode}(x_{N(u)})_v\neq x_v$. We then design a special sampling process to pick a large fraction of variables from $S$ and flip them. This process can, with high probability, reduce the number of corrupted variables by a positive fraction. We repeat the above random process until the number of corrupted variables drops below $\gamma n$, in which case our deterministic decoding \hyperref[alg:main]{MainDecode} in \cref{alg:main} can work correctly, or we run out of time and stop. Finally, we use \hyperref[alg:main]{MainDecode} to get the codeword.

Let $\gamma$ be the relative decoding radius of Theorem \cref{thm:main}. The exact randomized decoding is given as \cref{algo:rand}, which yields the following result.

\begin{algorithm}[htbp]
\caption{Randomized Decoding}
\begin{algorithmic}[1]\label{algo:rand}
\REQUIRE  $x\in\mathbbm{F}_2^n$ with at most $\alpha $ fraction errors
\ENSURE a codeword in $T(G, C_0)$ or $\bot$
\STATE   Set $t = \lfloor \frac{1}{\delta} \rfloor$
\FOR{ $\ell = 1,\ldots,\left\lceil \frac{\log \frac{\gamma }{\alpha } }{ \log \left(1-  \frac{3\epsilon(\delta(t+1)-1)}{4t} \right) } \right\rceil$}
\FOR{every $u\in R$}
\STATE $\omega\gets \text{Decode}(x_{N(u)})$
\IF{$1\le d_H(\omega,x_{N(u)})\le t$}
\STATE send a ``flip'' message to the vertex $v\in N(u)$ with the smallest index such that $\omega_v\neq x_v$
\ENDIF
\ENDFOR
\STATE $\forall m\in [c], S_m \gets \{v\in L: v \mbox{ receives } m \mbox{ ``flip'' messages} \}$
\STATE $S \gets \bigcup_{m}S_m$
\STATE Randomly pick $P \subseteq S$: for every $m\in [c]$, for each variable in $S_m$,  pick it with probability $ \frac{m}{2 c} $, using independent randomness
\STATE Flip all bits in $P$
\STATE $U \gets \{u\in R: x_{N(u)}\notin C_0\}$
\IF{$|U| \le \left( \delta - \frac{1}{d_0}\right) c\gamma n $}
\RETURN MainDecode($x$) 
\ENDIF
\ENDFOR
\RETURN $\bot$
\end{algorithmic}
\end{algorithm}

\begin{theorem}[restatement of \cref{thm:randDec}]
    Let $G$ be a $(c,d,\alpha,\delta)$-bipartite expander and $C_0$ be a $[d,k_0,d_0]$-linear code, where $c,d,\alpha,\delta,d_0,k_0$ are positive constants. If $\delta d_0 > 3$, then there exists a linear-time randomized decoding algorithm for Tanner code $T(G,C_0)$ such that if the input has at most $\alpha n$ errors from a codeword, then with probability $1-\exp\left\{  -\Theta_{c,\delta, d_0}\left( n \right) \right\}$, the decoding algorithm can output the correct codeword.
\end{theorem}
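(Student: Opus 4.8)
The plan is to show that one pass of the outer loop of \cref{algo:rand} (the voting step, the random sampling of $P$, and the flip) reduces the number of corrupt variables by a fixed constant factor with probability $1-\exp\left(-\Theta_{c,\delta,d_0}(n)\right)$ whenever that number is still $\Theta(n)$, and then to union-bound over the constantly many passes: except with probability $\exp\left(-\Theta_{c,\delta,d_0}(n)\right)$, after the prescribed number of iterations the corruption count has dropped into the regime where \hyperref[alg:main]{MainDecode} (equivalently, \cref{thm:main}) finishes deterministically. The running time is then immediate: each iteration costs $O\bigl((t_0+d)|R|+cn+h_0|R|\bigr)=O(n)$ (Decode once per constraint, forming the $S_m$'s, sampling and flipping $P$, and recomputing $U$), there are constantly many iterations, and the final \hyperref[alg:main]{MainDecode} call costs $O(n)$ by \cref{lem:HardSearchtime}.

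\emph{Per-iteration reduction.} Fix the codeword $y$, and let $F=\{i:x_i\neq y_i\}$ be the corrupt set at the start of an iteration, with $|F|\le\alpha n$. As in \cref{sec:easyflip}, partition the constraints that send a flip into $A=N_{\le t}(F)$, which decode to $y_{N(u)}$ (legitimate since $t=\lfloor\frac1\delta\rfloor\le\lfloor\frac{d_0-1}2\rfloor$ because $\delta d_0>3$) and hence flip a genuinely corrupt variable, and $B\subseteq N_{\ge d_0-t}(F)$, which decode to a wrong codeword and may flip a correct variable. Exactly as in \eqref{ineq:|A|}, \eqref{ineq:|B|} and \eqref{ineq:compare_A_and_B} one has $|A|\ge\frac{\delta(t+1)-1}{t}c|F|$, $|A|+|B|\le c|F|$, $|B|\le\frac{c|F|}{d_0-t}$, and (using $d_0>\frac3\delta-1+2\epsilon_0$) $|A|-|B|\ge\epsilon_0\delta^2c|F|$. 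Since every flip reaching a correct variable originates in $B$, the total number of flips landing on corrupt variables is $\ge|A|$ and on correct variables is $\le|B|$. Let $Z$ be the set of correct variables that receive a flip, and let $P$ be the sampled set, where each $v\in S_m$ is kept independently with probability $\frac m{2c}\le\frac12$. Then $|F'|=|F|-|P\cap F|+|P\cap Z|$, so deterministically $|F'|\le|F|+|Z|\le\bigl(1+\frac{c}{d_0-t}\bigr)|F|$, while
\begin{align*}
\mathbb{E}\,|P\cap F|=\frac1{2c}\sum_{m}m\,|S_m\cap F|\ge\frac{|A|}{2c},\qquad
\mathbb{E}\,|P\cap Z|=\frac1{2c}\sum_{m}m\,|S_m\cap Z|\le\frac{|B|}{2c},
\end{align*}
so $\mathbb{E}|P\cap F|-\mathbb{E}|P\cap Z|\ge\frac{|A|-|B|}{2c}\ge\frac{\epsilon_0\delta^2}{2}|F|$, whereas $\mathbb{E}|P\cap F|+\mathbb{E}|P\cap Z|\le\frac{|F|}2$ and $\mathbb{E}|P\cap F|=\Omega_{c,\delta,d_0}(|F|)$. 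As $|P\cap F|$ and $|P\cap Z|$ are sums of independent Bernoulli variables, a Chernoff bound shows that with probability $1-\exp\left(-\Theta_{c,\delta,d_0}(|F|)\right)$ each lies within $\frac{\epsilon_0\delta^2}{8}|F|$ of its mean, and on that event $|F'|\le(1-\rho)|F|$ for an explicit $\rho>0$ (of the shape $\frac{3\epsilon(\delta(t+1)-1)}{4t}$ appearing in the loop bound of \cref{algo:rand}).

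\emph{From $|U|$ to $|F|$, and iterating.} The algorithm's stopping test uses the observable quantity $|U|$, not $|F|$; the two are linked by \eqref{eq:unsatisfied_constraints} and \cref{prop:expander}, which for $|F|\le\alpha n$ give $\frac{\delta d_0-1}{d_0-1}c|F|\le|U|\le c|F|$. Hence (i) if $|F|\le(\delta-\frac1{d_0})\gamma n$ then $|U|\le(\delta-\frac1{d_0})c\gamma n$, so the test passes and \hyperref[alg:main]{MainDecode} is called; (ii) whenever the test passes, $|F|\le\frac{d_0-1}{d_0}\gamma n<\gamma n$, so \cref{thm:main} guarantees the ensuing \hyperref[alg:main]{MainDecode} returns $y$; and (iii) as long as the test has not yet passed, $|U|>(\delta-\frac1{d_0})c\gamma n$ forces $|F|>(\delta-\frac1{d_0})\gamma n=\Theta(n)$, so the reduction in the next iteration fails with probability only $\exp\left(-\Theta_{c,\delta,d_0}(n)\right)$. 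Now split on the initial corruption count $f_0$. If $f_0\le\frac{(\delta-1/d_0)\gamma n}{1+c/(d_0-t)}$, the deterministic bound above already gives $f_1\le(\delta-\frac1{d_0})\gamma n$ after the first iteration, so the test passes and we finish with no randomness used. Otherwise $f_0=\Theta(n)$; conditioning on the event, of probability $1-\exp\left(-\Theta_{c,\delta,d_0}(n)\right)$ by a union bound over the constantly many iterations, that every iteration which actually runs achieves its $(1-\rho)$-reduction, the corruption count decreases geometrically and stays $\le\alpha n$ throughout (so \cref{prop:expander} keeps applying), and the number of iterations is chosen large enough that $|F|\le(\delta-\frac1{d_0})\gamma n$ is forced before the loop ends; by (i)--(ii) the test then passes and \hyperref[alg:main]{MainDecode} outputs $y$.

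\emph{Where the difficulty lies.} The crux is the per-iteration analysis, which is also the only place $\delta d_0>3$ is genuinely used. One needs the sampling bias $\mathbb{E}|P\cap F|-\mathbb{E}|P\cap Z|$ to be a positive constant fraction of $|F|$; this rests on the $A/B$ split together with $|A|-|B|=\Omega(c|F|)$, i.e. precisely the inequality $d_0>\frac3\delta-1$ that also drives the deterministic algorithm, and the sampling probabilities $\frac m{2c}$ are tuned so that this bias survives (the factor $m$ converts ``variables in $S_m$'' into ``flips received'', and the $2c$ keeps the probabilities at most $\frac12$). One must also verify that the corruption count remains $\Theta(n)$ (so the Chernoff bound is exponentially strong in $n$) and never exceeds $\alpha n$ (so \cref{prop:expander} applies) right up to the moment the observable test $|U|\le(\delta-\frac1{d_0})c\gamma n$ fires---this is exactly what lets the randomized phase hand off cleanly to the deterministic \hyperref[alg:main]{MainDecode}.
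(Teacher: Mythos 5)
Your proof follows essentially the same route as the paper's: you split the voting constraints into $A$ and $B$ exactly as in \cref{sec:easyflip}, use the same $|A|-|B|\ge\epsilon_0\delta^2 c|F|$ bias and the same $m/(2c)$ sampling probabilities, apply Hoeffding to get concentration (you bound $|P\cap F|$ and $|P\cap Z|$ directly where the paper's \cref{lem:sizeofP} and \cref{lem:corruptionsizeinP} instead bound $|P|$ and $|P\cap F|$, an equivalent reorganization), relate $|U|$ to $|F|$ via \cref{prop:expander} to justify the stopping test as in \cref{lem:callDeterministicDec}, and hand off to \hyperref[alg:main]{MainDecode}. Your explicit case split on the initial corruption count, ensuring $|F|=\Theta(n)$ whenever the Chernoff bound is actually invoked, is a point the paper treats somewhat informally, but the underlying argument is the same.
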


 The following two lemmas demonstrate the correctness and linear running time of \cref{algo:rand}, respectively.

\begin{lemma}\label{lem:randDecCorrect}
    If the input has distance at most $\alpha n$ from a codeword, then with probability $1-\exp\left\{ - \Theta_{c,\delta,d_0}(n) \right\}$, \cref{algo:rand} outputs the correct codeword.

\end{lemma}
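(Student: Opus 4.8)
The plan is to track the corruption set $F:=\{i\in[n]:x_i\neq y_i\}$, where $y$ is the codeword within distance $\alpha n$ of the input, and show it contracts geometrically from round to round until the stopping test on line $14$ of \cref{algo:rand} fires, after which \hyperref[alg:main]{MainDecode} finishes by \cref{thm:main}. The single-round claim I would prove is: \emph{if $|F|\le\alpha n$ at the start of a round, then with probability $1-\exp(-\Theta_{c,\delta,d_0}(|F|))$ the new corruption set $F'$ satisfies $|F'|\le(1-\rho)|F|$}, where $\rho>0$ is the per-round contraction baked into the loop length of \cref{algo:rand}. Granting this, union-bound over the $O_{c,\delta,d_0}(1)$ iterations: on the good event $|F|$ is non-increasing, so the invariant $|F|\le\alpha n$ (needed to apply \cref{prop:expander}) is preserved, and within the allotted number of rounds $|F|$ drops into the window where line $14$ fires. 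When line $14$ fires we have $|U|\le(\delta-\tfrac1{d_0})c\gamma n$; combined with $\tfrac{\delta d_0-1}{d_0-1}c|F|\le|N_{\le d_0-1}(F)|\le|U|$ from \cref{prop:expander} this gives $|F|\le\tfrac{d_0-1}{d_0}\gamma n<\gamma n$, so the subsequent \hyperref[alg:main]{MainDecode} call corrects everything; conversely, in any round where line $14$ has \emph{not} fired, $|U|>(\delta-\tfrac1{d_0})c\gamma n$ together with $|U|\le c|F|$ forces $|F|>(\delta-\tfrac1{d_0})\gamma n=\Theta(n)$, which is precisely why the failure exponent can be taken to be $\Theta(n)$ rather than $\Theta(|F|)$.

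For the single-round expectation bound, fix a round with $|F|\le\alpha n$ and reuse the notation of \cref{sec:easyflip}: let $A$ be the set of constraints that send a flip whose $\text{Decode}$ output equals the true codeword $y_{N(u)}$, and $B$ the set that send a flip with an incorrect $\text{Decode}$ output. Exactly as for \eqref{eq:def-A} and \eqref{ineq:|B|} one has $A=N_{\le t}(F)$ (using $d_0\ge 2t+1$, which follows from $\delta d_0>3$ and $t=\lfloor1/\delta\rfloor$), hence $|A|\ge\tfrac{\delta(t+1)-1}{t}c|F|$ by \cref{prop:expander}, and $B\subseteq N_{\ge d_0-t}(F)$ gives $|B|\le\tfrac{c|F|}{d_0-t}$; crucially, the $(\tfrac1\delta+\epsilon_0)$-shifted double count that produces \eqref{ineq:compare_A_and_B} — the one step where $\delta d_0>3$ is actually used — yields $|A|-|B|\ge\epsilon_0\delta^2c|F|$. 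In the random flip each $v\in S_m$ is put into $P$ independently with probability $\tfrac{m}{2c}$, and since flipping a corrupt coordinate repairs it while flipping a correct one spoils it, $|F'|=|F|-|P\cap F|+|P\setminus F|$. Every flip from $A$ lands on a corrupt variable and every flip reaching a correct variable comes from $B$, so $\sum_m m|S_m\cap F|\ge|A|$ and $\sum_m m|S_m\setminus F|\le|B|$, whence
\begin{align*}
\mathbb{E}\bigl[\,|F|-|F'|\,\bigr]=\frac{1}{2c}\Bigl(\sum_m m|S_m\cap F|-\sum_m m|S_m\setminus F|\Bigr)\ \ge\ \frac{|A|-|B|}{2c}\ \ge\ \frac{\epsilon_0\delta^2}{2}\,|F|.
\end{align*}

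To upgrade this to a high-probability statement, write $|F|-|F'|=\sum_{v\in S}Y_v$ with $Y_v=\mathbbm{1}[v\in P]$ for $v\in F$ and $Y_v=-\mathbbm{1}[v\in P]$ for $v\notin F$; these are independent, each lies in an interval of length $1$, and $|S|\le\sum_m m|S_m|=|A|+|B|\le|N(F)|\le c|F|$ (as $A,B$ are disjoint subsets of $N(F)$). Hoeffding's inequality then gives $\Pr\bigl[|F|-|F'|<\tfrac34\mathbb{E}[|F|-|F'|]\bigr]\le\exp\bigl(-\Theta(\mathbb{E}[|F|-|F'|]^2/|S|)\bigr)=\exp(-\Theta_{c,\delta,d_0}(|F|))$, so with that probability $|F'|\le(1-\rho)|F|$ with $\rho=\tfrac38\epsilon_0\delta^2$, which is (up to the choice of the small constant in \cref{algo:rand}) the contraction factor appearing in its loop bound. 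I would then check by a direct computation with the loop parameters that the number of iterations is large enough that, on the intersection of the good events, $|F|$ is driven below the firing threshold of line $14$ before the loop exits — so the algorithm never returns $\bot$ on the good event — and finally assemble the union bound over all iterations (plus the deterministic correctness of \hyperref[alg:main]{MainDecode} on an input within $\gamma n$) to obtain the overall failure probability $\exp(-\Theta_{c,\delta,d_0}(n))$.

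The main obstacle is not any one estimate — the expectation bound is essentially inherited from the analysis of \cref{lem:easyflip}, with the random thinning of $\bigcup_m S_m$ replacing the choice of a single $S_m$ — but the bookkeeping that keeps the concentration exponent at $\Theta(n)$ \emph{uniformly} along the run and guarantees a clean handoff: one must argue that during every round that genuinely needs the randomness the process is still in the large-corruption regime $|F|=\Theta(n)$, that the test fires exactly once $|F|$ has entered the window covered by \cref{thm:main}, and that the prescribed iteration count suffices to reach that window. This forces one to line up the stopping test $|U|\le(\delta-\tfrac1{d_0})c\gamma n$, the two-sided estimate $\tfrac{\delta d_0-1}{d_0-1}c|F|\le|U|\le c|F|$ from \cref{prop:expander}, and the loop length so that ``test not yet fired $\Rightarrow|F|=\Theta(n)$'' and ``test fired $\Rightarrow|F|<\gamma n$'' hold simultaneously; each implication is routine, but getting the constants to be mutually consistent is the delicate part.
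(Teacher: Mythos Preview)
Your proposal is correct and follows essentially the same approach as the paper: use the expansion estimates \eqref{ineq:|A|}, \eqref{ineq:compare_A_and_B} to show that the random flip step reduces $|F|$ by a constant fraction with high probability, iterate until the test on line~14 fires, and then hand off to \hyperref[alg:main]{MainDecode} via \cref{thm:main}. The only difference is a mild streamlining of the concentration step---you apply a single Hoeffding bound directly to the signed sum $|P\cap F|-|P\setminus F|$, whereas the paper bounds $|P|$ and $|P\cap F|$ separately (its Claims~\ref{lem:sizeofP} and~\ref{lem:corruptionsizeinP}) and then combines them; you are also more explicit than the paper about why $|F|=\Theta(n)$ in every round where the stopping test has not yet fired, which is exactly what pins the failure exponent at $\Theta(n)$.
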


\begin{lemma}\label{lem:randDecTime}
    If the input has distance at most $\alpha n$ from a codeword, then \cref{algo:rand} runs in linear time.
\end{lemma}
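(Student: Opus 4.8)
The plan is to bound separately the cost of the outer $\ell$-indexed loop of \cref{algo:rand} (the random corruption-reduction phase) and the cost of the at-most-one terminal call to \hyperref[alg:main]{MainDecode} on line~$15$. Since the former runs only constantly many rounds while the latter is already covered by \cref{lem:HardSearchtime}(ii), the work is essentially to show each single round is $O(n)$ and that \hyperref[alg:main]{MainDecode} is only ever fed an input within $\gamma n$ of a codeword.

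For a single pass of the outer loop I would account for the work as follows. Running Decode$(x_{N(u)})$ and computing $d_H(\omega,x_{N(u)})$ over all $u\in R$ costs $O((t_0+d)|R|)=O(n)$, since $|R|=cn/d$. Recording, for each constraint, the single variable it flips, and then accumulating a counter $\tau_v\in\{0,\dots,c\}$ for every $v\in L$ to read off $S_1,\dots,S_c$, costs $O(cn)$. Drawing the independent coins that form $P\subseteq S$ and flipping the bits of $P$ costs $O(|S|)=O(n)$, because $|S|\le\sum_{m=1}^c m|S_m|\le|R|$ (each constraint sends at most one flip). Rebuilding $U$ by running Check$(x_{N(u)})$ over all $u\in R$ costs $O(h_0|R|)=O(n)$, and the test on line~$14$ is $O(1)$. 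Hence one iteration is $O(n)$; since the number of iterations $\bigl\lceil\log(\gamma/\alpha)/\log(1-\tfrac{3\epsilon(\delta(t+1)-1)}{4t})\bigr\rceil$ is a constant depending only on $c,d,\delta,d_0$ and $\epsilon$, the whole random phase runs in $O(n)$ time. In particular, unlike in \hyperref[alg:main]{MainDecode}, here no amortization over rounds is needed, since there are only constantly many rounds.

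It then remains to bound the terminal \hyperref[alg:main]{MainDecode}$(x)$ call. By \cref{lem:HardSearchtime}(ii), this runs in $O(n)$ time as soon as the input $x$ has at most $\gamma n$ corruptions, so I would show that the triggering condition $|U|\le(\delta-\tfrac1{d_0})c\gamma n$ of line~$14$ forces $d\bigl(x,T(G,C_0)\bigr)\le\gamma n$. Let $F$ be the corrupt set of $x$ relative to a nearest codeword. Since $d(C_0)\ge d_0$, every constraint with between $1$ and $d_0-1$ neighbours in $F$ is unsatisfied, so $N_{\le d_0-1}(F)\subseteq U$; assuming $|F|\le\alpha n$, \cref{prop:expander} applied with $t=d_0-1$ gives $|U|\ge|N_{\le d_0-1}(F)|\ge\tfrac{\delta d_0-1}{d_0-1}\,c|F|$, and feeding in the triggering inequality yields $|F|\le\tfrac{d_0-1}{\delta d_0-1}\cdot\tfrac{|U|}{c}\le\tfrac{d_0-1}{\delta d_0-1}\bigl(\delta-\tfrac1{d_0}\bigr)\gamma n=\tfrac{d_0-1}{d_0}\gamma n<\gamma n$, as required. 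Combined with the first two paragraphs this gives the claimed $O(n)$ bound.

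The step I expect to be the main obstacle is precisely the hypothesis $|F|\le\alpha n$ used above, i.e.\ confirming that when line~$14$ fires the current word is still in the range where \cref{prop:expander} is valid. This is the same invariant that powers \cref{lem:randDecCorrect}: the phase starts with $|F|\le\alpha n$ by assumption, and in each iteration the only newly corrupted variables lie among the neighbours of constraints that decode to an incorrect codeword; each such constraint sees at least $d_0-t$ corruptions (with $t=\lfloor1/\delta\rfloor$), so an edge count between $F$ and this set $B$ gives $|B|\le c|F|/(d_0-t)$, hence at most a bounded multiplicative increase of $|F|$ per round. Together with the high-probability per-round decrease established for \cref{lem:randDecCorrect}, $|F|$ stays below $\alpha n$ across the constantly many rounds, so \hyperref[alg:main]{MainDecode} is invoked on a word within $\gamma n$ of a codeword and the estimate above applies, completing the proof.
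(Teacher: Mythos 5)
Your proposal is correct and follows essentially the same plan as the paper's proof: constantly many rounds of the outer loop, each accounted at $O(n)$, plus a single terminal call to \hyperref[alg:main]{MainDecode}. The paper's own proof of this lemma is terser and simply cites \cref{thm:main} for the MainDecode call; you fill in the reasoning that the triggering condition $|U|\le(\delta-\tfrac1{d_0})c\gamma n$ forces $|F|\le\gamma n$, which in the paper appears separately as \cref{lem:callDeterministicDec} and is invoked in the proof of \cref{lem:randDecCorrect} rather than in the time analysis. Your final paragraph, worrying that $|F|\le\alpha n$ must hold when line~14 fires, is a valid concern the paper glosses over; note though that this part of your argument (and, implicitly, the paper's) is only a high-probability guarantee, so strictly speaking it ties the running-time bound to the same event on which correctness is conditioned — matching the overall flavor of \cref{thm:randDec} but slightly stronger than the unconditional phrasing of the lemma statement.
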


Assuming the correctness of the above lemmas, we can prove \cref{thm:randDec} as follows.

\begin{proof}[Proof of \cref{thm:randDec}]
If the input word has at most $\alpha n$ errors, then by \cref{lem:randDecCorrect}, with probability $1-\exp\left\{ -\Theta_{c, \delta, d_0}(n) \right\}$, the decoding outputs the correct codeword. Furthermore, the running time is linear by \cref{lem:randDecTime}.
\end{proof}

\subsection{Proof of \texorpdfstring{\cref{lem:randDecCorrect} and \cref{lem:randDecTime}}{}}
Recall that we defined $A$ as the set of constraints $u\in R$ that sends a flip and \text{Decode}$(x_{N(u)})$ computes the correct codeword in $C_0$ ( see \eqref{eq:def-A}), and $B$ to be the set of constraints $u\in R$ that sends a flip and \text{Decode}$(x_{N(u)})$ computes an incorrect codeword in $C_0$ (see \eqref{eq:def-B} ). Also, recall we let $\alpha_m$ denote the fraction of corrupt variables in $S_m$ (see \eqref{eq:def-alpha}) and we let $\beta_m$ denote the fraction of flips sent from $A$ to $S_m$ among all flips received by $S_m$ (see \eqref{eq:def-beta}). Now we let $ M:= A \cup B$.

First, we bound the size of $P$ in an arbitrary iteration.
\begin{claim}\label{lem:sizeofP}
For every constant $\epsilon > 0$, with probability $\ge 1 - \exp\left\{-\Theta_{c, \epsilon}(|M|)\right\}$, the size of $P$ is in $ \left[(1-\epsilon)\frac{|M|}{2c} , (1+\epsilon) \frac{|M|}{2c}  \right] $.
\end{claim}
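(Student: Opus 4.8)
The plan is to write $|P| = \sum_{m=1}^{c} \sum_{v \in S_m} X_v$, where $X_v$ is the indicator that variable $v$ is sampled, so that the $X_v$ are mutually independent with $\mathbb{E}[X_v] = m/(2c)$ for $v \in S_m$. First I would compute the expectation: since the total number of flips sent to $L$ equals $|M| = |A| + |B| = \sum_{m=1}^{c} m|S_m|$ (this is exactly \eqref{eq:|A|+|B|}), we get
\[
\mathbb{E}[|P|] \;=\; \sum_{m=1}^{c} |S_m| \cdot \frac{m}{2c} \;=\; \frac{1}{2c}\sum_{m=1}^{c} m|S_m| \;=\; \frac{|M|}{2c}.
\]
So the claimed interval is centered exactly at the mean, and the statement is purely a concentration claim.

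The main step is a Chernoff bound for the sum of independent (but non-identically-distributed) Bernoulli variables. I would invoke the standard multiplicative Chernoff inequality: for independent $\{0,1\}$ variables with $\mu := \mathbb{E}[|P|]$ and any $\epsilon \in (0,1)$,
\[
\Pr\big[\,\big||P| - \mu\big| > \epsilon \mu\,\big] \;\le\; 2\exp\!\left(-\frac{\epsilon^2 \mu}{3}\right).
\]
Plugging in $\mu = |M|/(2c)$ gives a failure probability at most $2\exp(-\epsilon^2 |M|/(6c))$, which is of the form $\exp\{-\Theta_{c,\epsilon}(|M|)\}$, as required. Since $\mu = |M|/(2c)$ and $|P| \in [(1-\epsilon)\mu, (1+\epsilon)\mu]$ is precisely the event in the claim, this finishes the argument.

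I do not anticipate a real obstacle here; the only things to be a little careful about are (i) confirming that $\mathbb{E}[|P|]$ really equals $|M|/(2c)$, which is immediate from \eqref{eq:|A|+|B|} and the definition of the sampling probabilities, and (ii) making sure the Chernoff bound is applied in the correct regime (independence holds by the "independent randomness" in line~11 of \cref{algo:rand}, and the two-sided bound is just the union of the upper- and lower-tail estimates). One degenerate edge case worth a remark: if $|M| = 0$ then $S = \emptyset$, $P = \emptyset$, and the interval $[0,0]$ holds trivially with probability $1$, so the claim is vacuously fine; otherwise $|M| \ge 1$ and the exponent $\Theta_{c,\epsilon}(|M|)$ is genuinely bounded away from $0$.
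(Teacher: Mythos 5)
Your proof is correct and follows essentially the same approach as the paper: decompose $|P|$ as a sum of independent Bernoulli indicators, compute $\mathbb{E}[|P|]=|M|/(2c)$ via the identity $\sum_{m}m|S_m|=|A|+|B|=|M|$, and apply a standard concentration inequality. The only minor difference is that you use the multiplicative Chernoff bound (exponent $\epsilon^2|M|/(6c)$) whereas the paper applies Hoeffding's inequality and then uses $|S|\le |M|$ (exponent $\epsilon^2|M|/(2c^2)$); both yield a failure probability of the form $\exp\{-\Theta_{c,\epsilon}(|M|)\}$.
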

\begin{proof}
    In Algorithm \ref{algo:rand}, for every $m\in [c]$, each variable in $S_i$ is picked independently with probability $\frac{m}{2c}$. For each $v\in S$, let $X_v$ be the indicator random variable of the event that the variable $v$ is picked. So for every $v\in S_m$, $\Pr[X_v = 1] = \frac{m}{2c}$. Let $X = \sum_{v\in S} X_v$. It is easy to see that $X = |P|$. By the linearity of expectation, we have that
    $$ \text{E} X = \sum\limits_{v \in S} \text{E} X_v =  \sum\limits_{m\in [c]} \frac{m}{2c}|S_m| = \frac{|M|}{2c}.$$
    By Hoeffding's inequality,
    $$\Pr\left[ X  \in  \left[(1-\epsilon)\frac{|M|}{2c} , (1+\epsilon) \frac{|M|}{2c}  \right]  \right] \ge 1- 2  \exp\left\{-\frac{2 \left(\epsilon \frac{|M|}{2c}\right)^2}{|S|}  \right\}
 \ge 1- 2\exp\left\{-\frac{\epsilon^2|M|}{2c^2}  \right\},$$
 where the second inequality follows from the fact that $|S| \le |M|$.
\end{proof}

Next, we show that $P$ contains significantly more corrupted variables than corrected variables.

\begin{claim}\label{lem:corruptionsizeinP}
There exists a constant $\epsilon  $ such that with probability $\ge 1 - \exp\left\{-\Theta_{c, \epsilon}(|M|)\right\} $, the number of corrupted variables in $P$ is at least $ (1/2+\epsilon)  \frac{|M|}{2c}$.
\end{claim}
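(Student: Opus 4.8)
The plan is a single-shot Chernoff/Hoeffding computation built on top of the quantities already introduced in this subsection: identify the random variable counting corrupt variables landing in $P$, pin down its expectation, lower-bound that expectation using \cref{lem:alpha} together with the recalled inequalities on $|A|,|B|$, and then concentrate. Throughout I work under the (inherited) hypothesis that $F$, the set of corrupt variables of the current word, satisfies $|F|\le\alpha n$, so that $F$ is an expanding set and the inequalities \eqref{ineq:compare_A_and_B}, \eqref{ineq:trivial_upper_bound_of_A+B} and \cref{lem:alpha} are all available. For each $m\in[c]$ and each $v\in S_m$, let $X_v$ be the indicator of the event $\{v\in P\}$; by the sampling rule in line~11 of \cref{algo:rand} the variables $X_v$ are mutually independent with $\Pr[X_v=1]=\tfrac{m}{2c}$. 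Set $Y:=\sum_{v\in S\cap F}X_v=|P\cap F|$, the number of corrupt variables selected into $P$. By linearity of expectation and the definition \eqref{eq:def-alpha} of $\alpha_m$,
\[
\mathrm{E}[Y]=\sum_{m=1}^{c}\frac{m}{2c}\,|S_m\cap F|=\sum_{m=1}^{c}\frac{m}{2c}\,\alpha_m|S_m|.
\]

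Next I would lower-bound this expectation. By \cref{lem:alpha}, $\alpha_m\ge\beta_m$ for every $m$, and since $\beta_m m|S_m|$ is exactly the number of flips that $A$ sends to $S_m$ while every constraint of $A$ sends exactly one flip, $\sum_m\beta_m m|S_m|=|A|$. Hence $\mathrm{E}[Y]\ge\tfrac{|A|}{2c}$. Combining \eqref{ineq:compare_A_and_B} with \eqref{ineq:trivial_upper_bound_of_A+B}, exactly as in the derivation of \eqref{ineq:|A|/(|A|+|B|)}, yields $\tfrac{|A|}{|M|}\ge\tfrac12+\tfrac{\epsilon_0\delta^2}{2}$ (where $\epsilon_0>0$ is the constant fixed from $\delta d_0>3$), so that
\[
\mathrm{E}[Y]\ \ge\ \Big(\tfrac12+\tfrac{\epsilon_0\delta^2}{2}\Big)\frac{|M|}{2c}.
\]

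Finally I would fix the constant $\epsilon:=\tfrac{\epsilon_0\delta^2}{4}$, so that $\mathrm{E}[Y]\ge\big(\tfrac12+2\epsilon\big)\tfrac{|M|}{2c}$, and apply Hoeffding's inequality to $Y$, which is a sum of $|S\cap F|$ independent $[0,1]$-valued random variables. Using $|S\cap F|\le|S|\le|M|$ (the last inequality because $|M|=\sum_m m|S_m|\ge\sum_m|S_m|=|S|$),
\[
\Pr\Big[Y<\big(\tfrac12+\epsilon\big)\tfrac{|M|}{2c}\Big]\ \le\ \Pr\Big[Y<\mathrm{E}[Y]-\tfrac{\epsilon|M|}{2c}\Big]\ \le\ \exp\Big\{-\tfrac{2}{|M|}\big(\tfrac{\epsilon|M|}{2c}\big)^{2}\Big\}\ =\ \exp\Big\{-\tfrac{\epsilon^2|M|}{2c^{2}}\Big\},
\]
which is $\exp\{-\Theta_{c,\epsilon}(|M|)\}$, as claimed (the statement being vacuous when $|M|=0$).

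This argument is essentially routine given what precedes it; the \emph{only} genuine point of care is that the entire computation relies on $F$ being an expanding set, i.e.\ on $|F|\le\alpha n$, so that \eqref{ineq:compare_A_and_B} and the expansion-based bounds apply. I expect maintaining that invariant across the iterations of \cref{algo:rand} — not anything internal to this claim — to be where the real work lies, and it properly belongs to the proof of \cref{lem:randDecCorrect}, which I would phrase so as to supply this hypothesis before invoking the present claim. A minor bookkeeping item is the inequality $|S\cap F|\le|M|$, which is what makes the Hoeffding exponent scale linearly in $|M|$, matching the $\exp\{-\Theta_{c,\epsilon}(|M|)\}$ demanded by the statement.
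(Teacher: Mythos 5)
Your proposal is correct and follows essentially the same approach as the paper's own proof: both define $Y=|P\cap F|$, compute $\mathrm{E}[Y]$ via linearity, lower-bound it by $|A|/(2c)$ using \cref{lem:alpha} and the identity $\sum_m m\beta_m|S_m|=|A|$, invoke \eqref{ineq:|A|/(|A|+|B|)} to get $\mathrm{E}[Y]\ge(1/2+2\epsilon)|M|/(2c)$ with $\epsilon=\epsilon_0\delta^2/4$, and conclude with Hoeffding using $|S|\le|M|$. You even get the direction of the Hoeffding inequality right (the paper's displayed inequality has a sign typo in the probability event), and you correctly flag the implicit dependency on $|F|\le\alpha n$ that the paper leaves to the surrounding argument.
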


\begin{proof}[Proof of \cref{lem:corruptionsizeinP}]
For every $v\in S$, let $Y_v$ be the indicator random variable of the event that $X_v=1$ and $v\in F$. Let $Y=\sum_{v\in S } Y_v$. By definition, $Y=|P\cap F|$. Note that for every $v\notin S\cap F$, $\Pr\left[Y_v = 1 \right] = 0$. By the linearity of expectation, we have that
\begin{equation}
 \text{E} Y = \sum_{v\in S}\text{E}  Y_v =\sum_{m\in [c]} \sum_{v\in S_m}\text{E} Y_v =  \sum_{m\in [c]} \sum_{v\in S_m\cap F}  \frac{m}{2c}=\sum_{m\in [c]}\frac{m}{2c}\alpha_m|S_m| \ge \sum_{m\in [c]}   \frac{m}{2c}\beta_m |S_m|,
\end{equation}
where the inequality follows from \cref{lem:alpha}.

By the definition of $\beta_m$,
\[ m\beta_m |S_m| =  \left| \left\{ \mbox{ The number of ``flips'' sent from } A \mbox{ to } S_m  \right\} \right|.\]
Hence, one can infer that
\begin{align*}
\text{E} Y  &\ge \sum_{m\in [c]} \frac{ \left| \left\{ \mbox{ The number of ``flips'' sent from } A \mbox{ to } S_m  \right\} \right|}{2c}\\
       & = \frac{\left| \left\{ \mbox{ The number of ``flips'' sent from } A \mbox{ to } S   \right\} \right|}{2c} \\
       & = \frac{\left|A\right|}{2c},
\end{align*}
where the last equality is due to that each constraint in $R$ can only send at most $1$ message.

Set $\epsilon=\frac{\epsilon_0\delta^2}{4}>0$. It follows by \eqref{ineq:|A|/(|A|+|B|)} that
\begin{align*}
   |A|\ge \left(\frac{1}{2}+\frac{\epsilon_0\delta^2}{2}\right)|M|=\left(\frac{1}{2}+2\epsilon\right)|M|.
\end{align*}
Thus, we can infer that
\[ \text{E} Y \ge   \frac{\left|  A \right|}{2c} \ge \left(\frac{1}{2}  + 2\epsilon \right) \frac{|M|}{2c}.\]

By Hoeffding's inequality,
$$\Pr\left[ Y  \le  \left( \frac{1}{2}  + \epsilon \right) \frac{|M|}{2c}   \right] \ge 1-  \exp\left\{-\frac{2 \left(\epsilon \frac{|M|}{2c}\right)^2}{|S|}  \right\} \ge 1-  \exp\left\{-\frac{\epsilon^2|M|}{2c^2}  \right\},$$
where the second inequality follows from that $|S| \le |M|$.
\end{proof}

The following claim shows that as long as the number of unsatisfied constraints is small enough, we can ensure that the number of corrupt variables is at most $\gamma n$. Hence, we can handle the matter with \cref{alg:main}.

\begin{claim}\label{lem:callDeterministicDec}
    If $|U| \le   \left( \delta - \frac{1}{d_0}\right) c\gamma n   $ and $|F| \le \alpha n$, then $ |F| \le \gamma n $.
\end{claim}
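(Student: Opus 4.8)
The plan is to derive $|F| \le \gamma n$ directly from two facts already in hand: the containment $N_{\le d_0-1}(F) \subseteq U$ recorded in \eqref{eq:unsatisfied_constraints} (which holds because $d(C_0)=d_0$: a constraint with between $1$ and $d_0-1$ corrupt neighbours cannot have its neighbourhood restriction lie in $C_0$), together with the expansion estimate of \cref{prop:expander}.

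First I would apply \cref{prop:expander} to the set $S=F$ with the parameter $t=d_0-1$. This is legitimate since $|F|\le\alpha n$ by hypothesis and since $t=d_0-1\in[d]$ — indeed $\delta d_0>3$ together with $\delta\le 1$ forces $d_0\ge 2$, while $C_0\subseteq\mathbb{F}_2^d$ forces $d_0\le d$. The proposition then yields
\[
|N_{\le d_0-1}(F)|\ \ge\ \frac{\delta d_0-1}{d_0-1}\,c|F| ,
\]
and combining this with $N_{\le d_0-1}(F)\subseteq U$ gives $\frac{\delta d_0-1}{d_0-1}\,c|F|\le|U|$.

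Next I would plug in the hypothesis $|U|\le\left(\delta-\frac{1}{d_0}\right)c\gamma n$, rewriting the right-hand side as $\frac{\delta d_0-1}{d_0}\,c\gamma n$, to obtain $\frac{\delta d_0-1}{d_0-1}\,c|F|\le\frac{\delta d_0-1}{d_0}\,c\gamma n$. Since $\delta d_0>3>1$, the common factor $\frac{\delta d_0-1}{d_0-1}$ is strictly positive, so I can cancel it (and $c$) to conclude $|F|\le\frac{d_0-1}{d_0}\,\gamma n<\gamma n$, which is exactly the claim.

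I do not anticipate a real obstacle here: the statement is a one-line consequence of the expansion bound once the correct value $t=d_0-1$ is used, and the only points worth double-checking are the admissibility of that choice of $t$ and the elementary bookkeeping $\delta-\frac{1}{d_0}=\frac{\delta d_0-1}{d_0}$.
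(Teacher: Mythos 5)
Your proof is correct and is essentially the paper's argument, just presented directly rather than by contradiction: both rest on the inclusion $N_{\le d_0-1}(F)\subseteq U$ together with \cref{prop:expander} applied at $t=d_0-1$, giving $\frac{\delta d_0-1}{d_0-1}c|F|\le|U|$, and then the elementary rewrite $\delta-\frac{1}{d_0}=\frac{\delta d_0-1}{d_0}$ to compare the two bounds. Your sanity checks on the admissibility of $t=d_0-1$ are careful and correct (the paper leaves them implicit).
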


\begin{proof}
Suppose that $ \gamma n<|F|  \le \alpha n$. By \cref{prop:expander}, we have that
$$|U|\ge\frac{\delta d_0-1}{d_0-1}c|F|>\left(\delta -\frac{1}{d_0}\right)c\gamma n,$$
which is a contradiction.
\end{proof}

Now, we can give the proofs of \cref{lem:randDecCorrect} and \cref{lem:randDecTime}, respectively, as follows.

\begin{proof}[Proof of \cref{lem:randDecCorrect}]
In each iteration, consider the case that the number of errors $|F|$ is at most $\alpha n$. If $|U| \le   \left( \delta - \frac{1}{d_0}\right) c\gamma n $, then by \cref{lem:callDeterministicDec}, $|F| \le \gamma n$. Therefore, it follows by \cref{thm:main} that when $\delta d_0 > 3$, all errors can be corrected. Otherwise, we claim that the number of corrupt variables can be decreased by a constant fraction in this iteration.

Recall that $M=A\cup B\subseteq U$. It follows by \eqref{ineq:|A|} that
\begin{align}\label{ineq:|M|}
 |M|=|A|+|B| \ge \frac{\delta(t+1)-1}{t}c|F|.
\end{align}
Note that by \cref{lem:corruptionsizeinP}, with probability $1-\exp\{-\Theta_{c,\delta,d_0}(n) \}$, the number of corruptions in $P$ is at least $(1/2+\epsilon)  \frac{|M|}{2c}$ where $\epsilon>0$ is a constant. Also note that by \cref{lem:sizeofP}, with probability $1-\exp\{-\Theta_{c,\delta,d_0}(n) \}$, the size of $P$ is in $ \left[(1-\epsilon/2)\frac{|M|}{2c} , (1+\epsilon/2) \frac{|M|}{2c}  \right] $. When both of these events occur, by flipping all variables in $P$, the number of corruptions is reduced by at least $\frac{3\epsilon |M|}{ 4
 c} $. It follows by \eqref{ineq:|M|} that
$$\frac{3\epsilon |M|}{4c}\ge\frac{3\epsilon(\delta(t+1)-1)}{4t}|F|.$$
This shows the number of corrupt variables indeed is decreased by a constant fraction in this iteration.

As a result, after at most $ \frac{\log \frac{\gamma }{\alpha } }{ \log \left(1-  \frac{3\epsilon(\delta(t+1)-1)}{4t} \right) }   $  iterations, the number of corruptions is at most $\gamma n$. Then the decoding can call \cref{alg:main} to correct all errors.
\end{proof}

\begin{proof}[Proof of \cref{lem:randDecTime}]
    Note that the number of iterations is constant. For each iteration, running the decoder of $C_0$ for all right vertices takes linear time. Sending messages and deriving $S_i, i\in [c]$ also take linear time, since each right vertex can send at most $1$ message. For picking $P$, notice that picking each variable uses a constant number of bits since the probability of picking $1 $ variable is a constant. Flipping bits in $P$ also takes linear time as $P \subseteq [n]$. By \cref{thm:main}, the \hyperref[alg:main]{MainDecode} process takes linear time. So, the overall running time is linear.
\end{proof}

\section{A lower bound on \texorpdfstring{$\delta d_0$}{}: Proof of \texorpdfstring{\cref{prop:lowerbd}}{}}

In this section we will prove the following result, which provides a necessary condition $\delta d_0>1$ for \cref{que-2}. Our proof borrows some ideas from the result of Viderman \cite{viderman2013linear}, which showed that $\delta>1/2$ is necessary for \cref{que-1}.

\begin{proposition}[restatement of \cref{prop:lowerbd}]
    For every $d,d_0\ge2$ and $n\ge 10d_0$, there exist constants $0<\alpha<1,c\ge3$ and a $(c,d,0.9\alpha,\frac{1}{d_0})$-bipartite expander $G$ with $V(G)=L\cup R$ and $|L|=n$ such that for every $[d,k_0,d_0]$-linear code $C_0$, the Tanner code $T(G,C_0)$ has minimum Hamming distance at most $d_0$.
\end{proposition}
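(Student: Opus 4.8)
The plan is to show $d(T(G,C_0))\le d_0$ by exhibiting an explicit nonzero codeword of $T(G,C_0)$ of weight $d_0$, and to build $G$ so that such a codeword exists whatever $C_0$ is. The starting point is a structural observation: if $S\subseteq L$ has $|S|=d_0$ and \emph{every} constraint $u\in R$ with $N(u)\cap S\neq\emptyset$ in fact satisfies $S\subseteq N(u)$, then $\mathbbm{1}_S\in T(G,C_0)$ exactly when, for each such $u$, the restriction $(\mathbbm{1}_S)_{N(u)}$ — which is the characteristic vector of the image of $S$ under the labeling of $N(u)$ — lies in $C_0$. In particular it is enough that this labeling maps $S$ onto a fixed $d_0$-subset $T\subseteq[d]$ that is the support of a minimum-weight codeword of $C_0$: since $d(C_0)=d_0$, a minimum-weight codeword has weight exactly $d_0$ and hence equals the indicator of its (size-$d_0$) support, so such a $T$ always exists. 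The only difficulty is that $T$ depends on $C_0$ while $G$ may not, so $G$ has to ``prepare'' a set $S$ for every possible $T$.

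I would therefore assemble $G$ from $\binom{d}{d_0}$ vertex-disjoint \emph{gadgets}, one for each $d_0$-subset $T\subseteq[d]$. The gadget for $T$ has $d_0$ special left vertices $S_T$ and $c$ right vertices $W_T$, with $S_T$ completely joined to $W_T$ (so $N(v)=W_T$ for every $v\in S_T$ and each such $v$ has degree exactly $c$), and with each $w\in W_T$ joined to $d-d_0$ further ``filler'' left vertices, the labeling of $N(w)$ being chosen to send $S_T$ onto $T$. The fillers, together with a remaining ``bulk'' of left and right vertices, are then wired so that $G$ is $(c,d)$-biregular and so that, away from the gadgets, $G$ behaves like a bipartite graph of vertex expansion ratio close to $1$; the existence of such a completion follows from the probabilistic argument behind Proposition~\ref{prop:random} (fix the $O(1)$ gadget edges, choose the rest at random, take a union bound over bad sets). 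Granting the graph, correctness is immediate: for any $[d,k_0,d_0]$-linear code $C_0$, pick a minimum-weight codeword $w$, set $T=\operatorname{supp}(w)$, and observe that $\mathbbm{1}_{S_T}\in T(G,C_0)\setminus\{0\}$, because the constraints in $W_T$ see the codeword $w$ and all other constraints see $0$; hence $d(T(G,C_0))\le|S_T|=d_0$.

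The real work is checking that $G$ is a $(c,d,0.9\alpha,\tfrac1{d_0})$-bipartite expander, i.e.\ that every $S\subseteq L$ with $|S|\le 0.9\alpha n$ has $|N(S)|\ge\tfrac{c}{d_0}|S|$; I would split into cases according to how $S$ meets the gadgets. When $S$ avoids the special blocks (or meets each in few vertices), the near-optimal expansion of the non-gadget part gives $|N(S)|\ge\delta_0 c|S|\gg\tfrac{c}{d_0}|S|$ with lots of slack. The extremal case is $S=\bigcup_{T\in\mathcal I}S_T$ for a family $\mathcal I$, where $|N(S)|=\sum_{T\in\mathcal I}|W_T|=|\mathcal I|\,c=\tfrac{c}{d_0}|S|$ — exactly on the threshold, which is precisely why $\delta d_0>1$ (rather than $\ge 1$) is the correct necessary condition and why this construction cannot do better. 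For mixed sets one tallies slack: a partial special block $S\cap S_T$ of size $k<d_0$ still contributes $c\ge\tfrac{c}{d_0}k$, a surplus of $\tfrac{c}{d_0}(d_0-k)$; and a filler or bulk vertex adjoined to a special block contributes close to $c$ genuinely new neighbors because its neighborhood is scattered through the bulk rather than concentrated on some $W_T$. One then argues that the surplus collected from non-gadget vertices always compensates any shortfall. The factor $0.9$ simply absorbs the constantly many vertices spent on the gadgets.

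The main obstacle, then, is not the coding-theoretic half (which is short once the reduction above is isolated) but this expansion bookkeeping: one must guarantee that the \emph{only} sets attaining the ratio $1/d_0$ are unions of the special blocks $S_T$, and that no set — in particular no mixture of a special block with fillers or bulk vertices — ever drops strictly below it. Getting the gluing right (gadgets pairwise disjoint, filler edges spread out in the bulk, biregularity preserved) so that these uniform estimates hold is the part that needs the most care.
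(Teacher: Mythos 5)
Your proposal is correct in spirit and shares the paper's core mechanism (a gadget of $d_0$ left vertices whose $c$ common right-neighbors see a minimum-weight codeword of $C_0$, glued to a good expander on the remainder), but the paper's route is considerably simpler: it uses a \emph{single} gadget $L_2=\{n-d_0+1,\ldots,n\}$ joined to $c$ constraints $R_2$, and disposes of the dependence on $\operatorname{supp}(w)$ by a without-loss-of-generality step --- since permuting the coordinates of $C_0$ preserves all of its parameters $[d,k_0,d_0]$, one may assume $0^{d-d_0}1^{d_0}\in C_0$, which exactly matches the fixed ordering of each $N(u_i)$. That WLOG is precisely the observation you were missing when you wrote that ``$G$ has to prepare a set $S$ for every possible $T$''; the $\binom{d}{d_0}$-gadget version is logically sound but unnecessary, and it also quietly strengthens the hypothesis $n\ge 10d_0$ to something like $n\gtrsim\binom{d}{d_0}\cdot d\cdot c$, which the proposition does not grant you. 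More importantly, the single-gadget construction makes the expansion check --- the step you flagged as ``the part that needs the most care'' but did not carry out --- essentially trivial: the paper writes $S=S_1\cup S_2$ with $S_1\subseteq L_1$ and $S_2\subseteq L_2$, gets $|N(S_1)\cap R_1|\ge\frac34(c-1)|S_1|\ge\frac{c}{2}|S_1|\ge\frac{c}{d_0}|S_1|$ from the $(c-1,d,\alpha,\tfrac34)$-expander $G_1$ on $L_1\cup R_1$, and gets $|N(S_2)\cap R_2|=c\ge\frac{c}{d_0}|S_2|$ whenever $S_2\neq\emptyset$; these two contributions land in disjoint parts of $R$, so they simply add. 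With many gadgets you would instead have to argue about sets that straddle several special blocks together with filler and bulk vertices, and your sketch of this bookkeeping (``surplus collected from non-gadget vertices always compensates any shortfall'') is plausible but is exactly the kind of claim that needs a careful case analysis; the single-gadget route avoids it entirely.
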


To prove the proposition, we will construct an expander $G$ with expansion ratio $\delta=\frac{1}{d_0}$ such that for every $[d,k_0,d_0]$-linear code $C_0$, there exists a codeword $y\in T(G,C_0)$ with Hamming weight exactly $d_0$. In fact, it suffices to show that every constraint in $R$ is either adjacent to all variables in the support of $y$ or to none of the variables in the support of $y$.

\begin{proof}
The required graph $G$ is constructed as follows. We will ignore the floorings and ceilings whenever they are not important. Assume that $V(G)$ admits a bipartition with $V(G)=L\cup R$, where $L=[n]$. Let $L_1=[n-d_0]$, $L_2=L\backslash L_1=\{n-d_0+1,\ldots,n\}$, and $R=R_1\cup R_2\cup R_3$, where the $R_i$'s are pairwise disjoint, and are constructed as follows.
\begin{itemize}
    \item Given $d,d_0\ge 2$ and $n\ge 10d_0$, by \cref{prop:random}, there exist a constant $0<\alpha<1$ and an integer $c\ge 2$ such that there exists a $(c-1,d,\alpha,\frac{3}{4})$-bipartite expander $G_1$ with bipartition $V(G_1)=L_1\cup R_1$ and $|L_1|=n-d_0$.
    \item  Let $R_2=\{u_1,\ldots,u_c\}$ be a set of $c$ constraints, where for each $i\in [c]$,
$$N(u_i)=\{(i-1)(d-d_0)+1,\ldots,(i-1)(d-d_0)+d-d_0\}\cup L_2.$$
\item Set $m=\frac{n-d_0-(d-d_0)c}{d}$. Let $R_3=\{u'_1,\ldots,u'_m\}$ be a set of $m$ constraints, where for each $i\in[m]$,
$$N(u'_i)=\{(i-1)d+1+(d-d_0)c,\ldots,(i-1)d+d+(d-d_0)c\}.$$
\end{itemize}

It is routine to check that $G$ is $(c,d)$-regular. Next, we will show that every subset $S$ of $L$ with $|S|\le0.9\alpha n$ has at least $c|S|/d_0$ neighbors in $R$. Let $S_1=S\cap L_1$ and $S_2=S\setminus S_1$. It follows by $n\ge 10d_0$ that $|S_1|\le0.9\alpha n\le\alpha(n-d_0)=\alpha |L_1|$. Therefore, we have
    \begin{align*}
        |N(S_1)|\ge|N(S_1)\cap R_1|\ge \frac{3}{4}(c-1)|S_1|\ge\left(\frac{3}{4}-\frac{1}{4}\right)c|S_1|=\frac{1}{2}c|S_1|\ge \frac{1}{d_0}c|S_1|,
    \end{align*}
where the third inequality follows from $c\ge3$ and the last inequality follows from $d_0\ge2$. Moreover, by the definition of $G$, if $S_2\neq\emptyset$ then $|N(S_2)|=c$. Therefore, it follows by $|S_2|\le d_0$ that
    \begin{align*}
        |N(S_2)|\ge \frac{1}{d_0}c|S_2|.
    \end{align*}
Combining the two inequalities above, one can infer that
    \begin{align*}
        |N(S)|&=|N(S)\cap (R_1\cup  R_3)|+|N(S)\cap R_2|\ge |N(S_1)\cap R_1|+|N(S_2)\cap R_2|\\
        &\ge \frac{1}{d_0}c|S_1|+\frac{1}{d_0}c|S_2|=\frac{1}{d_0}c|S|,
    \end{align*}
which implies that $G$ is a $(c,d,0.9\alpha,\frac{1}{d_0})$-bipartite expander.

Let $C_0\subseteq\mathbb{F}_2^d$ be a $[d,k_0,d_0]$-linear code and assume without loss of generality that $0^{d-d_0}1^{d_0}\in C_0$. It suffices to show that $y:=0^{n-d_0}1^{d_0}\in T(G,C_0)$, which implies that $d(T(G,C_0))\le d_0$ as $T(G,C_0)$ is a linear code. Indeed, it is routine to check that for every $u\in R_1\cup R_3$, $y_{N(u)}=0^d\in C_0$ and for every $u\in R_2$, $y_{N(u)}=0^{d-d_0}1^{d_0}\in C_0$, as needed.
\end{proof}

\section*{Acknowledgements}
Minghui Ouyang would like to thank Extremal Combinatorics and Probability Group (ECOPRO), Institute for Basic Science (IBS, Daejeon, South Korea) for hosting his visit at the end of 2023.

The research of Yuanting Shen and Chong Shangguan is supported by the National Key Research and Development Program of China under Grant No. 2021YFA1001000, the National Natural Science Foundation of China under Grant Nos. 12101364 and 12231014, and the Natural Science Foundation of Shandong Province under Grant No. ZR2021QA005.

{\small
\bibliographystyle{plain}
\bibliography{Expander_codes}

\begin{thebibliography}{10}

\bibitem{alon2002explicit}
Noga Alon and Michael Capalbo.
\newblock Explicit unique-neighbor expanders.
\newblock In {\em The 43rd Annual IEEE Symposium on Foundations of Computer
  Science, 2002. Proceedings.}, pages 73--79. IEEE, 2002.

\bibitem{arora2012message}
Sanjeev Arora, Constantinos Daskalakis, and David Steurer.
\newblock Message-passing algorithms and improved lp decoding.
\newblock {\em IEEE Trans. Inf. Theory}, 58(12):7260--7271, 2012.

\bibitem{asherov2024bipartite}
Ron Asherov and Irit Dinur.
\newblock Bipartite unique neighbour expanders via ramanujan graphs.
\newblock {\em Entropy}, 26(4):348, 2024.

\bibitem{becker2016symmetric}
Oren Becker.
\newblock Symmetric unique neighbor expanders and good ldpc codes.
\newblock {\em Discrete Applied Mathematics}, 211:211--216, 2016.

\bibitem{ben2008combinatorial}
Avraham Ben-Aroya and Amnon Ta-Shma.
\newblock A combinatorial construction of almost-ramanujan graphs using the
  zig-zag product.
\newblock In {\em Proceedings of the fortieth annual ACM symposium on Theory of
  computing}, pages 325--334, 2008.

\bibitem{DBLP:journals/tit/BreuckmannE21}
Nikolas~P. Breuckmann and Jens~Niklas Eberhardt.
\newblock Balanced product quantum codes.
\newblock {\em {IEEE} Trans. Inf. Theory}, 67(10):6653--6674, 2021.

\bibitem{breuckmann2021quantum}
Nikolas~P Breuckmann and Jens~Niklas Eberhardt.
\newblock Quantum low-density parity-check codes.
\newblock {\em PRX Quantum}, 2(4):040101, 2021.

\bibitem{capalbo2002randomness}
Michael Capalbo, Omer Reingold, Salil Vadhan, and Avi Wigderson.
\newblock Randomness conductors and constant-degree lossless expanders.
\newblock In {\em Proceedings of the thiry-fourth annual ACM symposium on
  Theory of computing}, pages 659--668, 2002.

\bibitem{chen2023improved}
Xue Chen, Kuan Cheng, Xin Li, and Minghui Ouyang.
\newblock Improved decoding of expander codes.
\newblock {\em IEEE Trans. Inf. Theory}, 69(6):3574--3589, 2023.

\bibitem{Chilappagari2010trapping}
Shashi~Kiran Chilappagari, Dung~Viet Nguyen, Bane Vasic, and Michael~W.
  Marcellin.
\newblock On trapping sets and guaranteed error correction capability of {LDPC}
  codes and {GLDPC} codes.
\newblock {\em IEEE Trans. Inf. Theory}, 56(4):1600--1611, 2010.

\bibitem{chung2001design}
Sae-Young Chung, G~David Forney, Thomas~J Richardson, and R{\"u}diger Urbanke.
\newblock On the design of low-density parity-check codes within 0.0045 db of
  the shannon limit.
\newblock {\em IEEE Communications letters}, 5(2):58--60, 2001.

\bibitem{cohen2023hdx}
Itay Cohen, Roy Roth, and Amnon Ta-Shma.
\newblock Hdx condensers.
\newblock In {\em 2023 IEEE 64th Annual Symposium on Foundations of Computer
  Science (FOCS)}, pages 1649--1664. IEEE, 2023.

\bibitem{dimakis2012ldpc}
Alexandros~G. Dimakis, Roxana Smarandache, and Pascal~O. Vontobel.
\newblock Ldpc codes for compressed sensing.
\newblock {\em IEEE Trans. Inf. Theory}, 58(5):3093--3114, 2012.

\bibitem{DBLP:conf/stoc/DinurHLV23}
Irit Dinur, Min{-}Hsiu Hsieh, Ting{-}Chun Lin, and Thomas Vidick.
\newblock Good quantum {LDPC} codes with linear time decoders.
\newblock In Barna Saha and Rocco~A. Servedio, editors, {\em Proceedings of the
  55th Annual {ACM} Symposium on Theory of Computing, {STOC} 2023, Orlando, FL,
  USA, June 20-23, 2023}, pages 905--918. {ACM}, 2023.

\bibitem{Dowling2018fast}
Michael Dowling and Shuhong Gao.
\newblock Fast decoding of expander codes.
\newblock {\em IEEE Trans. Inf. Theory}, 64(2):972--978, 2018.

\bibitem{DBLP:journals/corr/abs-2004-07935}
Shai Evra, Tali Kaufman, and Gilles Z{\'{e}}mor.
\newblock Decodable quantum {LDPC} codes beyond the {\(\surd\)}n distance
  barrier using high dimensional expanders.
\newblock {\em CoRR}, abs/2004.07935, 2020.

\bibitem{feldman2007LP}
Jon Feldman, Tal Malkin, Rocco~A. Servedio, Cliff Stein, and Martin~J.
  Wainwright.
\newblock Lp decoding corrects a constant fraction of errors.
\newblock {\em IEEE Trans. Inf. Theory}, 53(1):82--89, 2007.

\bibitem{gallager1962low}
Robert Gallager.
\newblock Low-density parity-check codes.
\newblock {\em IRE Transactions on information theory}, 8(1):21--28, 1962.

\bibitem{golowich2024new}
Louis Golowich.
\newblock New explicit constant-degree lossless expanders.
\newblock In {\em Proceedings of the 2024 Annual ACM-SIAM Symposium on Discrete
  Algorithms (SODA)}, pages 4963--4971. SIAM, 2024.

\bibitem{DBLP:conf/stoc/GuPT23}
Shouzhen Gu, Christopher~A. Pattison, and Eugene Tang.
\newblock An efficient decoder for a linear distance quantum {LDPC} code.
\newblock In Barna Saha and Rocco~A. Servedio, editors, {\em Proceedings of the
  55th Annual {ACM} Symposium on Theory of Computing, {STOC} 2023, Orlando, FL,
  USA, June 20-23, 2023}, pages 919--932. {ACM}, 2023.

\bibitem{guruswami2006iterative}
Venkatesan Guruswami.
\newblock Iterative decoding of low-density parity check codes (a survey).
\newblock {\em arXiv preprint cs/0610022}, 2006.

\bibitem{DBLP:conf/stoc/GuruswamiI02}
Venkatesan Guruswami and Piotr Indyk.
\newblock Near-optimal linear-time codes for unique decoding and new
  list-decodable codes over smaller alphabets.
\newblock In John~H. Reif, editor, {\em Proceedings on 34th Annual {ACM}
  Symposium on Theory of Computing, May 19-21, 2002, Montr{\'{e}}al,
  Qu{\'{e}}bec, Canada}, pages 812--821. {ACM}, 2002.

\bibitem{DBLP:conf/stoc/HastingsHO21}
Matthew~B. Hastings, Jeongwan Haah, and Ryan O'Donnell.
\newblock Fiber bundle codes: breaking the \emph{n}\({}^{\mbox{1/2}}\)
  polylog(\emph{n}) barrier for quantum {LDPC} codes.
\newblock In Samir Khuller and Virginia~Vassilevska Williams, editors, {\em
  {STOC} '21: 53rd Annual {ACM} {SIGACT} Symposium on Theory of Computing,
  Virtual Event, Italy, June 21-25, 2021}, pages 1276--1288. {ACM}, 2021.

\bibitem{hsieh2024explicit}
Jun-Ting Hsieh, Theo McKenzie, Sidhanth Mohanty, and Pedro Paredes.
\newblock Explicit two-sided unique-neighbor expanders.
\newblock In {\em Proceedings of the 56th Annual ACM Symposium on Theory of
  Computing}, pages 788--799, 2024.

\bibitem{kaufman2018construction}
Tali Kaufman and Izhar Oppenheim.
\newblock Construction of new local spectral high dimensional expanders.
\newblock In {\em Proceedings of the 50th Annual ACM SIGACT Symposium on Theory
  of Computing}, pages 773--786, 2018.

\bibitem{DBLP:conf/stoc/KaufmanT21}
Tali Kaufman and Ran~J. Tessler.
\newblock New cosystolic expanders from tensors imply explicit quantum {LDPC}
  codes with {\(\Omega\)}({\(\surd\)}\emph{n} log\({}^{\mbox{\emph{k}}}\)
  \emph{n}) distance.
\newblock In Samir Khuller and Virginia~Vassilevska Williams, editors, {\em
  {STOC} '21: 53rd Annual {ACM} {SIGACT} Symposium on Theory of Computing,
  Virtual Event, Italy, June 21-25, 2021}, pages 1317--1329. {ACM}, 2021.

\bibitem{kopparty2023simple}
Swastik Kopparty, Noga Ron-Zewi, and Shubhangi Saraf.
\newblock Simple constructions of unique neighbor expanders from
  error-correcting codes.
\newblock {\em arXiv preprint arXiv:2310.19149}, 2023.

\bibitem{DBLP:conf/focs/LeverrierTZ15}
Anthony Leverrier, Jean{-}Pierre Tillich, and Gilles Z{\'{e}}mor.
\newblock Quantum expander codes.
\newblock In Venkatesan Guruswami, editor, {\em {IEEE} 56th Annual Symposium on
  Foundations of Computer Science, {FOCS} 2015, Berkeley, CA, USA, 17-20
  October, 2015}, pages 810--824. {IEEE} Computer Society, 2015.

\bibitem{DBLP:conf/focs/LeverrierZ22}
Anthony Leverrier and Gilles Z{\'{e}}mor.
\newblock Quantum tanner codes.
\newblock In {\em 63rd {IEEE} Annual Symposium on Foundations of Computer
  Science, {FOCS} 2022, Denver, CO, USA, October 31 - November 3, 2022}, pages
  872--883. {IEEE}, 2022.

\bibitem{DBLP:journals/tit/LeverrierZ23}
Anthony Leverrier and Gilles Z{\'{e}}mor.
\newblock Decoding quantum tanner codes.
\newblock {\em {IEEE} Trans. Inf. Theory}, 69(8):5100--5115, 2023.

\bibitem{DBLP:conf/soda/LeverrierZ23}
Anthony Leverrier and Gilles Z{\'{e}}mor.
\newblock Efficient decoding up to a constant fraction of the code length for
  asymptotically good quantum codes.
\newblock In Nikhil Bansal and Viswanath Nagarajan, editors, {\em Proceedings
  of the 2023 {ACM-SIAM} Symposium on Discrete Algorithms, {SODA} 2023,
  Florence, Italy, January 22-25, 2023}, pages 1216--1244. {SIAM}, 2023.

\bibitem{lin2022good}
Ting-Chun Lin and Min-Hsiu Hsieh.
\newblock Good quantum ldpc codes with linear time decoder from lossless
  expanders.
\newblock {\em arXiv preprint arXiv:2203.03581}, 2022.

\bibitem{lubotzky1988ramanujan}
Alexander Lubotzky, Ralph Phillips, and Peter Sarnak.
\newblock Ramanujan graphs.
\newblock {\em Combinatorica}, 8(3):261--277, 1988.

\bibitem{luby2001efficient}
Michael~G Luby, Michael Mitzenmacher, Mohammad~Amin Shokrollahi, and Daniel~A
  Spielman.
\newblock Efficient erasure correcting codes.
\newblock {\em IEEE Trans. Inf. Theory}, 47(2):569--584, 2001.

\bibitem{margulis1973explicit}
GA~Margulis.
\newblock Explicit constructions of expanders (russian), problemy peredaci
  informacii 9 (1973), no. 4, 71--80.
\newblock {\em MR0484767}, 1973.

\bibitem{morgenstern1994existence}
Moshe Morgenstern.
\newblock Existence and explicit constructions of q+ 1 regular ramanujan graphs
  for every prime power q.
\newblock {\em Journal of Combinatorial Theory, Series B}, 62(1):44--62, 1994.

\bibitem{mosheiff2020listdecoding}
Jonathan Mosheiff, Nicolas Resch, Noga Ron-Zewi, Shashwat Silas, and Mary
  Wootters.
\newblock Ldpc codes achieve list decoding capacity.
\newblock In {\em 2020 IEEE 61st Annual Symposium on Foundations of Computer
  Science (FOCS)}, pages 458--469, 2020.

\bibitem{DBLP:conf/stoc/PanteleevK22}
Pavel Panteleev and Gleb Kalachev.
\newblock Asymptotically good quantum and locally testable classical {LDPC}
  codes.
\newblock In Stefano Leonardi and Anupam Gupta, editors, {\em {STOC} '22: 54th
  Annual {ACM} {SIGACT} Symposium on Theory of Computing, Rome, Italy, June 20
  - 24, 2022}, pages 375--388. {ACM}, 2022.

\bibitem{DBLP:journals/tit/PanteleevK22}
Pavel Panteleev and Gleb Kalachev.
\newblock Quantum {LDPC} codes with almost linear minimum distance.
\newblock {\em {IEEE} Trans. Inf. Theory}, 68(1):213--229, 2022.

\bibitem{reingold2000entropy}
Omer Reingold, Salil Vadhan, and Avi Wigderson.
\newblock Entropy waves, the zig-zag graph product, and new constant-degree
  expanders and extractors.
\newblock In {\em Proceedings 41st Annual Symposium on Foundations of Computer
  Science}, pages 3--13. IEEE, 2000.

\bibitem{richardson2001design}
Thomas~J Richardson, Mohammad~Amin Shokrollahi, and R{\"u}diger~L Urbanke.
\newblock Design of capacity-approaching irregular low-density parity-check
  codes.
\newblock {\em IEEE Trans. Inf. Theory}, 47(2):619--637, 2001.

\bibitem{richardson2001capacity}
Thomas~J Richardson and R{\"u}diger~L Urbanke.
\newblock The capacity of low-density parity-check codes under message-passing
  decoding.
\newblock {\em IEEE Trans. Inf. Theory}, 47(2):599--618, 2001.

\bibitem{DBLP:journals/tit/Ron-ZewiWZ21}
Noga Ron{-}Zewi, Mary Wootters, and Gilles Z{\'{e}}mor.
\newblock Linear-time erasure list-decoding of expander codes.
\newblock {\em {IEEE} Trans. Inf. Theory}, 67(9):5827--5839, 2021.

\bibitem{DBLP:journals/tit/RothS06}
Ron~M. Roth and Vitaly Skachek.
\newblock Improved nearly-mds expander codes.
\newblock {\em {IEEE} Trans. Inf. Theory}, 52(8):3650--3661, 2006.

\bibitem{sipser1996expander}
Michael Sipser and Daniel~A Spielman.
\newblock Expander codes.
\newblock {\em IEEE Trans. Inf. Theory}, 42(6):1710--1722, 1996.

\bibitem{skachek2008minimum}
Vitaly Skachek.
\newblock Minimum distance bounds for expander codes.
\newblock In {\em 2008 Information Theory and Applications Workshop}, pages
  366--370. IEEE, 2008.

\bibitem{spielman1996linear}
DA~Spielman.
\newblock Linear-time encodable and decodable error-correcting codes.
\newblock {\em IEEE Trans. Inf. Theory}, 42(6):1723--1731, 1996.

\bibitem{tanner1981recursive}
R~Tanner.
\newblock A recursive approach to low complexity codes.
\newblock {\em IEEE Trans. Inf. Theory}, 27(5):533--547, 1981.

\bibitem{viderman2013linear}
Michael Viderman.
\newblock Linear-time decoding of regular expander codes.
\newblock {\em ACM Transactions on Computation Theory (TOCT)}, 5(3):1--25,
  2013.

\bibitem{DBLP:journals/ipl/Viderman13}
Michael Viderman.
\newblock {LP} decoding of codes with expansion parameter above 2/3.
\newblock {\em Inf. Process. Lett.}, 113(7):225--228, 2013.

\bibitem{DBLP:journals/tit/Zemor01}
Gilles Z{\'{e}}mor.
\newblock On expander codes.
\newblock {\em {IEEE} Trans. Inf. Theory}, 47(2):835--837, 2001.

\bibitem{zemor2001expander}
Gill{\'e}s Z{\'e}mor.
\newblock On expander codes.
\newblock {\em IEEE Transactions on Information Theory}, 47(2):835--837, 2001.

\end{thebibliography}
}

\end{document}